\titleclass{\subsubsubsection}{straight}[\subsection]
\newcounter{subsubsubsection}[subsubsection]
\renewcommand\thesubsubsubsection{\thesubsubsection.\arabic{subsubsubsection}}
\def\toclevel@subsubsubsection{4}
\def\l@subsubsubsection{\@dottedtocline{4}{7em}{4em}}
\theoremstyle{plain}
\newtheorem{theorem}{Theorem}[section]
\newtheorem{lemma}[theorem]{Lemma}
\newtheorem{claim}[theorem]{Claim}
\newtheorem{proposition}[theorem]{Proposition}
\newtheorem{fact}[theorem]{Fact}
\newtheorem{corollary}[theorem]{Corollary}
\newtheorem*{lemma*}{Lemma}
\newtheorem*{claim*}{Claim}
\newtheorem*{proposition*}{Proposition}
\newtheorem*{fact*}{Fact}
\newtheorem*{corollary*}{Corollary}
\newtheorem*{hint*}{Hint}
\theoremstyle{definition}
\newtheorem{definition}[theorem]{Definition}
\newtheorem{remark}[theorem]{Remark}
\newtheorem{notation}[theorem]{Notation}
\newtheorem{example}[theorem]{Example}
\newtheorem{problem}[theorem]{Problem}
\newtheorem{exercise}[theorem]{Exercise}
\newtheorem*{theorem*}{Theorem}
\newtheorem*{definition*}{Definition}
\newtheorem*{remark*}{Remark}
\newtheorem*{notation*}{Notation}
\newtheorem*{example*}{Example}
\newtheorem*{examples*}{Examples}
\newtheorem*{question*}{Question}
\newtheorem*{problem*}{Problem}
\newtheorem*{solution*}{Solution}
\newtheorem*{intuition*}{Intuition}
\newtheorem*{idea*}{Idea}
\newtheorem*{conjecture*}{Conjecture}
\newcommand{\btheorem}{\begin{theorem}}
\newcommand{\etheorem}{\end{theorem}}
\newcommand{\bproblem}{\begin{problem}}
\newcommand{\eproblem}{\end{problem}}
\newcommand{\bfact}{\begin{fact}}
\newcommand{\efact}{\end{fact}}
\newcommand{\bexercise}{\begin{exercise}}
\newcommand{\eexercise}{\end{exercise}}
\newcommand{\bclaim}{\begin{claim}}
\newcommand{\eclaim}{\end{claim}}
\newcommand{\bcorollary}{\begin{corollary}}
\newcommand{\ecorollary}{\end{corollary}}
\newcommand{\bnotation}{\begin{notation}}
\newcommand{\enotation}{\end{notation}}
\newcommand{\bremark}{\begin{remark}}
\newcommand{\eremark}{\end{remark}}
\newcommand{\blemma}{\begin{lemma}}
\newcommand{\elemma}{\end{lemma}}
\newcommand{\bexample}{\begin{example}}
\newcommand{\eexample}{\end{example}}
\newcommand{\bdefinition}{\begin{definition}}
\newcommand{\edefinition}{\end{definition}}
\newcommand{\bproof}{\begin{proof}}
\newcommand{\eproof}{\end{proof}}
\newcommand\red[1]{{\color{red} #1}}
\newcommand\whitespace{{\color{white}.}}
\newcommand{\bitem}{\whitespace\begin{itemize}} % Hack to fix alignment issues
\newcommand{\eitem}{\end{itemize}}
\newcommand{\be}{\whitespace\begin{enumerate}} % Hack to fix alignment issues
\newcommand{\ee}{\end{enumerate}}
\newcommand{\bi}{\whitespace\begin{itemize}} % Hack to fix alignment issues
\newcommand{\ei}{\end{itemize}}
\newcommand{\textand}{\text{and }}
\def\bal#1\eal{\begin{align*}#1\end{align*}}
\def\beq#1\eeq{\begin{equation}#1\end{equation}}
\def\bitem#1\eitem{\begin{itemize*}#1\end{itemize*}}
\def\bcomment#1\ecomment{\begin{comment}#1\end{comment}}
\newcommand{\tikzcenter}[1]{\begin{center}\begin{tikzpicture}#1\end{tikzpicture}\end{center}}
\newcommand{\ignore}[1]{}
\newcommand\tsup{\textsuperscript}
\newcommand{\lp}{\left(}
\newcommand{\rp}{\right)}
\newcommand{\pa}[1]{\lp#1\rp}
\newcommand{\lb}{\left[}
\newcommand{\rb}{\right]}
\newcommand{\ba}[1]{\lb#1\rb}
\newcommand{\lbr}{\left\{}
\newcommand{\rbr}{\right\}}
\newcommand{\br}[1]{\lbr#1\rbr}
\newcommand{\lf}{\lfloor}
\newcommand{\rf}{\rfloor}
\newcommand{\floor}[1]{\lf {#1} \rf}
\newcommand{\lc} {\lceil}
\newcommand{\rc} {\rceil}
\newcommand{\ceil}[1]{\lc {#1}\rc}
\newcommand{\ab}[1]{\langle#1\rangle}
\providecommand{\abs}[1]{\left\vert#1\right\vert}
\providecommand{\abs}[1]{\left\vert#1\right\vert}
\newcommand\ii{\item}
\newcommand\sar{^*}
\newcommand\nonempty{\neq\emptyset}
\newcommand{\defeq}{\vcentcolon=}
\newcommand{\id}[1]{^{(#1)}}
\newcommand\eps{\epsilon}
\newcommand\half {\frac{1}{2}}
\newcommand\fourth {\frac{1}{4}}
\renewcommand{\Pr}{\mathop{\bf Pr\/}}
\newcommand{\E}{\mathop{\mathbb{E}\/}}
\newcommand\Y{\rY}
\newcommand\NN{\mathbb{N}}
\newcommand\calD{\mathcal{D}}
\newcommand\calI{\mathcal{I}}
\newcommand\calL{\mathcal{L}}
\newcommand\calM{\mathcal{M}}
\newcommand\calN{\mathcal{N}}
\newcommand\calR{\mathcal{R}}
\newcommand\calX{\mathcal{X}}
\newcommand\calY{\mathcal{Y}}
\newcommand\one{\mathds{1}}
\DeclareMathOperator*\poly{poly}
\DeclareMathOperator*\suchthat{s.t.}
\DeclareMathOperator*\Dec{Dec}
\DeclareMathOperator*\OnAdv{OnAdv}
\DeclareMathOperator*\LCS{LCS}
\DeclareMathOperator*\Geometric{Geometric}
\let\next\relax
\let\prev\relax
\DeclareMathOperator*\next{next}
\DeclareMathOperator*\prev{prev}
\newcommand\outdeg{\textnormal{outdeg}}
\newcommand\indeg{\textnormal{indeg}}
\newcommand{\enc}{\mathsf{Enc}}
\newcommand{\dec}{\mathsf{Dec}}
\newcommand{\ptabove}[2]{node[circle,fill,inner sep=1pt,label=above:{#2}](#1){}}
\newcommand{\ptbelow}[2]{node[circle,fill,inner sep=1pt,label=below:{#2}](#1){}}
\begin{document}
\title{Coding against deletions in oblivious and online models\thanks{Research supported in part by NSF grant CCF-1422045.}}
\author{Venkatesan Guruswami\thanks{Email: {\tt venkatg@cs.cmu.edu}} \and Ray Li\thanks{Email: {\tt ryli@andrew.cmu.edu}}}
\date{Carnegie Mellon University \\ Pittsburgh, PA 15213}

\maketitle
\thispagestyle{empty}

\begin{abstract}
We consider binary error correcting codes when errors are deletions.
A basic challenge concerning deletion codes is determining $p_0^{(adv)}$, the \emph{zero-rate threshold of adversarial deletions}, defined to be the supremum of all $p$ for which there exists a code family with rate bounded away from 0 capable of correcting a fraction $p$ of adversarial deletions. 
A recent construction of deletion-correcting codes~\cite{BukhGH17} shows that $p_0^{(adv)} \ge \sqrt{2}-1$, and the trivial upper bound, $p_0^{(adv)}\le\frac{1}{2}$, is the best known. Perhaps surprisingly, we do not know whether or not $p_0^{(adv)} = 1/2$.

  \smallskip
In this work, to gain further insight into deletion codes, we explore
two related error models: oblivious deletions and online deletions,
which are in between random and adversarial deletions in power. In the
oblivious model, the channel can inflict an \emph{arbitrary} pattern
of $pn$ deletions, picked without knowledge of the
codeword. We prove the existence of binary codes of positive rate that
can correct any fraction $p < 1$ of oblivious deletions, establishing
that the associated zero-rate threshold $p_0^{(obliv)}$ equals $1$.
  
\smallskip
For online deletions, where the channel decides whether to delete bit
$x_i$ based only on knowledge of bits $x_1x_2\dots x_i$, define the \emph{deterministic zero-rate threshold for online deletions} $p_0^{(on,d)}$ to be the supremum of $p$ for which there exist deterministic codes against an online channel causing $pn$ deletions with low \emph{average} probability of error. 
That is, the probability that a randomly chosen codeword is decoded incorrectly is small.  
We prove $p_0^{(adv)}=\frac{1}{2}$ if and only if $p_0^{(on,d)}=\frac{1}{2}$.
\end{abstract}

\newpage
\tableofcontents
\thispagestyle{empty}

%%%%%%%%%%%%%%%%%%%%%%%%%%%%%%%%%%%%%%%%%%%%%%%%%%%%%%%%%%%%%%%%%%%%%%%%%%%%%%%%%%%%%%%%%%%%%%%%%%%%%%%%%%%%%%%%%%%%%%%%%%%%%%%%%%%%%%%%%%%%%%%%%%%%%%%%%%%%%%%%
%%%%%%%%%%%%%%%%%%%%%%%%%%%%%%%%%%%%%%%%%%%%%%%%%%%%%%%%%%%%%%%%%%%%%%%%%%%%%%%%%%%%%%%%%%%%%%%%%%%%%%%%%%%%%%%%%%%%%%%%%%%%%%%%%%%%%%%%%%%%%%%%%%%%%%%%%%%%%%%%
\newpage
\setcounter{page}{1}

\section{Introduction}

Tremendous progress has been made over the decades, including in
recent years, on the problem of designing error-correcting codes that
can recover from bit (or symbol) errors and erasures, for both
probabilistic and adversarial noise models. The problem of correcting
closely related insertion and deletion errors has also been studied
since the work of Levenshtein in the 60s~\cite{Levenshtein66}, but
has not seen commensurate progress. The difficulty is that in the
presence of deletions, the location of the missing symbol is \emph{not} known to the decoder.
As a result, the sender and receiver lose valuable
synchronization information on the location of various symbols, which renders many of the standard techniques for constructing either
inapplicable or at least very tricky and messy to apply. 
To quote from Mitzenmacher's survey~\cite{Mitzenmacher09}: ``[C]hannels with synchronization errors, including both insertions and deletions as well as more general timing errors, are simply not adequately understood by current theory. 
Given the near-complete knowledge we have [for] channels with erasures and errors . . . our lack of understanding about channels with synchronization errors is truly remarkable."

In particular, for both the random and adversarial noise models, there
are big gaps in our knowledge of the power and limitations of codes
against insertions and deletions.  For concreteness, let us focus our
discussion on the model when there are only deletions, which is
already complicated enough, and captures most of the chief
difficulties associated with coding against synchronization
errors. Note that over large alphabets, in particular those which can
grow with the code length, one can add the coordinate position as a
header to the codeword symbol, and reduce the deletion model to the
simpler erasure model, at the expense of a negligible decrease in rate
(due to the addition of the header). However, for fixed alphabets,
deletions seem much harder to cope with than erasures. We focus on the
most interesting case of binary codes in this work.

In order to set the context and motivation for this work, let us
review the situation for adversarial and random deletions in turn,
before turning to the contributions of this paper.  In each setting we
want a code $C \subseteq \{0,1\}^n$ consisting of binary codewords of
length $n$ that has good rate (where rate $\calR$ means $|C| = 2^{\calR n}$,
and we would like $\calR$ to be as large as possible, and certainly
bounded away from $0$ as $n$ grows).  We also have a deletion fraction
$p\in(0,1)$, roughly denoting, as a fraction of $n$, the number of
bits the adversary can delete.  In adversarial deletions, the
adversary is allowed to delete up to $pn$ bits with full knowledge of the code and the transmitted codeword.  A code $C$ is decodable against $pn$
adversarial deletions if and only if, for any two distinct codewords $c$ and $c'$
in $C$, it is impossible to apply $pn$ (possibly different) deletions
to $c$ and $c'$ and obtain the same result. 
This is easily seen to be equivalent to the condition that the longest common subsequence between any two codewords of $C$ has less than $(1-p)n$ bits. 
By a lemma originally due to Levenshtein~\cite{Levenshtein66}, this condition also ensures that $C$ is capable of correcting any combination of adversarial insertions and deletions totaling $pn$ in number. 
We state this lemma formally as it will be used later.
  \begin{lemma}
    Let $\LCS(C)$ be the maximum length of the longest common subsequence between two distinct codewords in $C$.
    The following are equivalent:
      (1) $\LCS(C)\le n-t-1$,
			(2) $C$ decodable under up to $t$ deletions,
      (3) $C$ decodable under up to $t$ insertions,
      and (4) $C$ decodable under up to $t$ insertions and deletions.
  \end{lemma}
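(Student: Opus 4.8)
The plan is to prove that each of (1)--(4) is equivalent to the single combinatorial condition
\[
(\star)\qquad \LCS(c,c')\le n-t-1\ \text{ for every pair of distinct } c,c'\in C,
\]
where $\LCS(c,c')$ denotes the length of a longest common subsequence of the length-$n$ strings $c,c'$ (so (1) is literally $(\star)$). The one external ingredient I will set up first is the standard LCS/supersequence duality: for strings $x,y$, a shortest common supersequence has length $|x|+|y|-\LCS(x,y)$, and consequently the minimum number of single-symbol insertions and deletions transforming $x$ into $y$ — call it the \emph{indel distance} $d(x,y)$ — equals $|x|+|y|-2\LCS(x,y)$, which in particular obeys the triangle inequality. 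I would prove the duality by an explicit alignment: from a longest common subsequence $s$ of $x,y$, build a common supersequence by threading the symbols of $x$ and of $y$ not matched by $s$ around the matched symbols, giving length $|x|+|y|-\LCS(x,y)$; conversely, any common supersequence $w$ of $x,y$ yields, by fixing embeddings of $x$ and of $y$ into $w$, a common subsequence of size at least $|x|+|y|-|w|$ (the positions of $w$ lying in both images), which proves optimality.

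For $(\star)\Leftrightarrow(2)$: a string $w$ obtainable from a length-$n$ codeword $c$ by at most $t$ deletions is precisely a subsequence of $c$ of length $\ge n-t$. If distinct $c,c'$ are confused under up to $t$ deletions, the common output $w$ has one length $n-t_1=n-t_2$ (so the two deletion counts agree and are $\le t$), hence $w$ is a common subsequence of $c,c'$ of length $\ge n-t$; conversely any common subsequence of length $\ge n-t$ is obtainable from each of $c,c'$ by $\le t$ deletions. Thus (2) fails iff some distinct pair has $\LCS\ge n-t$, i.e.\ iff $(\star)$ fails.

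$(\star)\Leftrightarrow(3)$ is the mirror statement via supersequences: a string obtainable from $c$ by at most $t$ insertions is a supersequence of $c$ of length $\le n+t$, distinct $c,c'$ are confused under up to $t$ insertions iff they have a common supersequence of length $\le n+t$, and by the duality this holds iff $\LCS(c,c')\ge 2n-(n+t)=n-t$. Finally $(\star)\Leftrightarrow(4)$: condition (4) implies (2) outright (a pattern of $\le t$ deletions is a pattern of $\le t$ indels), hence implies $(\star)$; for the converse, if some $w$ is reachable from both $c$ and $c'$ by $\le t$ indels then $d(c,c')\le d(c,w)+d(w,c')\le 2t$, forcing $\LCS(c,c')\ge n-t$ and contradicting $(\star)$, while conversely a longest common subsequence of any bad pair is reachable from each endpoint by $\le t$ deletions and so exhibits a failure of (4).

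The only part requiring genuine care is the LCS/shortest-common-supersequence duality together with the resulting indel-distance formula and its triangle inequality; everything after that is bookkeeping. The one quantifier subtlety worth flagging is that when two edited copies are required to coincide, equality of the output lengths forces the two deletion (respectively insertion) counts to be equal — which is exactly what collapses the ``up to $t$'' case analysis to a statement about a single common subsequence or supersequence.
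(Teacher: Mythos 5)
The paper does not actually prove this lemma; it is stated without proof and attributed to Levenshtein~\cite{Levenshtein66}, so there is no in-paper argument to compare against. Your proof is correct and is essentially the standard argument: reducing everything to the pairwise condition $\LCS(c,c')\le n-t-1$, using the LCS/shortest-common-supersequence duality for insertions, and the indel-distance formula $d(x,y)=|x|+|y|-2\LCS(x,y)$ together with the triangle inequality for the mixed case. The one step you state as a consequence of the duality --- that the minimum number of indels is at least $|x|+|y|-2\LCS(x,y)$ --- deserves its own one-line justification (the original symbols of $x$ never deleted during the edit sequence survive in order into $y$, giving a common subsequence of length at least $|x|$ minus the number of deletions), but this is routine and does not affect correctness.
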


When $p \ge \half$, the adversary can delete $n/2$ bits that includes
either all the $0$'s or all the $1$'s in the codeword, resulting in
just two possible received sequences. Therefore, it is impossible to
correct an adversarial deletion fraction of $\half$ with rate bounded
away from $0$. 
Rather remarkably, we do not know if this trivial limit can be approached.
Denote the \textit{zero-rate threshold}, $p_0\id{adv}$, to be the supremum of all $p$ for which there exist families of codes $C \subseteq \{0,1\}^n$ of size $2^{\Omega_p(n)}$ decodable against $pn$ adversarial deletions. 
In other words, $p_0\id{adv}$ is the threshold at which the capacity of adversarial deletions is zero. 
An outstanding question in the area is whether $p_0\id{adv} = \half$.
Or, is $p_0\id{adv}<\half$, so that, for $p_0\id{adv}<p<\half$, any code $C \subseteq\{0,1\}^n$ that is decodable against $pn$ deletions must have size at most $2^{o(n)}$? 
This was explicitly raised as a key open problem in \cite{Levenshtein02}. 
Upper bounds on the asymptotic rate function in terms of the deletion fraction were obtained in \cite{KulkarniK13}, improving in some respects Levenshtein's bounds~\cite{Levenshtein02}. 
New upper bounds on code size for a fixed number of deletions that improve over \cite{Levenshtein66} were obtained in \cite{CullinaK14}.

Turning to constructions of binary deletion codes, it was shown in~\cite{KashMTU11} that randomly chosen codes of small enough
rate $\calR > 0$ can correctly decode against $pn$ adversarial deletions
when $p\le 0.17$. Even non-constructively, this remained the best
achievability result (in terms of correctable deletion fraction) until
recently. Bukh, Guruswami, and H{\aa}stad \cite{BukhG16, BukhGH17} improved this
and showed that there are binary codes of rate bounded away from $0$
decodable against $pn$ adversarial deletions for any $p<\sqrt2-1$;
furthermore, they also gave an efficient construction of such codes
along with an efficient deletion correcting algorithm. Closing the gap
between $\sqrt2-1$ and $\half$ for the maximum fraction of correctable
deletions, and determining whether or not $p_0\id{adv}=\half$,  remain fascinating open problems.

The other heavily studied model of deletion errors has been \emph{random} deletions. Here the basic noise model is the \textit{binary deletion channel}, which deletes each bit independently with probability $p$.
For random deletions, the zero-rate threshold is known to be 1.
In fact, Drinea and Mitzenmacher \cite{DrineaM06, DrineaM07} proved that the capacity of random deletion codes is at least $(1-p)/9$, which is within a constant factor of the upper bound of $1-p$, the capacity of the simpler binary erasure channel (where the location of the missing bits \emph{are} known at the decoder).
Recently, explicit codes of rate $c(1-p)$ for universal constant $c>0$ efficiently coding against random deletions were constructed in \cite{GuruswamiL17}.
Rahmati and Duman \cite{RahmatiD13} prove that the capacity is at most $0.4143(1-p)$ for $p\ge 0.65$.
For large deletion probabilities, this is, to our knowledge, the best known upper bound. 
For small deletion probabilities, Diggavi and Grossglauser \cite{DiggaviG01} establish that the capacity of the deletion channel for $p\le \half$ is at least $1-h(p)$.  
Kalai et al. \cite{KalaiMS10} proved this lower bound is tight as $p\to0$, and Kanoria and Montanari \cite{KanoriaM13} determined a series expansion that can be used to determine the capacity exactly for small deletion probabilities. 
\vspace{-1ex}
\subsection{Our contributions}
\vspace{-1ex}
With the above context, we now turn to our results in this work. 
Our first result concerns a natural model that bridges between the adversarial and random deletion models, namely \emph{oblivious deletions}. 
Here we assume that an arbitrary subset of $pn$ locations of the codeword can be deleted, but these positions must be picked without knowledge of the codeword. 
The oblivious model is well-motivated in settings where the noise may be mercurial and caused by hard to model physical phenomena, but not by an adversary.

If the code is deterministic, tackling oblivious deletions is clearly
equivalent to recovering from worst-case deletions. We therefore allow the
encoder to be randomized, and require that for every message $m$ and
every deletion pattern $\tau$, most encodings of $m$ can be decoded
from the deletion pattern $\tau$. The randomness used at the encoding
is private to the encoder and is not needed at the decoder, which we
require to be deterministic.  Note that the oblivious model
generalizes random deletions, as deleting each bit independently with
probability $p$ is oblivious to the actual codeword, and with high
probability one has $\approx pn$ deletions.  Of course, any code
which is decodable against $pn$ adversarial deletions is decodable
also against $pn$ oblivious deletions, even without any randomization
in the encoding. Perhaps surprisingly, we prove that in the oblivious
model, the limit of $p \le \half$ does not apply, and in fact one can
correct a deletion fraction $p$ approaching $1$. This generalizes the
result that one can correct a fraction $p \to 1$ of random deletions.

\begin{theorem}[Oblivious]
\label{thm:main-intro}
  \label{thm:oblivious}
  For every $p<1$ there exists a family of constant rate \emph{stochastic} codes that corrects against $pn$ oblivious deletions with high probability. 
  More precisely, for every $p < 1$, there exists $\calR > 0$ and a code family with a randomized encoder $\enc: \{0,1\}^{\calR n} \to \{0,1\}^n$ and (deterministic) decoder $\dec: \{0,1\}^{(1-p)n} \to \{0,1\}^{\calR n}\cup\{\perp\}$ such that for all deletion patterns $\tau$ with $pn$ deletions and all messages $m \in \{0,1\}^{\calR n}$,
  \[ \Pr [ \dec(\tau(\enc(m))) \neq  m ] \le o_n(1) \ , \]
where the probability is over the randomness of the encoder (which is private to the encoder and not known to the decoder).\footnote{The notation 
$\tau(y)$ for $y \in \{0,1\}^n$ denotes the bit string obtained by applying deletion pattern $\tau$ to $y$.}
\end{theorem}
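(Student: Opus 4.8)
\emph{Plan of proof.} Fix $p<1$ and write $p=1-\eps$, so the channel leaves $\eps n$ of the $n$ transmitted bits. I would use a ``cloud'' encoder: assign to each message $m\in\{0,1\}^{\calR n}$ a set $C_m\subseteq\{0,1\}^n$ of $s=\poly(n)$ candidate codewords, each drawn independently from a distribution $\calD$ on $\{0,1\}^n$ to be designed, and let $\enc(m)$ output a uniformly random element of $C_m$. The decoder is the obvious one: on input $y\in\{0,1\}^{(1-p)n}$, output the unique $m$ such that $y$ is a subsequence of some codeword in $C_m$ (and $\perp$ if there is no such $m$, or more than one). Since $\tau(\enc(m))$ is always a subsequence of $\enc(m)\in C_m$, the true message is always a candidate, so \emph{completeness} is free; the entire content of the theorem is \emph{soundness}: for every $\tau$ deleting $pn$ bits and every $m$, a $1-o_n(1)$ fraction of $c\in C_m$ have the property that $\tau(c)$ is a subsequence of \emph{no} codeword outside $C_m$.

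To obtain soundness for all $(\tau,m)$ simultaneously I would apply the probabilistic method to the random choice of the clouds. Fix $\tau$ and $m$. A fixed $c$ fails iff $\tau(c)\preceq c'$ for some codeword $c'$ in a cloud $\neq m$; union-bounding over the $<2^{\calR n}s$ such codewords and averaging over $c\sim\calD$, the expected failure fraction in $C_m$ is at most $2^{\calR n}\cdot\poly(n)\cdot q(\tau)$, where (with $c,c'$ independent)
\[
 q(\tau)\ :=\ \Pr_{c,c'\sim\calD}\big[\,\tau(c)\ \text{is a subsequence of}\ c'\,\big].
\]
By Markov's inequality and a union bound over the $2^{\calR n}$ messages and the at most $\binom{n}{\eps n}\le 2^{h(\eps)n}$ deletion patterns with $pn$ deletions, a random collection of clouds is good with positive probability provided
\[
 q(\tau)\ \le\ 2^{-\gamma n}\quad\text{for every such }\tau,\qquad\text{with}\qquad \gamma\ >\ 2\calR+h(\eps)\,.
\]
Thus everything reduces to exhibiting a codeword distribution $\calD$ for which this ``supersequence estimate'' holds; since $h(\eps)\to 0$ as $\eps\to 0$, any fixed positive gap $\gamma$ then yields a positive rate $\calR=\calR(p)$.

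The distribution $\calD$ cannot be uniform: a uniformly random length-$n$ string contains essentially every string of length $\eps n$ as a subsequence once $\eps$ is small, so $q(\tau)$ would be close to $1$. Nor does the naive fix of ``diluting'' a payload --- placing it, in marked blocks, at random positions with random junk in between --- work by itself: the oblivious channel knows the construction and can commit to a $\tau$ that thins \emph{every} block evenly down to a $\Theta(\eps L)$-bit subsequence (with $L$ the block length), which for small $\eps$ is below the deletion fraction any fixed inner deletion code can tolerate, so no block is recoverable on its own. The codewords must instead be \emph{structured} in a Bukh--Guruswami--H\aa stad ``chain''/block fashion so that, for \emph{every} window of total length $\eps n$ the channel may leave behind, the surviving string is simultaneously (i) subsequence-atypical, i.e.\ unlikely to sit inside an independently drawn codeword, and (ii) still enough to determine $m$ --- for instance the block structure carries the symbols of a high-distance outer code, each tagged with its index so it can be placed despite the loss of positional information, and the encoder's private randomness spreads the blocks with random-length buffers precisely so the oblivious $\tau$ cannot align with the block boundaries.

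I expect the construction and analysis of such a $\calD$ --- equivalently, the supersequence estimate $q(\tau)\le 2^{-\gamma n}$ uniformly over all $pn$-deletion patterns --- to be the main obstacle, and the difficulty is quantitative: because $\eps$ is small, $h(\eps)=\eps\log(1/\eps)+O(\eps)$ is not negligible against a crude $2^{-\Omega(\eps n)}$ bound, so the block structure has to force the surviving $\eps n$ bits to carry a superlinear-in-$\eps n$ amount of ``identity'' about the codeword for \emph{every} admissible surviving window. Once such a $\calD$ is in hand, the remaining pieces --- parsing the surviving markers, reading off the tagged symbols (treating badly damaged blocks as erasures), and recovering $m$ from the outer code's distance --- are comparatively routine, though one still has to bound the number of misleadingly-garbled surviving blocks below the outer code's correction radius and verify that the randomized buffering genuinely de-correlates a worst-case $\tau$ from the block boundaries.
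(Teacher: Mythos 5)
Your outline correctly mirrors the paper's high-level architecture: a cloud/stochastic encoder with unique decoding by the supersequence relation, reduction of the whole theorem to a single ``soundness'' estimate for a pair of independent codewords, and the probabilistic method over the choice of clouds (the paper does exactly this reduction from average-case deterministic codes in Appendix~\ref{app:1}). But the proposal stops at the point where the actual work begins, and moreover the reduction as you have set it up is quantitatively too demanding to be closed by the construction you point to. You require $q(\tau)\le 2^{-\gamma n}$ uniformly over all $\binom{n}{pn}$ deletion patterns with $\gamma > 2\calR + h(\eps)$, where $h(\eps)$ is a fixed positive constant. The Bukh--Guruswami--H{\aa}stad-style concatenated codewords that the paper uses (inner words $g_i$ with run counts separated by factors of $R$) yield, after the full matching analysis, a confusability exponent of only $\beta=\log K/(16R)$ per \emph{outer} symbol --- an exponent that is minuscule compared with $h(\eps)$, since $K$ is doubly exponential in $1/(1-p)$ and $R=4K^4$. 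So a naive union bound over all deletion patterns of the full block fails, and you correctly sense this (``$h(\eps)$ is not negligible against a crude $2^{-\Omega(\eps n)}$ bound'') but offer no way around it. The paper's essential idea, absent from your plan, is that for its structured codewords a deletion pattern only matters through its \emph{signature} (Lemma~\ref{lem:del-pattern-3} and Definition~\ref{def:signature}): which outer positions retain an admissible inner deletion pattern, and which $\lambda-1$ inner codewords are corrupted there. This collapses the union bound from $\binom{N}{pN}$ patterns to $2^{O(n)}$ signatures with $n=N/L$, which is what makes the tiny exponent $\beta$ sufficient.

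Two further gaps: first, the heart of the matter --- constructing $\calD$ and proving the supersequence estimate --- is explicitly deferred (``I expect the construction and analysis of such a $\calD$\dots to be the main obstacle''), whereas in the paper this occupies all of \S\ref{sec:5-3} (the run-count/corruption analysis, the matching relation, and Proposition~\ref{lem:match-3} via a submartingale argument). Second, even granting the signature reduction, plain Markov over the random code does not give the $2^{-\omega(n)}$ per-pattern failure probability needed to survive the union bound; the paper needs a $(\log n)$-th moment argument on the induced subgraph (Lemmas~\ref{lem:sparse-undirected-whp} and~\ref{lem:sparse-directed-whp}), and that argument only works after first pruning the ``easily disguised'' candidate codewords such as $\psi(11\cdots1)=0101\cdots01$, whose outdegree in the confusability graph is enormous for every $\tau$. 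None of these three ingredients --- the signature compression, the confusability estimate itself, or the pruning-plus-moment argument --- appears in your proposal, so it remains a plausible reduction rather than a proof.
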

The above result is implied by \emph{deterministic} codes $C$ decodable from arbitrary $pn$ deletions under \emph{average-error} criterion; i.e., there is a decoding function $\dec: \{0,1\}^{(1-p)n} \to C$ such that for every deletion pattern $\tau$ with $pn$ deletions, for \emph{most} codewords $c\in C$, $\dec(\tau(c)) = c$. 
We note that Theorem~\ref{thm:oblivious} is an \emph{existential} result, and the codes it guarantees are not explicitly specified. 
The decoder looks for a codeword which contains the received bit string as a subsequence, and outputting it if there is a unique such codeword.

We also prove an impossibility result for online deletions.
An online channel OnAdv consists of $n$ functions $\{\OnAdv_i:i\in[n]\}$ such that $\OnAdv_i:\calX^i\times\calY^{i-1}\to \calY$, where $X=\{0,1\}$ and $Y=\{\ab{0},\ab{1},\ab{}\}$ is a set of strings and $\ab{}$ denotes the empty string, satisfies $\OnAdv(x_1,\dots,x_i, y_1,\dots,y_{i-1}) \in\{\ab{x_i},\ab{}\}$.
The resulting string received by the output is the concatenation of the outputs of $\OnAdv_1,\dots,\OnAdv_n$.
More concisely, an online channel chooses whether to delete the $i$th bit $c_i$ of a codeword $c$ based on only the first $i$ bits $c_1,\dots,c_i$.
Note that all online adversaries are valid adversaries in the omniscient adversary case, so all codes correcting against $pn$ adversarial deletions can also correct against $pn$ online deletions.
Notice that an online channel can delete all the 0s or 1s in the string, so as in the case of adversarial deletions, it is impossible to communicate with positive rate when the deletion fraction is at least $\half$.
However, as the omniscient adversary has more information than the online channel, it could be the case that one can code against a fraction of deletions approaching $\half$ for the online channel, whereas this is not possible against an omniscient adversary.
Our next result rules out this possibility for deterministic codes against the online channel --- if the zero-rate threshold $p_0\id{adv}$ for adversarial deletions is bounded away from $\half$, then so is the deterministic zero-rate threshold for online deletions. 
\begin{theorem}[Online]
\label{thm:online-intro}
  For any $p>1/(3-2p_0\id{adv})$ and $\calR>0$, there exists a deterministic online channel $\OnAdv$ such that, for sufficiently large $n$, for any code $C$ of block length $n$ and rate $\calR > 0$, and any decoder $\Dec: \{0,1\}^{(1-p)n} \to C$, we have $\Pr_{c\in C}[\Dec(\OnAdv(c))\neq c] \ge \eta$ for some absolute constant $\eta > 0$.
  \label{thm:online}
\end{theorem}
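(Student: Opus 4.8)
The plan is to prove the contrapositive at the level of a single code: given $p>\tfrac1{3-2p_0\id{adv}}$ and a length-$n$ rate-$\calR$ code $C$, I will exhibit a deterministic online channel $\OnAdv$ whose image $\OnAdv(C)$ has size at most $(1-\eta)|C|$. Since an optimal decoder can be correct on at most one codeword of each fiber of $\OnAdv|_C$, this immediately forces $\Pr_{c\in C}[\Dec(\OnAdv(c))\neq c]\ge \eta/2$. So the whole problem reduces to engineering an online channel that is "many-to-one in bulk." Because $p>\tfrac1{3-2p_0\id{adv}}$, I may fix $q$ with $p_0\id{adv}<q<\tfrac{3p-1}{2p}$; note this is exactly the range of $q$ for which $p>\tfrac1{3-2q}$ still holds. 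By definition of $p_0\id{adv}$ (and the $\LCS$ characterization of Lemma~1.1), no family of positive-rate codes corrects a $q$-fraction of adversarial deletions, and a short argument upgrades this to: after discarding $o(\cdot)$ codewords, every positive-rate code of length $M$ has two codewords with $\LCS\ge(1-q)M$ — indeed a positive fraction of its codewords lie in such confusable pairs.

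The channel. Partition $[n]$ into a short \emph{localization} prefix of length $an$ (with $a$ a small constant, $a\ll\calR$), a \emph{funnel} window of length $bn$, and a \emph{tail} of length $cn$, where $a+b+c=1$. On a codeword $c$, $\OnAdv$ (i) reads the prefix without deleting, learning $\rho=c|_{\mathrm{pref}}$ and hence the class $C_\rho=\{c'\in C:c'|_{\mathrm{pref}}=\rho\}$; (ii) in the funnel window it greedily extracts a target string $\sigma_\rho$ (a function of $C$ and $\rho$ only), keeping a bit iff it matches the next unmatched symbol of $\sigma_\rho$ and deleting the rest, then deleting the remaining funnel bits once $\sigma_\rho$ is exhausted; (iii) deletes the tail entirely, with a global cap so that exactly $pn$ bits are deleted. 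When $\sigma_\rho$ is a subsequence of $c|_{\mathrm{funnel}}$ this deletes $qbn+cn$ bits and produces the output $\rho\cdot\sigma_\rho$ of length $an+(1-q)bn$. Forcing the deletion count to equal $pn$ and the output length to equal $(1-p)n$ pins down $b$ and $c$ in terms of $a,q,p$; the additional requirement that $C_\rho|_{\mathrm{funnel}}$ be a positive-rate code of length $bn$ (so that step (ii) can succeed for many codewords at deletion fraction $q$), combined with $a\ll\calR$, is what forces the relation $p>\tfrac1{3-2q}$, and letting $q\downarrow p_0\id{adv}$ recovers the stated threshold. Since $a\ll\calR$, only $2^{an}=o(|C|)$ prefixes $\rho$ occur, for $1-o(1)$ of the codewords $|C_\rho|$ is exponentially large, and $C_\rho|_{\mathrm{funnel}}$ is an injective image of $C_\rho$ when it is not degenerate. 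I would verify separately the easy degenerate case: if $|C_\rho|\le 2^{cn}$ then all of $C_\rho$ agree on the funnel window (they can differ only on the tail), so step (iii) already collapses $C_\rho$ entirely — so the only work is in classes whose funnel-restriction is genuinely positive rate.

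The hard part. What remains — and what I expect to be the main obstacle — is to show that for each such $\rho$ one can choose a single target $\sigma_\rho$ of the prescribed length that is a subsequence of a \emph{constant fraction} of the funnel-parts $\{c|_{\mathrm{funnel}}:c\in C_\rho\}$. Granting this, the "good" codewords (a positive fraction of all of $C$) collapse onto the $2^{an}=o(|C|)$ outputs $\rho\cdot\sigma_\rho$, giving $|\OnAdv(C)|\le(1-\Omega(1))|C|$ and finishing the proof. The bare fact that a positive fraction of a positive-rate code lies in confusable pairs only gives, for each confusable pair, its \emph{own} witnessing common subsequence of length $(1-q)bn$; promoting "many disjoint such witnesses" to "one witness shared by many codewords" is the crux, and it is what is responsible for the loss from $p_0\id{adv}$ to $\tfrac1{3-2p_0\id{adv}}$. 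The natural attack is to iterate pair-extraction (peel off a confusable pair, record a length-$(1-q)bn$ witness, repeat on the residual positive-rate subcode) and then bucket the witnesses: since each codeword has only $\binom{bn}{qbn}\le 2^{h(q)bn}$ subsequences of that length, a pigeonhole over candidate witnesses produces a heavily-shared one. The delicate point is that a naive pigeonhole only yields a shared witness of vanishing relative multiplicity; making the multiplicity a constant fraction requires carefully balancing $h(q)$ against the window length $bn$ and the rate of $C_\rho|_{\mathrm{funnel}}$, and it is precisely this accounting — and the possibility of re-running the peeling/funneling across two windows rather than one — that should produce the denominator $3-2p_0\id{adv}$ rather than the naive $p_0\id{adv}$. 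Finally, assembling the pieces: $\OnAdv$ as described is a legitimate online channel (it only ever reads a prefix and then acts on the remainder while tracking its own deletion count), it collapses a constant fraction of $C$, so its average decoding error against any decoder is at least a universal constant $\eta>0$, which is the theorem.
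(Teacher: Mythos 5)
Your proposal takes a genuinely different route from the paper, and it contains a gap that you yourself identify but do not close, and which I do not believe can be closed as stated. Your scheme fixes a short prefix of length $an$ \emph{in advance}, so after the prefix the channel knows only the class $C_\rho$, not the codeword; to collapse a constant fraction of $C_\rho$ onto a single output it must find one string $\sigma_\rho$ that is simultaneously a subsequence of a constant fraction of the funnel-parts. The definition of $p_0\id{adv}$ only supplies \emph{pairwise} confusability (each confusable pair has its own witness), and your pigeonhole over witnesses gives a shared witness of multiplicity at least $|C_\rho|/2^{\Theta(n)}$, a vanishing fraction for any positive-rate class. Nothing in the hypothesis rules out a code where every length-$(1-q)bn$ string is a common subsequence of only $O(1)$ funnel-parts, so the "one witness shared by many" step is not merely delicate bookkeeping --- it is a strictly stronger combinatorial statement than what $p_0\id{adv}<\half$ provides. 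Relatedly, your deletion accounting ($qb+c=p$, $(1-q)b=1-p$) is consistent for \emph{every} $q<p$, so the threshold $1/(3-2p_0\id{adv})$ never actually emerges from your construction; you gesture at "re-running the funneling across two windows" but do not derive it.

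The paper's proof sidesteps both problems by making the wait phase have \emph{variable} length: the channel waits until the prefix determines the codeword $x$ \emph{exactly} (wait length $q_xn$), deleting every occurrence of a randomly pre-chosen bit $b$ during the wait (at most $q_xn/2$ deletions when $b$ is the minority bit, which happens with probability $\half$). Codewords are paired up \emph{in advance} into disjoint online-confusable pairs $(x,y)$ with matching wait lengths, matching bit-counts, and $\LCS(x[q_xn:],y[q_xn:])>(1-q_x)(1-p_0\id{adv})n$; since the channel knows $x$ exactly after the wait, it pushes the suffix to the pair's common subsequence, collapsing only two codewords per output but doing so for almost all of $C$. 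The budget $q_xn/2+(1-q_x)p_0\id{adv}n<pn$ for $q_x\le 1-p$ is exactly where $1/(3-2p_0\id{adv})$ comes from, and the case $q_x>1-p$ is handled by a second strategy (transmit $(1-p)n$ bits, delete the rest), with the channel choosing between the two strategies at random and derandomizing at the end. If you want to salvage your architecture, you would need to replace the fixed-length prefix with this codeword-identifying wait phase and replace "one $\sigma_\rho$ for a whole class" with a precomputed pairing; at that point you have essentially reconstructed the paper's argument.
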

In contrapositive form, the above states that in order to construct codes against adversarial deletions with error fraction approaching $\half$, it suffices to construct deterministic codes decodable in the average case against a fraction of \textit{online} deletions approaching $\half$.
We note our result does not rule out the possibility that there exist \emph{stochastic} codes against a fraction of online deletions approaching $\half$ in the case that $p_0\id{adv}<\half$.
In \S\ref{sec:6-2}, we discuss the subtleties of extending our negative result to rule out stochastic codes against a fraction of online deletions approaching $\half$.

Note that since online deletions contain, as a special case, oblivious deletions, if we insist on deterministic codes where every codeword is correctly decoded, then the question is just as hard as adversarial deletions. Thus, it is important to allow some error, in the form of randomized encoding (i.e. stochastic codes) and/or some small fraction of codewords to be decoded incorrectly (i.e., an average-error criterion). This is why, for our result to be nontrivial, the online strategy of Theorem~\ref{thm:online-intro} ensures that a constant fraction, as opposed to a constant number, of codewords are miscommunicated.

\subsection{Other related work on deletion codes and oblivious/online noise models}
\label{sec:1-2}

Our focus in this work is on binary codes. Recently, there has been much progress in the situation for correcting adversarial deletions over large (but fixed size) alphabets. Guruswami and Wang \cite{GuruswamiW17} proved that for $\eps>0$ there exist polynomial time encodable and decodable codes with alphabet size $\poly(1/\eps)$ correcting $1-\eps$ fraction of deletions with rate $\Omega(\eps^2)$, and codes with alphabet size $\poly(1/\eps)$ correcting $\eps$ fraction of deletions with rate $1-\tilde O(\sqrt{\eps})$.
Guruswami and Li \cite{GuruswamiL16} extended these results for high noise and high rate regimes to correcting insertions and deletions.
Recently, in a beautiful work, Haeupler and Shahrasbi \cite{HaeuplerS17} construct, for any $\eps,\delta>0$, codes with alphabet size $\poly(1/\eps)$ and rate $1-\delta-\eps$ correcting a $\delta$ fraction of insertions and deletions.

The model of oblivious errors (such as bit flips) has has been studied
in the information-theory literature as a particular case of
arbitrarily varying channels with state constraints~\cite{LapidothN98} (see
the related work section of \cite{GuruswamiS16} for more background on this
connection). In particular, for the case of bit flips, the capacity
against the model of $pn$ oblivious bit flips (for $p\le \half$) equals
$1-h(p)$, matching the Shannon capacity of the binary symmetric
channel that flips each bit independently with probability $p$. (This
special case was re-proved in \cite{Langberg08} by a different
simpler random coding argument compared to the original
works~\cite{CsiszarN88,CsiszarN89}.)  
Similarly, the capacity against the model of $pn$ oblivious erasures is $1-p$, matching the Shannon capacity of the binary erasure channel.
Explicit codes of rate approaching $1-h(p)$
to correct $pn$ oblivious bit flips (in the sense of
Theorem~\ref{thm:main-intro}, with randomized encoding) were given in
\cite{GuruswamiS16}. This work also considered computationally bounded
noise models, such as channels with bounded memory or with small
circuits, and gave optimal rate codes for \emph{list} decoding against
those models. These models are more general than oblivious errors, but
still not as pessimistic as adversarial noise.

Notice that in the case of both erasures and errors, the capacity in the oblivious and
random models were the same. It is not clear if this is the case for
deletions. While we show that the zero-rate threshold of oblivious deletions equals the zero-rate threshold of random deletions, the rate of the codes we guarantee in
Theorem~\ref{thm:main-intro} for $p \to 1$ is significantly worse than the
$\Omega(1-p)$ lower bound known for random deletions.

Turning to online models, Chen, Jaggi, and Langberg~\cite{ChenJL15}, in an impressive work, determined the exact capacities of the online erasure channel and the online bit-flip channel. 
A recent work studies a seemingly slight (but in fact fundamental) restriction of the online model where the channel's decision regarding the $i$'th bit depends only on the first $(i-1)$ bits and is independent of the current bit~\cite{DeyJLS16}. 
They proved that in the setting of erasures, the capacity in this restricted online model increases to match the capacity $1-p$ of the binary erasure channel, as opposed to $1-2p$ in the true online model.

To the best of our knowledge, ours is the first work to address both oblivious and online deletions.
We feel that given the large gaps in our understanding of coding against adversarial deletions, and the potential of taking advantage of less stringent models of deletion exhibited in this work, further study of these models seems timely and important. In particular, for online deletions as well, the zero-rate threshold is between $\sqrt2-1$ and
$\half$. It is an interesting challenge to ascertain if one can
take advantage of the online restriction, and push some of the ideas
in \cite{BukhGH17} and this work, to enable decoding a fraction of
online deletions approaching $\half$.

\medskip \noindent \textbf{Organization of this paper.}
In \S\ref{sec:2} we outline our oblivious deletion construction and sketch the proof of why it is correct.
In \S\ref{sec:3} we outline our proof that the online zero-rate threshold for deterministic code, $p_0\id{on,d}$, is $\half$ if and only if the adversarial zero-rate threshold, $p_0\id{adv}$, is $\half$.
In \S\ref{sec:4} we introduce definitions and notation for the remainder of the paper. 
In \S\ref{sec:5} we state and formally prove our construction on codes in the oblivious model with $p$ deletion fraction where $p\in(0,1)$.
In \S\ref{sec:6} we prove that the deterministic zero-rate threshold of the online deletion channel $p_0\id{on,d}$ is also bounded away from $\half$ if $p_0\id{adv}  < \half$.

\section{Outline of oblivious deletion codes}
\label{sec:2}
Instead of proving Theorem~\ref{thm:oblivious} directly, we prove a related theorem for decoding in the \emph{average case}, defined below.
\begin{definition}
  \label{def:average-case}
  We say a (deterministic) binary code $C$ of block length $N$ \textit{decodes $pN$ oblivious deletions in the average case} if for every deletion pattern $\tau$ deleting $pN$ bits, we have
  \begin{equation}
    \abs{\{x\in C: \exists y\in C\,\suchthat\,x\neq y\,\textand\tau(x)\le y\}} \ \le \ o_N(1) \cdot |C| \ .
  \end{equation}
\end{definition}
\begin{theorem}[Correcting oblivious deletions for average message]
  \label{thm:average-case}
  Let $p\in(0,1)$.
  There exists a constant $\calR > 0$ such that there exist infinitely many $N$ for which there is a rate $\calR$ code $C\subseteq\{0,1\}^N$ that decodes $pN$ oblivious deletions in average case.
\end{theorem}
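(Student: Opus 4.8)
The plan is to construct, for each $p\in(0,1)$, a binary code $C\subseteq\{0,1\}^N$ of some constant rate $\mathcal{R}=\mathcal{R}(p)>0$ satisfying Definition~\ref{def:average-case}: for every deletion pattern $\tau$ removing $pN$ bits, all but an $o_N(1)$ fraction of $x\in C$ have the property that $\tau(x)$ is a subsequence of no codeword other than $x$. The decoder is then forced on us --- on a received string $z\in\{0,1\}^{(1-p)N}$, output the unique $c\in C$ having $z$ as a subsequence, returning $\perp$ if there is none or more than one --- and its average-case correctness is precisely what Definition~\ref{def:average-case} asserts. For $p\le\sqrt2-1$ there is nothing to prove, since the codes of \cite{BukhGH17} already correct $pN$ \emph{adversarial} deletions and hence certainly satisfy Definition~\ref{def:average-case}; the content is the regime $p\in(\sqrt2-1,1)$, where no positive-rate adversarial deletion code is known and, for $p\ge\tfrac12$, none exists.

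First I would set $C$ up as a concatenation: an outer code over a large alphabet $[q]$, with $q=q(p)$ chosen so that the outer code corrects a $(1-\gamma)$ fraction of deletions at constant rate (such codes exist, e.g.\ by the large-alphabet / random-coding constructions referenced in \S\ref{sec:1-2}), composed with a binary encoding of length $\ell$ for each $[q]$-symbol, together with long synchronization markers between blocks. The one point that cannot be finessed is that the markers --- or whatever carries the block structure --- must sit at \emph{message-dependent} positions: a fixed oblivious $\tau$ knows exactly where any message-independent structure lies, and once $p$ is large it can afford to delete all of it, so message-independent markers are useless in this regime. With message-dependent markers, the decoder recovers the block structure by recognizing the surviving markers (the inner encodings being arranged to stay marker-free under mild deletions), re-parses the received string into a few runs of consecutive blocks, decodes the lightly-deleted blocks to $[q]$-symbols, and hands the resulting partial outer word to the outer decoder.

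The analysis of Definition~\ref{def:average-case} then goes as follows. Fix $\tau$. Step one: show that for all but an $o_N(1)$ fraction of codewords $x$, the fixed set of $pN$ deleted coordinates fails to destroy the (message-dependent) marker structure of $x$; this is a union bound over markers and over messages, and uses only that $\tau$ is chosen before $x$. Step two: for such a ``good'' $x$, suppose $\tau(x)$ were a subsequence of some $y\ne x$ in $C$; the witnessing embedding induces a monotone \emph{alignment} of the blocks of $y$ against the surviving marker structure of $x$. Enumerate the (at most singly-exponential in the number of blocks) admissible alignments; for each, bound, over the random choice of the code (equivalently of the inner encodings), the probability that every aligned piece embeds consistently. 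Because $\tau$ is fixed, the per-block deletion sub-patterns it induces are themselves fixed, so these probabilities are governed by the quantity ``how many bits of a random $\ell$-bit block survive a fixed pattern,'' which can be pushed below the reciprocal of (number of alignments)$\times 2^{\mathcal{R}N}$. A final union bound over $y$ and a sum over $x$ then give that the bad fraction is $o_N(1)$. The rate $\mathcal{R}(p)$ extracted this way is positive but, as remarked in \S\ref{sec:1-2}, far below the $\Omega(1-p)$ achievable against random deletions, essentially because the markers must be made very long to survive the worst-case \emph{placement} of the $pN$ deletions even though those deletions are blind to their \emph{location}.

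The hard part --- and the place where obliviousness is indispensable --- is Step one, i.e.\ that a typical codeword cannot be collapsed by the fixed $\tau$ into a low-complexity string. Against an \emph{adversarial} channel this is hopeless once $p\ge\tfrac12$: seeing $x$, the channel deletes all its $0$'s or all its $1$'s, which costs at most $N/2\le pN$, leaving a constant string that is a subsequence of essentially every codeword; this makes \emph{every} codeword bad, so it is not merely the average-over-codewords relaxation that rescues us but specifically that $\tau$ cannot depend on $x$. An oblivious channel must commit to the $pN$ deleted positions with no knowledge of $x$, so for a random $x$ the surviving string $\tau(x)$ still has full ``width'' --- roughly balanced, retaining most of the message-dependent markers --- and only an $o_N(1)$ fraction of codewords are unlucky for the particular fixed deleted set. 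Turning this into the quantitative statement that fuels the union bound, while simultaneously keeping the alphabet size, block length, and marker length in a regime where the number of surviving-string shapes and admissible alignments is small but the rate stays bounded away from $0$, is the bulk of the work carried out in \S\ref{sec:5}.
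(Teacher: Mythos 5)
Your high-level framing is right (reduce to the combinatorial condition of Definition~\ref{def:average-case}, fix $\tau$, union-bound over pairs of codewords), but the probabilistic engine you propose does not close for $p\ge\tfrac12$, which is the only nontrivial regime. Your Step two rests on the claim that the per-block quantity ``how many bits of a random $\ell$-bit block survive a fixed pattern'' can be pushed below $1/\bigl((\#\text{alignments})\cdot 2^{\calR N}\bigr)$. But for generic (e.g.\ uniformly random) inner encodings this probability is not small --- it is close to $1$: a uniformly random string of length $(1-p)\ell$ is, with probability $1-2^{-\Omega(\ell)}$, a subsequence of an independent uniformly random string of length $\ell$ whenever $1-p<\tfrac12$, since a greedy embedding advances about two positions of the longer string per matched symbol. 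This is exactly the obstruction worked out in Appendix~\ref{app:2} (Lemma~\ref{lem:app-2-2}): purely random codes, at any block granularity, cannot beat the union bound for $p\ge\tfrac12$. The paper's proof is built entirely around supplying the missing mechanism: the inner alphabet $\{g_1,\dots,g_K\}$ of \eqref{eq:constr-1} has run counts separated by factors of $R$, so that after an ``admissible'' inner deletion pattern a surviving block of $\tau(\psi(X))$ can embed into a single block of $\psi(Y)$ only if the latter has at least comparably many runs --- an event of probability $\approx X_i/K$ over a random $Y_j$ --- and otherwise must span $\Theta(K/X_i)$ blocks of $Y$, giving an expected consumption of $\Theta(\log K)\gg 1/\delta$ blocks of $Y$ per block of $X$ (Lemmas~\ref{lem:geom-1}--\ref{lem:geom-2}, Proposition~\ref{lem:match-3}). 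Nothing in your proposal plays this role, and without it the union bound over $y$ and alignments cannot succeed.

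Your Step one is also false as stated: an oblivious $\tau$ with $pN$ deletions may delete a contiguous block of $pN$ positions (or every other position, etc.), which destroys the marker structure of \emph{every} codeword in that region regardless of where message-dependent markers sit; for $p\to1$ almost the entire codeword is erased for all $x$ simultaneously. The paper does not attempt to preserve global structure; it only extracts, via a counting argument (Lemmas~\ref{lem:del-pattern-1}--\ref{lem:del-pattern-3}), a constant fraction $\delta n$ of inner positions whose induced deletion patterns are $(\lambda-1)$-admissible, i.e.\ preserve the run counts of all but $\lambda-1$ of the $g_i$ (``local list decoding''), and runs the whole confusability analysis on those positions alone. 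Finally, even granting a correct pairwise bound, you would still need the bookkeeping of \S\ref{sec:5-4}--\ref{sec:5-5} (removing the ``easily disguised'' high-outdegree words and controlling high-indegree ones via Lemma~\ref{lem:sparse-directed-whp}), since a vanilla first-moment argument over a random subcode does not by itself yield the for-all-$\tau$ guarantee; your sketch does not address this.
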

It is standard to show that for oblivious deletions, a randomized code
with low probability of incorrect decoding for \emph{every} message (as promised in Theorem~\ref{thm:oblivious}), and a deterministic code with low probability of incorrect decoding for a random message, are equivalent.  In particular Theorem~\ref{thm:average-case} implies
Theorem~\ref{thm:oblivious}.  For completeness we provide a proof of
this implication in Appendix~\ref{app:1}.
Roughly, if we have a length $N$ code $C$ with rate $\calR$ that
decodes against $pN$ deletions in the average case, then we can group
the codewords into sets of size $2^{0.01\calR n}$.  Then we associate
every message with a set of codewords and encode the message by
randomly choosing a codeword from its set.  Our decoding function
simply takes a received word $s$ and looks for a codeword $c$ such
that $s\sqsubseteq c$, i.e. $c$ is a superstring of $s$.  If there is
exactly one such $c$, output the associated codeword, otherwise output
$\perp$.  Using the fact that $C$ decodes $pN$ deletions in the
average case, we can show that for all deletion patterns $\tau$, only
a few codewords do not decode correctly via unique decoding in our new
stochastic code.

We now outline our construction for Theorem~\ref{thm:average-case}. 
A first, naive attempt at this problem would be to choose a random subset of $2^{\calR N}$ codewords in $\{0,1\}^N$.
This technique, however, does not work in the same way it does for oblivious bit-flips.
See Appendix~\ref{app:2} for a discussion on the difficulties faced by this approach. Instead, our approach to prove Theorem~\ref{thm:average-case} uses the so-called ``clean construction''  from \cite{BukhGH17} with appropriately selected parameters.
The idea is to choose a concatenated code such that the inner code widely varies in the number of runs between codewords.
Specifically, we choose sufficiently large constants $R$ and $K$ and set our inner code to have length $L = 2R^K$.
For $i=1,2,\dots,K$, set our inner codewords to be
\begin{align}
  g_i = \left(0^{R^{i-1}}1^{R^{i-1}}\right)^{L/(2R^{i-1})}
\end{align}
where $0^{k}$ and $1^{k}$ denote strings of $k$ 0s and 1s, respectively.
In this way, the number of runs between any two codewords differs by a factor of at least $R$.
Our outer code is a subset of $[K]^n$, so that the total code length is $N = nL$, and we concatenate the code via a function $\psi:[K]^*\to\{0,1\}^*$ that replaces a symbol $i\in[K]$ with the string $g_i\in\{0,1\}^L$.
The outer code is chosen via a random process, detailed in the following paragraphs; the process throws out a small subset of ``bad'' elements of $[K]^n$ and chooses a constant rate code by including each remaining element independently with some small fixed probability.
Our decoding function is \textit{unique decoding}.
That is, given a received word $s$, we find a codeword $c$ such that $s\sqsubseteq c$. \footnote{The relation $w \sqsubseteq w'$ means that $w$ is a (not-necessarily consecutive) subsequence of $w'$.}
If such a codeword is unique, our decoder returns that output.
Otherwise, the decoder returns $\perp$.

The following example illustrates why the varying run length is powerful even for correct more than $0.5N$ deletions: Suppose $n=100$, $p=0.9$, $R\gg 20$, our received word is $s=g_1^{10}$ (that is, 10 copies of $g_1$ concatenated together) and our code contains the codeword $c=g_2^{100}$.
Then $s$ is a subsequence of $c$ if and only if we can identify each of the 10 $g_1$s with non-overlapping bits of $c$.
However, since $g_1$ contains over $20$ times as many runs as $g_2$, each $g_1$ must be identified with a subsequence of $c$ spanning at least 20 inner codewords, i.e. copies of $g_2$.
This means the subsequence $s$ roughly must span at least $200$ inner codewords, but $c$ only has 100 inner codewords, contradiction.
While this imbalanced run-count behavior is a key to our argument, it is worth noting that the behavior is asymmetric.
In particular, while it takes $R$ copies of $g_2$ to produce $g_1$ as a subsequence, we only need two copies of $g_1$ to produce $g_2$ as a subsequence.

To analyze this code, we leverage the run-count behavior of the inner codewords.
This contrasts with the adversarial setting, where the run-count property of the same code is featured less centrally in the proof of correctness \cite{BukhGH17}.
We show that for any deletion pattern $\tau$ with up to $pN$ deletions, two random codewords $X$ and $Y$ are ``confusable'' with exponentially small probability in $N$.
To be precise, we have for all $\tau$,
\begin{equation}
  \Pr_{X,Y\sim U([K]^n)}[\tau(\psi(X))\sqsubseteq \psi(Y)] \ < \ 2^{-\Omega(n)}. \footnote{The notation $U([K]^n)$ denotes the uniform distribution on $[K]^n$.}
  \label{eq:intro-1}
\end{equation}

The idea for this proof can be illustrated in the case that $\tau$ deletes only entire inner codewords.
If $\tau$ deletes $pn$ of the $n$ inner codewords and does not touch the remaining codewords, then $\tau(\psi(X))$ has the same distribution over length $(1-p)N$ binary strings as $\psi(X')$ where $X'\sim U([K]^{(1-p)n})$. 
Thus we would like to show
\begin{equation}
  \Pr_{\substack{X'\sim U([K]^{(1-p)n})\\Y\sim U([K]^n)}} [\psi(X')\sqsubseteq \psi(Y)] \ < \ 2^{-\Omega(n)}.
  \label{eq:intro-2}
\end{equation}
Consider trying to find $\psi(X')$ as a substring of $\psi(Y)$ where $X'=X_1'X_2'\dots X_{(1-p)n}'$
This is possible if and only if we match the bits of $X'$ to bits of $Y$ greedily.
However, each inner codeword $g_{X_i'}$ spans a large number ($R$) of inner codewords of $Y$ unless the greedy matching encounters at least one $Y_j$ such that $Y_j\le X_i'$, i.e. a higher frequency inner codeword (if $Y_j<X_i'$, we may need to encounter two such higher frequency inner codewords, but two is at least one).
Thus, if $X_i'=1$ for some $i$, then the number of inner codewords spanned by $g_{X_i'}$ is approximately distributed as $\Geometric(1/K)$.
In general, conditioned on $X_i'=k$, the number of inner codewords spanned by $g_{X_i'}$ is approximately distributed as $\Geometric(k/K)$.
Thus, the number of inner codewords of $Y$ spanned by a single inner codeword $X_i'$ is
\begin{equation}
  \frac{1}{K}  \cdot \Theta\left( \frac{K}{1} \right)
  + \frac{1}{K}\cdot \Theta\left( \frac{K}{2} \right)
  + \cdots
  + \frac{1}{K}\cdot \Theta\left( \frac{K}{K} \right)
  \ = \ \Theta(\log K)
  \label{eq:intro-3}
\end{equation}
If we choose $K$ so that $\Theta(\log K) > \frac{2}{1-p}$ then the expected number of inner codewords of $Y$ spanned by $X'$ is more than $\Theta(\log K)\cdot (1-p)n > 2n$, so concentration bounds tell us that the probability that $\psi(X')\sqsubseteq\psi(Y)$ is exponentially small in $n$.

Note that there is a slight caveat to the above argument because the numbers of inner codewords in $\psi(Y)$ spanned by $\psi(X_i')$ are not independent across all $i$.
For example, if $\psi(Y)$ begins with $g_2g_1$ and $\psi(X')$ begins with $g_1g_1$, then the second inner codeword of $\psi(Y)$ has a few ``leftover bits'' that easily match with $\psi(X')$'s second inner codeword.
However, we can adjust for this independence by relaxing our analysis by a constant factor.

The above addresses the case when the deletion pattern, for each inner codeword, either deletes the codeword entirely or does not modify it at all.
We now show how to extend this to general deletion patterns, starting with the case of $p<\half$.

Note that the above argument only depended on the inner codewords having widely varying runs.
By a simple counting argument, we can verify that at least a $(0.5-p)$ fraction of codewords have at most $(p + (0.5-p)/2)L=(0.5+p)L/2$ deletions.
Since we are in the setting where $p<\half$, applying $(0.5+p)L/2$ deletions to an inner codeword with $r$ runs cannot make the number of runs less than $(0.5-p)r$, as it can delete at most $(0.5+p)/2$ fraction of the runs, and deleting each run reduces the number of runs by at most two.
If we choose $R\gg1/(0.5-p)^2$, then we can guarantee that even if we have a generic deletion pattern, we have a constant fraction $(0.5-p)$ of positions for which the run-count properties of all inner codewords in those positions are preserved up to a factor of $\sqrt{R}$.
Thus, as the number of runs between any two inner codewords differs by a factor of at least $R$, even after these corruptions the ratio between the number of runs of two ``preserved'' inner codewords is still at least $\sqrt{R}$.
Using the same argument as above and now requiring $\Theta(\log K) > \frac{2}{0.5-p}$, we can conclude that even for general deletion patterns that the probability that two random candidate codewords are confused is exponentially small.

As a technical note, working with deletion patterns directly is messy as they encode a large amount of information, much of which we do not need in our analysis.
Furthermore, the caveat mentioned in the clean deletion pattern case regarding the independence of number inner codewords spanned by some $\psi(X_i')$ becomes more severe for general deletion patterns. 
This happens because in general deletion patterns, especially later for $p>\half$, the inner codewords of $\psi(X)$ that have preserved their run-count property might nonetheless be shorter, so many (in particular, $\Theta(1/(1-p))$) inner codewords could match to single inner codewords of $\psi(Y)$.
To alleviate this complexity in the analysis, we introduce a technical notion called a \textit{matching} intended to approximate the subsequence relation.
This notion allows us to capture only the run-count behavior of the deletion patterns with respect to the inner codewords while also accounting for the lack-of-independence caveat.
For a deletion pattern $\tau$, let $\sigma$ be the associated deletion pattern such that $\sigma(X)$ to removes all outer codeword symbols except the ones in whose position $\tau$ preserves the run-count property of all the inner codewords (these exist because we are still in the case $p<1/2$).
In our proof, we argue that if $\tau(\psi(X))$ is a subsequence of $\psi(Y)$, then $\sigma(X)$ has a matching in $Y$, and that the probability that $\sigma(X)$ has a matching in $Y$ for two codewords $X$ and $Y$ is exponentially small.

To extend the argument for generic deletion patterns from $p<1/2$ to $p<1$, we must use a ``local list decoding'' idea.
Note that when the number of deletion patterns exceeds $1/2$, for every codeword there exists deletion patterns that destroy all or almost all of the information in the codeword, e.g. the deletion pattern that deletes all the 1s (or 0s) of the codeword.
For this reason, codes cannot correct against more than $1/2$ fraction of adversarial deletions.
However, one can show that this does not happen too frequently allowing us to correct oblivious and average case deletions.
In contrast to the $p<1/2$ case where we found a small, constant fraction of inner codeword positions in which the deletion pattern of the inner codeword preserved the run-count property for \textit{all} inner codewords, we can now find a small, constant fraction of inner codeword positions in which the deletion pattern of the inner codeword preserves the run-count property for \textit{all but a few} inner codewords.
For example, even if an inner code deletion pattern deletes every other bit and thus deletes all the information of $g_1$, the number of runs of $g_2,g_3,\dots,g_K$ are still preserved.
We call this idea ``local list decoding'' because while we cannot decode our constant fraction of inner codewords uniquely, we can still pin down the inner codewords to a few possibilities. 
By extending our definition of matching to account for a few inner codewords potentially losing their run-count behavior, we can prove, just as for $p<1/2$, that $\tau(\psi(X))\sqsubseteq \psi(Y)$ implies $\sigma(X)$ has a matching in $Y$, and $\sigma(X)$ having a matching in $Y$ happens with small probability.

At this point of the proof, we have combinatorially established everything we need to prove that our code is decodable in the average case (and thus against oblivious deletions).
That is, we have shown that a random candidate codeword in $\psi([K]^n)$ has an exponentially small probability of being confused with another random candidate codeword.
Given that codewords have an exponentially small probability of being confusable with each other, it is natural to consider choosing a code by randomly selecting a subset of $[K]^n$.
Using this construction, we might try using concentration bounds to show that, for any deletion pattern $\tau$, the probability that we have more than $\eps|C|$ codewords (for $\eps=o(N)$) that are confusable with some other codeword is at most $2^{-\omega(N)}$, and we can union bound over the at-most-$2^{N}$ choices of $\tau$.
This however does not work directly as the decodability for a given deletion pattern $\tau$ depends on the decodability of other deletion patterns. 
For example, if $p>\half$ and we happen to choose $c=\psi(11\dots1)=0101\dots01$ as a codeword, then for any deletion pattern $\tau$ with $pN$ deletions, $c'\sqsubseteq c$ for \textit{all} candidate codewords $c'$.
From this example alone, the probability of many codewords confusable with $c$ is at least $K^{-n}$ and there are many more examples of such \textit{easily disguised} candidate codewords.
Fortunately, we can prove that the number of easily disguised candidate codewords is small.
In particular, we show that the majority of elements of $\psi([K]^n)$ are not easily disguised in \emph{all} deletion patterns $\tau$.
This intuitively makes sense because, as we have shown, in any deletion pattern $\tau$, on average, words are disguised as an exponentially small fraction of codewords, and because the easily disguised words tend to be easily disguised in every deletion pattern $\tau$.
For example, $\psi(11\dots1_K)=0101\dots01$ is easily disguised for any deletion pattern $\tau$.

After throwing out the easily disguised candidate codewords, we randomly choose a constant rate code from the remaining candidate codewords.
Careful bookkeeping confirms that with positive probability we obtain a code that decodes $pN$-deletions in the average case.
The bookkeeping is nontrivial, because just as there are a handful of words like $\psi(11\dots 1)$ that are easily disguised as other codewords with deletions, there are also a handful of \emph{easily confused} words like $\psi(KK\dots K)$ that can be confused with many other words when deletions are applied to it.
Furthermore, unlike easily disguised codewords, these easily confused words vary over the different deletion patterns, so we cannot simply throw them out.
However, like for easily disguised codewords, we show the number of easily confused words is an exponentially small fraction of the codebook size in expectation, so such words do not contribute significantly to the number of incorrectly decoded codewords.
Note the subtle difference between easily disguised and easily confused words: a single easily disguised word like $\psi(11\dots1)$ causes \emph{many} candidate codewords to fail to decode under our unique decoding, but any easily confused codeword adds \emph{at most one} failed decoding.

We model managing easily disguised and easily confusable codewords via a directed graph, where, roughly, for each deletion pattern, we consider a graph on $[K]^n$ where $\overrightarrow{YX}$ is an edge if and only if $\tau(\psi(X))\sqsubseteq \psi(Y)$. 
In our proof, we replace the subsequence relation with the matching relation (see \S\ref{sec:5-4}).
In this graph language, the easily disguised codewords correspond to vertices with high outdegree, and the easily confusable codewords correspond to vertices with high indegree.

Our construction illustrates the subtle nature of the oblivious deletion channel and average case errors.
These settings share much of the behavior of the adversarial deletion channel such as the fact that for $p>\half$, every codeword has a deletion pattern destroying all of its information.
Consequently, our approach tackles the oblivious and average case errors using a combinatorial argument just as the best adversarial deletion results do \cite{BukhGH17}.
Yet, the relaxed decoding requirement allows us to exploit it to correct a fraction of deletions approaching 1.

%%%%%%%%%%%%%%%%%%%%%%%%%%%%%%%%%%%%%%%%%%%%%%%%%%%%%%%%%%%%%%%%%%%%%%%%%%%%%%%%%%%%%%%%%%%%%%%%%%%%%%%%%%%%%%%%%%%%%%%%%%%%%%%%%%%%%%%%%%%%%%%%%%%%%%%%%%%%%%%%
%%%%%%%%%%%%%%%%%%%%%%%%%%%%%%%%%%%%%%%%%%%%%%%%%%%%%%%%%%%%%%%%%%%%%%%%%%%%%%%%%%%%%%%%%%%%%%%%%%%%%%%%%%%%%%%%%%%%%%%%%%%%%%%%%%%%%%%%%%%%%%%%%%%%%%%%%%%%%%%%

\section{Outline of connection between online and adversarial deletions}
\label{sec:3}
We give a high level outline of the proof of Theorem~\ref{thm:online}. Recall that the goal is to prove that, assuming $p_0\id{adv}<\half$, for some $p < \half$, for every deterministic code of rate bounded away from $0$, there exists a deterministic online deletion channel applying up to $pn$ deletions that prevents successful decoding in the average case. That is, when a uniformly random codeword is transmitted, the probability of incorrect decoding is bounded away from 0. 
Note first that it suffices to find a \emph{randomized} online deletion strategy guaranteeing average-case decoding error because, if for a given code the channel has a random strategy to guarantee average-case decoding error, then sampling a strategy over the randomness of the channel gives a deterministic online strategy that inflicts similar probability of incorrect decoding. 

To do this, we assume that the adversarial zero-rate threshold is $p_0\id{adv}<\half$, and consider $p\sar=1/(3-2p_0\id{adv})$.
Our choice of $p\sar$ satisfies $p_0\id{adv} < p\sar <\half$.
Now we show that for any $p\sar<p<\half$, there exists a randomized strategy that inflicts $pn$ deletions in an online manner to a deterministic code, and guarantees that the probability, over the randomness of a uniformly chosen codeword and the randomness of the adversary, of incorrect decoding is positive (we show it is at least $\frac{1}{10}$).

We adapt the ``wait-push'' strategy of Bassily and Smith \cite{BassilyS14} for an online adversary against erasures.
In the ``wait phase'' of this strategy, the adversary observes a prefix $x\sar$ of $\ell$ bits from the transmitted word $x$ without erasing any bits.
Then it constructs a list $\calL_{x\sar}$ of candidate codewords, among which is the actual codeword chosen by the encoder.
In the ``push phase'', the adversary chooses a codeword $x'$ randomly from $\calL_{x\sar}$, which, with positive probability, corresponds to a different message than the actually transmitted word.
The adversary then, for the last $n-\ell$ bits of the transmission, erases bit $x_i$ of $x$ where $x_i\neq x_i'$ ($x_i'$ the $i$th bit of $x'$).

Our strategy adapts the wait push strategy to deletions.
In our wait phase, we observe a prefix of the string until we know the codeword $x$ \emph{exactly}.
We choose in advance a random bit and delete every instance of that bit that we see during the wait phase.
After the wait phase, suppose we have seen $qn$ bits and deleted $rn$ bits, where $r<q$, and we also know the codeword $x$ exactly.
Using the definition of $p_0\id{adv}$, we can show that, for most choices of $x$, there is another codeword $y$ such that (i) the wait phase lengths $qn$ for $x$ and $y$ are the same, (ii) their majority bits $b$ in the wait phases are the same, (iii) the number of bits $rn$ deleted in the wait phases is the same, and (iv) the length-$(1-q)n$ suffixes $x\sar$ and $y\sar$ of $x$ and $y$, respectively, have a common subsequence $s\sar$ satisfying $|s\sar| > (1-p_0\id{adv})|x\sar|$.
We can form pairs of such codewords $(x,y)$ in advance, so that when we see either $x$ or $y$ we push the suffixes $x\sar$ or $y\sar$ to $s\sar$, so that the received word in both cases is $b^{rn}s\sar$.
For these typical choices of $x$, the total number of required deletions to obtain $b^{rn}s\sar$ is at most $qn/2 + p_0\id{adv}(1-q)n$. 
If $q$ is bounded away from 1 (in our case, less than $(1-p)n$), then we can bound the number of deletions away from $\half$ (in our case, less than $p\sar=1/(3-2p_0\id{adv})$).
On the other hand, if $q$ is large (larger than $(1-p)n$), then we can simply observe the first $(1-p)n$ bits and delete the rest.
Since we can only choose one strategy to run, we choose one of these two strategies at random.
This gives our modified wait push strategy.

%%%%%%%%%%%%%%%%%%%%%%%%%%%%%%%%%%%%%%%%%%%%%%%%%%%%%%%%%%%%%%%%%%%%%%%%%%%%%%%%%%%%%%%%%%%%%%%%%%%%%%%%%%%%%%%%%%%%%%%%%%%%%%%%%%%%%%%%%%%%%%%%%%%%%%%%%%%%%%%%
%%%%%%%%%%%%%%%%%%%%%%%%%%%%%%%%%%%%%%%%%%%%%%%%%%%%%%%%%%%%%%%%%%%%%%%%%%%%%%%%%%%%%%%%%%%%%%%%%%%%%%%%%%%%%%%%%%%%%%%%%%%%%%%%%%%%%%%%%%%%%%%%%%%%%%%%%%%%%%%%
\vspace{-1ex}
\section{Preliminaries}
\label{sec:4}
\vspace{-1ex}
  \indent\textbf{General Notation.}
  For a boolean statement $P$, let $\one[P]$ be 1 if $P$ is true and 0 otherwise.

  Throughout the paper, $\log x$ refers to the base-2 logarithm.

  We use interval notation $[a,b] = \{a,a+1,\dots,b\}$ to denote intervals of integers, and we use $[a] = [1,a] = \{1,2,\dots,a\}$.

  For a set $S$ and an integer $a$, let $\binom{S}{a}$ denote the family subsets of $S$ of size $a$.
  Let $U(S)$ denote the uniform distribution on $S$.

  \textbf{Words.}
  A \textit{word} is a sequence of symbols from some \textit{alphabet}.
  We denote string concatenation of two words $w$ and $w'$ with $ww'$.
  We denote $w^k=ww\cdots w$ where there are $k$ concatenated copies of $w$.
  We also denote a concatenation of a sequence of words as $w_1w_2\cdots w_k = \prod_{i=1}^kw_i$.
  We denote words from binary alphabets with lowercase letters $c,s,w$ and words from non-binary alphabets with capital letters $X,Y,Z$.

  A \textit{subsequence} of a word $w$ is a word obtained by removing some (possibly none) of the symbols in $w$.

  A \textit{subword} or \textit{interval} of a word $w$ is a contiguous subsequence of characters from $w$. We identify intervals of words with intervals of integers corresponding to the indices of the subsequence. For example, the interval $\{1,2,\dots,|w|\} = [1,|w|]$ is identified with the entire word $w$.

  Let $w'\sqsubseteq w$ denote ``$w'$ is a subsequence of $w$''.

  Define a \textit{run} of a word $w$ to be a maximal single-symbol subword. That is, a subword $w'$ in $w$ consisting of a single symbol such that any longer subword containing $w'$ has at least two different symbols.
  Note the runs of a word partition the word. For example, $110001$ has 3 runs: one run of 0s and two runs of 1s.

  \textbf{Deletion Patterns.}
  A \textit{deletion pattern} is a function $\tau$ that removes a fixed subset of symbols from words of a fixed length.
  Let $\calD(n,m)$ denote the set of deletion patterns $\tau$ that operate on length $n$ words and apply exactly $m$ deletions.
  For example $\tau:x_1x_2x_3\mapsto x_1x_3$ is a member of $\calD(3,1)$.
  Let $\calD(n) = \cup_{m=0}^n \calD(n,m)$. 

  We identify each deletion pattern $\tau\in\calD(n,m)$ with a size $m$ subset of $[n]$ corresponding to the deleted bits.
  We often use sets to describe deletion patterns when the length of the codeword is understood.
  For example $[n]$ refers to the single element of $\calD(n,n)$.
  Accordingly, let $\subseteq$ be a partial order on deletion patterns corresponding to set inclusion, and let $\abs\tau$ denote the number of bits deleted by $\tau$.
  As such, we have $\tau\subseteq \tau'$ implies $\abs\tau\le\abs{\tau'}$.

  For a word $w$ and $\tau\in \calD(|w|)$, let $\tau(w)$ and $w\setminus\tau$ both denote the result of applying $\tau$ to $w$.
  We use the second notation when we identify sets with deletion patterns, as in the above paragraph where the set elements correspond to the deleted positions.

  In a concatenated code with outer code length $n$ and inner code length $L$, we can identify a deletion pattern on the entire codeword $\tau\in\calD(nL)$ as the ``concatenation'' of $n$ deletion patterns $\tau_1\frown\cdots\frown\tau_n$, one for each inner codeword.
  To be precise, for all $\tau\in\calD(nL)$, there exists $\tau_i\in\calD(L)$ such that $\tau = \cup_{i=1}^n \{j+(i-1)L:j\in\tau_i\}$, and we denote this by $\tau = \tau_1\frown\cdots\frown\tau_n$.
  Using this notation, we refer to $\tau_i$ as an \textit{inner code deletion pattern}.

  \textbf{Graphs.}
  In a (directed or undirected) graph $G$, let $V(G)$ and $E(G)$ denote the vertex set and edge set of $G$ respectively.
  For a subset $W\subseteq V(G)$ of the vertices, let $G\restriction W$ denote the subgraph induced by $W$.
  We use $G\restriction W$ instead of the more standard $G|_{W}$ to avoid too many subscripts.
  For a vertex $v\in V(G)$, let $\deg_G(v)$ denote the degree of $v$ in $G$  when $G$ is undirected, and let $\indeg_G(v),\outdeg_G(v)$ denote the indegree and outdegree of $v$, respectively, in $G$ when $G$ is directed.
  We drop the subscript of $G$ in $\deg,\indeg$ and $\outdeg$ notations when the graph $G$ is understood.

  \textbf{Concentration Bounds.}
  We use the following forms of Chernoff bound. 
  \begin{lemma}[Chernoff]
    \label{lem:chernoff}
    Let $A_1,\dots,A_n$ be independent random variables taking values in $[0,1]$.
    Let $A = \sum_{i=1}^n A_i$ and $\delta\in[0,1]$.
    Then
    \begin{align}
      \Pr[A\le (1-\delta)\E[A]] \ \le \ \exp\pa{-\delta^2\E[A]/2}.
      \label{eq:chernoff}
    \end{align}
    Furthermore, if $A_1,\dots,A_n$ are Bernoulli random variables, then
    \begin{equation}
      \Pr[A\ge (1+\delta)\E[A]] \ \le \ \left( \frac{e^\delta}{(1+\delta)^(1+\delta)} \right)^{\E[A]}
      \label{eq:chernoff-2}
    \end{equation}
  \end{lemma}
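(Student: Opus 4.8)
The plan is to prove both tail bounds by the textbook exponential-moment (Chernoff) method: bound $\Pr[A \le (1-\delta)\mu]$ via Markov's inequality applied to the nonnegative variable $e^{-tA}$, bound $\Pr[A \ge (1+\delta)\mu]$ via Markov applied to $e^{tA}$, factor the resulting moment generating function coordinatewise using independence, and finally optimize the free parameter $t>0$. Here $\mu = \E[A] = \sum_{i=1}^n \E[A_i]$, and the cases $\delta = 0$ are vacuous.

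For the lower tail \eqref{eq:chernoff}, Markov's inequality gives $\Pr[A \le (1-\delta)\mu] \le e^{t(1-\delta)\mu}\,\E[e^{-tA}]$, and independence factors $\E[e^{-tA}] = \prod_i \E[e^{-tA_i}]$. The per-coordinate step I would use is convexity of $x \mapsto e^{-tx}$ on $[0,1]$: it lies below its chord, so $e^{-tA_i} \le 1 - (1-e^{-t})A_i$ pointwise, hence $\E[e^{-tA_i}] \le 1 - (1-e^{-t})\E[A_i] \le \exp(-(1-e^{-t})\E[A_i])$ and $\E[e^{-tA}] \le \exp(-(1-e^{-t})\mu)$. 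Substituting and choosing $t = \ln\tfrac{1}{1-\delta}$ (valid for $\delta \in [0,1)$; the case $\delta = 1$ follows directly, since $\E[A_i] \le \Pr[A_i > 0]$ forces $\Pr[A = 0] \le \prod_i (1-\E[A_i]) \le e^{-\mu} \le e^{-\mu/2}$) collapses the bound to $\exp(-\mu\,\varphi(\delta))$, where $\varphi(\delta) = (1-\delta)\ln(1-\delta) + \delta$. The final step is the scalar inequality $\varphi(\delta) \ge \delta^2/2$ on $[0,1)$, which I would verify by noting that $\varphi(\delta) - \delta^2/2$ vanishes at $\delta = 0$ and has derivative $-\ln(1-\delta) - \delta \ge 0$ there (equivalently $1-\delta \le e^{-\delta}$).

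For the upper tail \eqref{eq:chernoff-2} the argument is the mirror image, except that the Bernoulli hypothesis lets me compute the moment generating function exactly rather than bound it by convexity: Markov on $e^{tA}$ gives $\Pr[A \ge (1+\delta)\mu] \le e^{-t(1+\delta)\mu}\prod_i \E[e^{tA_i}]$, and for Bernoulli $A_i$, $\E[e^{tA_i}] = 1 + \E[A_i](e^t - 1) \le \exp(\E[A_i](e^t-1))$, so $\E[e^{tA}] \le \exp((e^t-1)\mu)$. Choosing $t = \ln(1+\delta)$ then yields $\Pr[A \ge (1+\delta)\mu] \le \exp(\mu(\delta - (1+\delta)\ln(1+\delta))) = \bigl(e^\delta/(1+\delta)^{1+\delta}\bigr)^{\mu}$, which is exactly \eqref{eq:chernoff-2}.

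I do not anticipate any real obstacle: the entire argument is standard. The only point that is not purely mechanical is the elementary inequality $\varphi(\delta) = (1-\delta)\ln(1-\delta) + \delta \ge \delta^2/2$ on $[0,1)$, which is needed to convert the optimized exponent into the clean sub-Gaussian form $\exp(-\delta^2\mu/2)$; even that is a one-line calculus check. (If one prefers to avoid it, the proof can alternatively be organized around the standard per-coordinate bound $\E[e^{-tA_i}] \le e^{-\E[A_i]t + \E[A_i]t^2/2}$ obtained from $e^{-s} \le 1 - s + s^2/2$, which gives $\exp(-\delta^2\mu/2)$ directly upon setting $t = \delta$.)
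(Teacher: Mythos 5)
Your proof is correct and is the standard exponential-moment (Chernoff) argument; the paper itself states Lemma~\ref{lem:chernoff} as a known concentration bound in the preliminaries and gives no proof, so there is nothing to compare against. All the steps check out, including the convexity bound $e^{-tA_i}\le 1-(1-e^{-t})A_i$ on $[0,1]$, the optimization $t=\ln\frac{1}{1-\delta}$ (resp.\ $t=\ln(1+\delta)$), the scalar inequality $(1-\delta)\ln(1-\delta)+\delta\ge\delta^2/2$, and the separate handling of the edge cases $\delta=0$ and $\delta=1$.
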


  We also use the submartingale form of Azuma's Inequality.
  \begin{lemma}[Azuma]
    \label{lem:azuma}
    Let $c$ be a constant.
    Let $X_1,X_2,\dots$ be a submartingale such that $|X_i - X_{i-1}| \le c$ for all $i$.
    Then for all positive reals $t$, we have
    \begin{equation}
      \Pr[X_k - X_1 \le -t] \le \exp\left( \frac{-t^2}{2c^2(k-1)} \right).
      \label{}
    \end{equation}
  \end{lemma}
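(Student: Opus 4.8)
The plan is to prove this lower-tail bound by the standard exponential-moment (Chernoff-style) method, modified only slightly to exploit the submartingale hypothesis in place of an exact martingale identity. Write $Y_i \defeq X_i - X_{i-1}$ for the increments and fix the natural filtration $\calF_i = \sigma(X_1,\dots,X_i)$, so that the hypotheses read $\E[Y_i \mid \calF_{i-1}] \ge 0$ and $|Y_i| \le c$ (in particular each $X_i$ is bounded, hence integrable, so all conditional expectations below are legitimate). Since $X_k - X_1 = \sum_{i=2}^k Y_i$, for any $\lambda > 0$ Markov's inequality applied to the nonnegative variable $e^{-\lambda(X_k - X_1)}$ gives
\[ \Pr[X_k - X_1 \le -t] \ = \ \Pr\!\pa{e^{-\lambda(X_k - X_1)} \ge e^{\lambda t}} \ \le \ e^{-\lambda t}\,\E\!\pa{e^{-\lambda(X_k - X_1)}}. \]

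The first real step is to control the conditional moment generating function of a single increment. Because $x \mapsto e^{-\lambda x}$ is convex, on $[-c,c]$ it lies below the chord joining its endpoints, so pointwise $e^{-\lambda Y_i} \le \frac{c - Y_i}{2c}e^{\lambda c} + \frac{c + Y_i}{2c}e^{-\lambda c}$. Taking $\E[\cdot \mid \calF_{i-1}]$ and writing $\mu_i \defeq \E[Y_i \mid \calF_{i-1}]$, this becomes $\E(e^{-\lambda Y_i}\mid \calF_{i-1}) \le \cosh(\lambda c) - \frac{\mu_i}{c}\sinh(\lambda c) \le \cosh(\lambda c) \le e^{\lambda^2 c^2/2}$; the decisive inequality is the middle one, where the submartingale property $\mu_i \ge 0$ (together with $\sinh(\lambda c) \ge 0$) lets us simply discard the linear term rather than bound it, and the last step is the elementary estimate $\cosh x \le e^{x^2/2}$, which follows from comparing Taylor coefficients via $(2n)! \ge 2^n n!$. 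With this in hand I would peel the increments off one at a time: conditioning on $\calF_{k-1}$,
\[ \E\!\pa{e^{-\lambda \sum_{i=2}^k Y_i}} \ = \ \E\!\pa{ e^{-\lambda \sum_{i=2}^{k-1} Y_i}\, \E\!\pa{e^{-\lambda Y_k}\mid \calF_{k-1}} } \ \le \ e^{\lambda^2 c^2/2}\, \E\!\pa{e^{-\lambda \sum_{i=2}^{k-1} Y_i}}, \]
and iterating $k-1$ times yields $\E(e^{-\lambda(X_k - X_1)}) \le \exp\!\pa{(k-1)\lambda^2 c^2/2}$.

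Combining the two displays gives $\Pr[X_k - X_1 \le -t] \le \exp\!\pa{-\lambda t + (k-1)\lambda^2 c^2/2}$ for every $\lambda > 0$, and the routine optimization $\lambda = t/((k-1)c^2)$ collapses the exponent to $-t^2/(2c^2(k-1))$, which is the claimed bound. I do not expect a genuine obstacle, as this is a classical inequality; the only subtlety is keeping the direction of the submartingale hypothesis straight (it must be used to kill the $-\frac{\mu_i}{c}\sinh(\lambda c)$ term, which is why this is a lower-tail bound and why the identical argument would \emph{not} produce an upper-tail bound for a mere submartingale) and checking that $|Y_i| \le c$ genuinely holds on all of $[-c,c]$ so that the chord bound is valid there.
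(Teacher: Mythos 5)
Your proof is correct. The paper states this submartingale form of Azuma's inequality as a standard concentration bound and does not prove it, so there is no in-paper argument to compare against; what you give is the classical exponential-moment proof (Markov on $e^{-\lambda(X_k-X_1)}$, the chord/convexity bound yielding $\E\pa{e^{-\lambda Y_i}\mid \calF_{i-1}} \le \cosh(\lambda c) - \tfrac{\mu_i}{c}\sinh(\lambda c) \le e^{\lambda^2c^2/2}$, tower-property peeling, and optimization over $\lambda$), and every step checks out — including the one genuinely non-routine point, namely that the submartingale hypothesis $\mu_i \ge 0$ is used only to discard the $\sinh$ term, which is exactly why the bound is one-sided.
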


%%%%%%%%%%%%%%%%%%%%%%%%%%%%%%%%%%%%%%%%%%%%%%%%%%%%%%%%%%%%%%%%%%%%%%%%%%%%%%%%%%%%%%%%%%%%%%%%%%%%%%%%%%%%%%%%%%%%%%%%%%%%%%%%%%%%%%%%%%%%%%%%%%%%%%%%%%%%%%%%
%%%%%%%%%%%%%%%%%%%%%%%%%%%%%%%%%%%%%%%%%%%%%%%%%%%%%%%%%%%%%%%%%%%%%%%%%%%%%%%%%%%%%%%%%%%%%%%%%%%%%%%%%%%%%%%%%%%%%%%%%%%%%%%%%%%%%%%%%%%%%%%%%%%%%%%%%%%%%%%%
\section{Deletion code decoding $p$-fraction of oblivious deletions}
\label{sec:5}

%%%%%%%%%%%%%%%%%%%%%%%%%%%%%%%%%%%%%%%%%%%%%%%%%%%%%%%%%%%%%%%%%%%%%%%%%%%%%%%%
%%%%%%%%%%%%%%%%%%%%%%%%%%%%%%%%%%%%%%%%%%%%%%%%%%%%%%%%%%%%%%%%%%%%%%%%%%%%%%%%

  \subsection{Overview of proof}
  \label{sec:5-1}

  In \S\ref{sec:2} we gave a high level overview.
  We now begin with a brief snapshot of the proof structure and how it is organized.

  We present our general code construction in \S\ref{sec:5-2}.
  The construction uses the ``clean construction'' in \cite{BukhGH17} and an outer code that we choose randomly.
  We analyze properties of the concatenated construction in \S\ref{sec:5-3}.
  We begin by extracting the useful behavior of deletion patterns with respect to our codewords.
  This deletion pattern analysis culminates in Lemma~\ref{lem:del-pattern-3}, allowing us to define the \textit{signature} (Definition~\ref{def:signature}) of a deletion pattern.
  The key result of \S\ref{sec:5-3} is Proposition~\ref{lem:match-3}.
  It states that for any deletion pattern $\tau$, the probability that two random candidate codewords $c,c'$ are confusable (in the sense that $\tau(c)\sqsubseteq c'$) is exponentially small in the code length.
  However, because working with deletion patterns directly is messy, the proposition is written in the language of \textit{matchings}, a technical notion defined in Definition~\ref{def:matching-3}.
  In short, because the inner codewords are nicely behaved, we do not need to know the exact details of the behavior of a given deletion pattern $\tau$, but rather only need certain properties of it, given by its signature.
  We thus define a matching to approximate the subsequence relation using only the signature of $\tau$, so that ``$\sigma(X)$ is matchable in $Y$'' (where $\sigma$ is the outer code deletion pattern given by $\tau$'s signature) holds roughly when ``$\tau(\psi(X))\sqsubseteq \psi(Y)$'' holds.

  Combinatorially, Proposition~\ref{lem:match-3} allows us to finish the proof.
  As stated in \S\ref{sec:2}, for our outer code we consider $[K]^n$ minus a small set of easily disguised candidate codewords. 
  The notion of a easily disguised candidate codeword is well defined by Lemma~\ref{lem:match-4}.
  For a sufficiently small constant $\gamma$, we randomly choose a size $2^{\gamma n}$ outer code over the remaining outer codewords.
  In \S\ref{sec:5-4}, we use the graph language described in \S\ref{sec:2} and prove Lemma~\ref{lem:sparse-directed-whp}, which roughly states that in a sparse directed graph, if we randomly sample a small subset of vertices, the induced subgraph is also sparse (for some appropriate definition of sparse) with high probability.
  Finally, in \S\ref{sec:5-5}, we piece together these results, showing that Lemma~\ref{lem:sparse-directed-whp} guarantees, with positive probability, that our random code combinatorially decodes against $pN$ deletions in the average case.

%%%%%%%%%%%%%%%%%%%%%%%%%%%%%%%%%%%%%%%%
%%%%%%%%%%%%%%%%%%%%%%%%%%%%%%%%%%%%%%%%

  \subsection{Construction}
  \label{sec:5-2}

  % General Construction

  Let $p\in(0,1)$, and let $\lambda=\lambda(p)$ be the smallest integer such that $(1+p)/2 < 1-2^{-\lambda}$.
  For our argument any $\lambda$ such that $p<1-2^{-\lambda}$ suffices.
  In particular, for $p<\half$, we can choose $\lambda=1$, slightly simplifying the argument as described in \S\ref{sec:2}. 
  However, we choose $\lambda$ to be the smallest $\lambda$ such that $(1+p)/2<1-2^{-\lambda}$ to ensure a clean formula for the rate.

  Let $\delta$ be such that $p = 1-2^{-\lambda} - \delta$.
  Let $n$ be a positive integer.
  With hindsight, choose
  \begin{equation}
    K \ = \ 2^{\ceil{2^{\lambda+5}/\delta}},\qquad
    R \ = \ 4K^4,\qquad
    L \ = \ 2R^{K},\qquad
    N \ = \ nL.
    \label{eq:}
  \end{equation}
  Note that $R$ is even.
  For the remainder of this section, the variables $p,\lambda,\delta,K,R$ and $L$ are fixed.

  In this way we have $1-2^{-\lambda}-\frac{1}{\sqrt{R}}-\frac{\delta}{2} > p$.
  For $i=1,\dots, K$, let $g_i$ be the length $L$ word
  \begin{equation}
    g_i \ = \ \pa{0^{R^{i-1}}1^{R^{i-1}} }^{L/(2R^{i-1})}.
    \label{eq:constr-1}
  \end{equation}
  Consider the encoding $\psi:[K]^*\to\{0,1\}^*$ where $\psi(X_1\cdots X_k)=g_{X_1}g_{X_2}\cdots g_{X_k}$.
  We construct a concatenated code where the outer code is a length $n$ code over $[K]$ and the inner code is $\{g_1,g_2,\dots,g_{K}\}\subseteq\{0,1\}^L$.
  For the outer code, we choose a random code $C_{out}$ where each codeword is chosen uniformly at random from $[K]^n$ minus a small undesirable subset that we specify later.
  We choose our decoding function to be \textit{unique decoding}.
  That is, our decoder iterates over all codewords $c\in C$ and checks if the received word $s$ is a subsequence of $c$.
  If it is a subsequence of exactly one $c$, the decoder returns that $c$, otherwise it fails.
  While this decoder is not optimal in terms of the fraction of correctly decoded codewords (it could try to break ties instead of just giving up), it is enough for this proof.
  Furthermore, since we are showing combinatorial decodability, we do not need the decoder to be efficient.

  If, for some $p'>p$, a code can decode $p'nL$ average case or oblivious deletions, then it can decode $pnL$ deletions.
  Thus, we can decrease $\delta$ until $\delta n$ is an even integer, so may assume without loss of generality that $\delta n$ is an even integer.

  \subsection{Analyzing construction behavior}
  \label{sec:5-3}

  \begin{definition}
    A \textit{$g_i$-segment} is an interval in $[L]$ corresponding to a run of $g_i$.
    Note that the $g_i$-segments partition $[L]$ and are of the form $[1+aR^{i-1},(a+1)R^{i-1}]$ for $a\in\{0,\dots,L/R^{i-1}-1\}$.
  \end{definition}
  Note that $g_i$ has $2R^{K+1-i}$ runs.
  In particular, the number of runs greatly varies between inner codewords.
  This property makes the concatenated construction powerful because it is difficult to find common subsequences of different inner codewords.
  The following definition allows us to reason about the inner codewords in terms of their run counts.
  \begin{definition}
    \label{def:corrupt}
    We say an inner code deletion pattern $\sigma$ \textit{preserves} $g_i$ if $\sigma(g_i)$ has at least $2R^{K+1-i}/\sqrt R = 2R^{K+\half-i}$ runs.
    Otherwise we say $\sigma$ \textit{corrupts} $g_i$.
  \end{definition}

  We start with a basic but useful fact about deletion patterns and runs.
  \begin{lemma}
    \label{lem:runs}
    Suppose $w$ is a word with $r$ runs $I_1,\dots, I_r$ and $\tau$ is a deletion pattern such that $\tau(w)$ has $r'$ runs.
    We can think of these runs $I_k$ as subsets of consecutive indices in $\{1,\dots,|w|\}$.
    Then the number of runs $I_k$ of $w$ completely deleted by $\tau$, i.e. satisfying $I\subseteq \tau$ when $\tau$ is thought of as a subset of $\{1,\dots,|w|\}$, is at least $\frac{r-r'}{2}$.
  \end{lemma}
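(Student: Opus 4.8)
The plan is to track exactly how the completely-deleted runs account for the drop from $r$ runs to $r'$ runs. Write $D$ for the set of runs of $w$ that $\tau$ deletes entirely, and set $d = |D|$. Each of the remaining $r-d$ runs — call them the \emph{surviving} runs — still contributes a nonempty, single-symbol block to $\tau(w)$, and $\tau(w)$ is precisely the concatenation of these blocks in their original left-to-right order. Consequently the runs of $\tau(w)$ are obtained from the surviving runs by coalescing each maximal consecutive stretch of surviving runs carrying a common symbol into a single run. So if $M$ denotes the number of indices $k$ for which the $k$-th and $(k{+}1)$-st surviving runs (in original order) carry the same symbol, each such coalescence is charged to one such index, and therefore $r' \ge (r-d) - M$.

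The key step is to bound $M \le d$. Since consecutive runs of $w$ always carry different symbols, two surviving runs that are adjacent among the surviving runs yet carry the same symbol must have at least one run of $w$ strictly between them in $w$; and any such in-between run must lie in $D$, for otherwise it would itself be a surviving run wedged between the two, contradicting adjacency. Moreover, for two distinct same-symbol adjacencies these sets of in-between deleted runs are pairwise disjoint, since the surviving runs involved separate the positions in $w$. Hence each of the $M$ same-symbol adjacencies can be injectively charged to a distinct element of $D$, giving $M \le d$.

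Combining the two inequalities, $r' \ge (r-d) - d = r - 2d$, which rearranges to $d \ge \tfrac{r-r'}{2}$, as claimed; the degenerate case $d=r$ (where $\tau(w)$ is empty) is immediate. I do not anticipate a real obstacle: the argument is elementary, and the only point needing care is the bookkeeping in the coalescence step — namely verifying that the runs of $\tau(w)$ are obtained exactly by merging maximal same-symbol blocks of surviving runs, and that the in-between-deleted-run sets attached to distinct same-symbol adjacencies are genuinely disjoint.
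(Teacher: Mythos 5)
Your proof is correct and takes essentially the same approach as the paper's, which compresses the argument into one line: each completely deleted run lowers the run count by at most $2$ (it vanishes and may cause its two neighbors to merge), while partial deletions of surviving runs lower it not at all, so $r' \ge r - 2d$. Your explicit accounting $r' \ge (r-d) - M$ together with the injective charging of each same-symbol merge to a distinct completely deleted run (giving $M \le d$) is just a spelled-out version of that same bound, so there is no gap.
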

  \begin{proof}
    Deleting any run reduces the number of runs in a word by at most 2, and $\tau$ reduces the number of runs by $r-r'$, so the claim follows.
  \end{proof}

  The next lemma establishes the usefulness of widely varying runs in our construction.
  It says that even when an inner code deletion pattern has a large number of deletions, most of the inner codewords still look the same in terms of the number of runs.
  The intuition for the lemma is as follows.
  Consider the extreme example of an inner code deletion pattern $\sigma$ that ``completely corrupts'' the inner codewords $g_1,\dots,g_\lambda$.
  That is, $\sigma$ deletes all the zeros of each of $g_1,\dots,g_\lambda$.
  Since $\sigma$ deletes all the zeros of $g_\lambda$, it must delete every other run of $R^{\lambda-1}$ bits, thus deleting $L/2$ bits.
  Applying these deletions alone to $g_{\lambda-1}$ leaves it with half as many runs of length exactly $R^{\lambda-2}$.
  However, since $\sigma$ also deletes all zeros of $g_{\lambda-1}$, it must delete every other run of $R^{\lambda-2}$ bits of the remaining $L/2$ bits, thus deleting $L/4$ more bits.
  Similarly, since $\sigma$ deletes all zeros of $g_{\lambda-2}$, it must delete an additional $L/8$ bits.
  Continuing this logic, we have $\sigma$ must delete $L(1-2^{-\lambda})$ bits total.
  This tells us that if an inner code deletion pattern completely corrupts the inner codewords $g_1,\dots,g_\lambda$, it needs $L(1-2^{-\lambda})$ deletions.
  This logic works even if we chose to corrupt any subset of $\lambda$ inner codewords other than $\{g_1,\dots,g_\lambda\}$.
  One can imagine that corrupting (according to Definition~\ref{def:corrupt}) inner codewords is almost as hard as completely corrupting them, so adding some slack gives the lemma.

  \begin{lemma}
    \label{lem:del-pattern-1}
    If $\sigma$ is an inner code deletion pattern with $\abs{\sigma} \le L\pa{1-\frac{1}{2^\lambda} - \frac{1}{\sqrt{R}}}$, then $\sigma$ preserves all but at most $\lambda-1$ choices of $g_i$. 
  \end{lemma}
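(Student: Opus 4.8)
The plan is to prove the contrapositive: if $\sigma$ corrupts at least $\lambda$ of the inner codewords, then $|\sigma| > L(1 - 2^{-\lambda} - 1/\sqrt R)$. Fix any $\lambda$ corrupted indices and list them in \emph{decreasing} order $j_1 > j_2 > \cdots > j_\lambda$, so $g_{j_1}$ has the coarsest segments and $g_{j_\lambda}$ the finest. I will peel off the levels in this order, maintaining a nested family $[L] = S_0 \supseteq S_1 \supseteq \cdots \supseteq S_\lambda$ of ``surviving regions'' with $[L]\setminus S_k$ always entirely contained in $\sigma$, and show that $|S_k|$ essentially halves at each step.

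Concretely, for $k \ge 1$ let $S_k$ be the union of those $g_{j_k}$-segments lying inside $S_{k-1}$ that $\sigma$ does \emph{not} completely delete; then $S_{k-1}\setminus S_k$ is a union of completely-deleted $g_{j_k}$-segments, so $[L]\setminus S_k\subseteq\sigma$ follows by induction. Two structural facts drive the count, both using that $R$ is even. First, each $S_{k-1}$ is a union of \emph{whole} $g_{j_k}$-segments: this is immediate for $S_0=[L]$, and inductively $S_{k-1}$ is a union of whole $g_{j_{k-1}}$-segments, each of which (as $j_{k-1}>j_k$) splits into $R^{j_{k-1}-j_k}$ whole $g_{j_k}$-segments. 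Second, and this is the key point, $g_{j_k}$ restricted to $S_{k-1}$ is exactly the length-$|S_{k-1}|$ prefix of $g_{j_k}$: viewed as a sequence of $g_{j_k}$-segments, any whole $g_i$-segment with $i>j_k$ is the fixed block $(0^{R^{j_k-1}}1^{R^{j_k-1}})^{R^{i-j_k}/2}$ regardless of which segment it is (here $R^{i-j_k}$ is even, so the segment values line up), and concatenating such blocks reproduces a prefix of $g_{j_k}$. In particular $g_{j_k}$ restricted to $S_{k-1}$ has exactly $|S_{k-1}|/R^{j_k-1}$ runs, each a whole $g_{j_k}$-segment of length $R^{j_k-1}$.

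Now invoke the corruption hypothesis via Lemma~\ref{lem:runs}. Since $[L]\setminus S_{k-1}\subseteq\sigma$, the word $\sigma(g_{j_k})$ equals the prefix $g_{j_k}$-restricted-to-$S_{k-1}$ with the positions of $\sigma\cap S_{k-1}$ further deleted, and it has fewer than $2R^{K+1-j_k}/\sqrt R = L/(R^{j_k-1}\sqrt R)$ runs because $\sigma$ corrupts $g_{j_k}$. Applying Lemma~\ref{lem:runs} to this prefix word, more than $\tfrac12\big(|S_{k-1}|/R^{j_k-1} - L/(R^{j_k-1}\sqrt R)\big)$ of its runs --- whole $g_{j_k}$-segments inside $S_{k-1}$, each of length $R^{j_k-1}$ --- are completely deleted by $\sigma$, so $|S_{k-1}\setminus S_k| > \tfrac12(|S_{k-1}| - L/\sqrt R)$ and hence $|S_k| < \tfrac12|S_{k-1}| + L/(2\sqrt R)$. (The trivial bound $|S_k|\le|S_{k-1}|$ covers the degenerate case $|S_{k-1}|\le L/\sqrt R$, in which Lemma~\ref{lem:runs} is vacuous.) Telescoping this recursion from $|S_0|=L$ over $k=1,\dots,\lambda$ gives $|S_\lambda| < L/2^\lambda + \tfrac{L}{2\sqrt R}\sum_{m=0}^{\lambda-1}2^{-m} < L/2^\lambda + L/\sqrt R$, whence $|\sigma| \ge |[L]\setminus S_\lambda| = L - |S_\lambda| > L(1 - 2^{-\lambda} - 1/\sqrt R)$, contradicting the hypothesis.

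I expect the main obstacle to be the second structural fact --- that restricting to $S_{k-1}$ really does produce a clean prefix of $g_{j_k}$ with its full complement of runs. It is tempting to peel levels from fine to coarse, or to argue locally inside individual surviving segments, but the surviving segments at an intermediate level may be badly value-unbalanced, which breaks the alignment one needs; processing coarse-to-fine and exploiting that any coarse segment looks identical when examined at a finer scale is what makes the halving recursion go through. A secondary, purely bookkeeping concern is handling the regime $|S_{k-1}|\le L/\sqrt R$ so that the telescoped bound still comes out exactly as stated.
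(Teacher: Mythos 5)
Your proof is correct and is essentially the paper's argument in complementary notation: your surviving regions $S_k$ are exactly $[L]\setminus\bigcup_{b\le k}I_b$ for the paper's completely-deleted segment families $I_b$, you order the corrupted indices coarse-to-fine just as the paper does, and your recursion $|S_k|<\tfrac12|S_{k-1}|+L/(2\sqrt R)$ is the paper's inductive inequality $\sum_{b\le a}|I_b|>L(1-2^{-a})(1-1/\sqrt R)$ restated. The structural facts you isolate (whole-segment nesting and the evenness of $R$ forcing the restriction of $g_{j_k}$ to $S_{k-1}$ to be a clean prefix with full run count) are precisely the ones the paper uses.
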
 
  \begin{proof}
    Suppose $\lambda$ is a positive integer such that $\sigma$ corrupts $g_i$ for $\lambda$ different values of $i$, say $i_1>i_2>\cdots>i_\lambda$.
    We wish to show $\abs\sigma > L\pa{1-\frac{1}{2^\lambda} - \frac{1}{\sqrt{R}}}$.

    Recall that a $g_i$ segment is an interval of the form $[1+aR^{i-1},(a+1)R^{i-1}]$.
    Inductively define the collections of intervals $\calI_1,\dots,\calI_\lambda$ and the sets of indices $I_1,\dots,I_\lambda$ as follows.
    For $1\le a\le \lambda$, set
    \begin{equation}
      \calI_a \ = \ \br{J: \text{$J$ is $g_{i_a}$-segment, $J\subseteq \sigma\setminus\bigcup_{b=1}^{a-1} I_b$}}
      \quad\text{ and }\quad
      I_a \ =\  \bigcup_{J\in \calI_a} J.
    \end{equation}
    Intuitively, $\calI_1$ as the set of runs in $g_{i_1}$ that are entirely deleted by $\sigma$, and $I_1$ is the set of those deleted indices.
    Then, $\calI_2$ is the set of runs of $g_{i_2}$ deleted by $\sigma$ but not already accounted for by $\calI_1$, and $I_2$ is the set of bits in the runs of $\calI_2$.
    We can interpret $\calI_3,\dots, \calI_\lambda$ and $I_3,\dots,I_\lambda$ similarly.
    By construction, $I_1,\dots, I_\lambda$ are disjoint, and their union is a subset of $\sigma$ (thought of as a subset of $[L]$), so $\sum_{b=1}^\lambda|I_b|\le |\sigma|$.
    It thus suffices to prove
    \begin{equation}
      \label{eq:del-pattern-1-goal}
      \sum_{b=1}^\lambda |I_b|\ >\ L\pa{1-\frac{1}{2^\lambda}-\frac{1}{\sqrt{R}}}.
    \end{equation}
    Note that for any $j < j'$, every $g_{j'}$-segment is the disjoint union of $R^{j'-j}$ many $g_{j}$-segments.
    We thus have $[L]$ is the disjoint union of $g_{i_a}$-segments and $I_b$ is also the disjoint union of $g_{i_a}$-segments when $b<a$ (and thus $i_b > i_a$).
    Hence, $[L]\setminus \cup_{b=1}^{a-1}I_{b}$ is the disjoint union of $g_{i_a}$-segments.
    Furthermore, as $R$ is even, each $I_{b}$ covers an even number of $g_{i_a}$-segments, so the segments of $[L]\setminus \cup_{b=1}^{a-1}I_{b}$ alternate between segments corresponding to runs of 0s in $g_{i_a}$ and segments corresponding to runs of 1s in $g_{i_a}$.
    It follows that all runs of $g_{i_a}\setminus \cup_{b=1}^{a-1}I_{b}$ have length exactly $R^{i_a-1}$, so the number of runs in the string $g_{i_a}\setminus \cup_{b=1}^{a-1}I_{b}$ is
    \begin{equation}
      \frac{L}{R^{i_a-1}} - \sum_{b=1}^{a-1} R^{i_{b}-i_a}|\calI_{b}|.
    \end{equation}
    By construction, the only $g_{i_a}$-segments that are deleted by $\sigma$ are the intervals covered by $I_1\cup\cdots\cup I_a$. Since $\sigma$ corrupts $g_{i_a}$, we know $\sigma(g_{i_a})$ has less than $L/(R^{i_a-1}\sqrt{R})$ runs. By Lemma~\ref{lem:runs}, we have
    \begin{equation}
      |\calI_a|\ > \
      \half\pa{\pa{\frac{L}{R^{i_a-1}} - \sum_{b=1}^{a-1} R^{i_{b}-i_a}|\calI_{b}|}-\frac{L}{R^{i_a-1}\sqrt{R}}}.
    \end{equation}
    Simplifying and using $|I_{b}| = R^{i_b-1}|\calI_b|$ for all $b$, we obtain
    \begin{equation}
      \label{eq:del-pattern-1-1}
      |I_a|\ > \ L\pa{\half -\frac{1}{2\sqrt R}} - \half\sum_{b=1}^{a-1}|I_b|.
    \end{equation}
    From here it is easy to verify by induction that, for all $1\le a\le \lambda$, we have
    \begin{equation}
      \label{eq:del-pattern-1-goal-2}
      \sum_{b=1}^a|I_b| \ > \ L\pa{1-\frac{1}{2^a}}\pa{1-\frac{1}{\sqrt{R}}}.
    \end{equation}
    Indeed, \eqref{eq:del-pattern-1-1} for $a=1$ provides the base case, and if we know \eqref{eq:del-pattern-1-goal-2} for some $a-1$, then by \eqref{eq:del-pattern-1-1} we have
    \begin{align}
      \sum_{b=1}^a|I_b|
        \ &> \ L\pa{\half -\frac{1}{2\sqrt R}} + \half\sum_{b=1}^{a-1}|I_b| \nonumber\\
        \ &> \ L\pa{\half -\frac{1}{2\sqrt R}} + \frac{L}{2}\pa{1-\frac{1}{2^{a-1}}}\pa{1-\frac{1}{\sqrt{R}}}\nonumber\\
        \ &= \ L\pa{1-\frac{1}{2^a}}\pa{1-\frac{1}{\sqrt{R}}},
    \end{align}
    completing the induction.
    The induction proves \eqref{eq:del-pattern-1-goal}, from which we have
    \begin{equation}
      \label{eq:del-pattern-1-goal-3}
      |\sigma| \ \ge \ \sum_{b=1}^\lambda |I_b|\ >\ L\pa{1-\frac{1}{2^\lambda}-\frac{1}{\sqrt{R}}},
    \end{equation}
    as desired.
  \end{proof}

  Lemma~\ref{lem:del-pattern-1} motivates the following definition. 
  \begin{definition}
    We say an inner code deletion pattern $|\sigma|$ is \textit{$\ell$-admissible} if $|\sigma|\le L(1-1/2^{\ell+1}-\frac{1}{\sqrt{R}})$.
    \label{def:admissible}
  \end{definition}
  If, for some $\ell$, $\sigma$ is $\ell$-admissible, then Lemma~\ref{lem:del-pattern-1} tells us $\sigma$ corrupts at most $\ell$ different $g_i$.
  However, note that $\ell$-admissibility is stronger that corrupting at most $\ell$ different $g_i$ as $\ell$-admissibility gives a stronger upper bound on the number of deletions in $\sigma$, which is necessary in Lemma~\ref{prop:match-1}.

  \begin{lemma}
    \label{lem:del-pattern-2}
    Let $\delta>0$.
    Let $\tau=\tau_1\frown\cdots\frown\tau_n$ be a deletion pattern with at most $(1-\frac{1}{2^\lambda}-\frac{1}{\sqrt R} - \frac{\delta}{2})N$ deletions.
    There are at least $\delta n$ indices $i$ such that $\tau_i$ is $(\lambda-1)$-admissible.
  \end{lemma}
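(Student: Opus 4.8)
The statement is a pure averaging fact about the sizes $|\tau_1|,\dots,|\tau_n|$ of the inner-code deletion patterns, so nothing beyond Definition~\ref{def:admissible} and the block decomposition $\tau=\tau_1\frown\cdots\frown\tau_n$ is needed. First I would unfold the definition with $\ell=\lambda-1$: an index $i$ \emph{fails} to be $(\lambda-1)$-admissible exactly when $|\tau_i|>L\pa{1-\frac{1}{2^\lambda}-\frac{1}{\sqrt R}}$. Since the deleted coordinates are partitioned among the $n$ blocks, $|\tau|=\sum_{i=1}^n|\tau_i|$.

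The plan is to argue by contradiction. Suppose fewer than $\delta n$ of the indices are $(\lambda-1)$-admissible, so that more than $(1-\delta)n$ of them fail to be. Discarding the non-negative contributions of the admissible blocks and bounding each of the more than $(1-\delta)n$ failing blocks below by the threshold above gives
\[
  |\tau|\ \ge\ \sum_{i\text{ not }(\lambda-1)\text{-admissible}}|\tau_i|\ >\ (1-\delta)\,n\,L\pa{1-\frac{1}{2^\lambda}-\frac{1}{\sqrt R}}.
\]
Setting this against the hypothesis $|\tau|\le\pa{1-\frac{1}{2^\lambda}-\frac{1}{\sqrt R}-\frac{\delta}{2}}N=\pa{1-\frac{1}{2^\lambda}-\frac{1}{\sqrt R}-\frac{\delta}{2}}nL$ and cancelling $nL$ reduces everything to an inequality among the constants $2^{-\lambda}$, $R^{-1/2}$, $\delta$; checking that the two bounds on $|\tau|$ are incompatible produces the contradiction and hence the lemma.

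The only step requiring care is this final numerical comparison: one must confirm that the $\tfrac{\delta}{2}N$ of slack in the global deletion budget, once spread over the $n$ inner positions, outweighs the cost of having the factor $(1-\delta)$ rather than $1$ in front of the per-block threshold $1-2^{-\lambda}-R^{-1/2}$. This is a short inequality chase rather than a real obstacle, and since the downstream use of the lemma in \S\ref{sec:5-3} only needs a $\Theta_p(1)$-fraction of admissible inner positions, the precise value of the constant is unimportant — any bound of the form $\Omega_p(n)$ would serve equally well.
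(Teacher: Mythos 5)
Your proposal is correct in spirit and is essentially the paper's own argument: the paper likewise applies a one-line averaging/counting bound to the block sizes $|\tau_i|$, showing that at most $\frac{nL\left(1-2^{-\lambda}-\frac{1}{\sqrt R}-\frac{\delta}{2}\right)}{L\left(1-2^{-\lambda}-\frac{1}{\sqrt R}\right)}$ indices can exceed the admissibility threshold. One caveat about the ``final numerical comparison'' you deferred: writing $T=1-2^{-\lambda}-\frac{1}{\sqrt R}$, your contradiction needs $(1-\delta)T\ge T-\frac{\delta}{2}$, i.e.\ $T\le\frac12$, which holds only when $\lambda=1$; for $\lambda\ge 2$ the averaging argument honestly yields only $\delta n/2$ admissible indices (the paper's own last inequality $n\left(1-\frac{\delta/2}{1-2^{-\lambda}}\right)\le n(1-\delta)$ has the identical slip), though, as you correctly observe, any $\Omega_p(n)$ fraction of admissible positions suffices for the downstream use after renaming constants.
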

  \begin{proof}
    By a simple counting argument, we have $|\tau_i| > L(1-\frac{1}{2^{\lambda}}-\frac{1}{\sqrt{R}})$ for at most
    \begin{equation}
      \frac{n\cdot L\pa{1-\frac{1}{2^{\lambda}}-\frac{1}{\sqrt{R}}-\frac{\delta}{2}}}{L\pa{1-\frac{1}{2^{\lambda}}-\frac{1}{\sqrt{R}}}}
        \ \le \ n\pa{1 - \frac{\delta/2}{1-2^{-\lambda}}} 
        \ \le \ n\pa{1-\delta}
    \end{equation}
    values of $i$. For the remaining at least $\delta n$ values of $i$, we have $\tau_i$ is $(\lambda-1)$-admissible.
  \end{proof}
  
  The following corollary allows us to reduce our analysis of a deletion pattern $\tau$ to analyzing positions where $\tau$'s inner code deletion pattern is $(\lambda-1)$-admissible.
  We effectively assume that our deletion pattern completely deletes all inner codewords with a non-admissible index.
  \begin{lemma}
    Let $\tau\in\calD(nL, pnL)$.
    There exists $\tau'\in\calD(\delta nL)$, $\sigma\in\calD(n,(1-\delta)n)$ and sets $S_1,\dots,S_{\delta n}$ such that
    \begin{enumerate}
      \item $|S_i| = \lambda-1$ for all $i$,
      \item for all $X\in[K]^n$, we have $ \tau'(\psi(\sigma(X)))\sqsubseteq \tau(\psi(X))$, and 
      \item when we write $\tau'=\tau'_1\frown\cdots\frown\tau'_{\delta n}$ as the concatenation of $\delta n$ inner code deletion patterns, we have, for all $i$ and all $j\notin S_i$, that $\tau'_i\in\calD(L)$ preserves $g_j$.
    \end{enumerate}
    \label{lem:del-pattern-3}
  \end{lemma}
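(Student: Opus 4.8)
The plan is to derive this essentially for free from Lemmas~\ref{lem:del-pattern-1} and~\ref{lem:del-pattern-2}; the only genuine work is a bookkeeping check that composing an outer deletion pattern with the concatenation map $\psi$ behaves block‑by‑block as expected. First I would write $\tau = \tau_1\frown\cdots\frown\tau_n$ as the concatenation of its $n$ inner code deletion patterns. Since $\tau$ deletes $pN = pnL$ bits and the construction of \S\ref{sec:5-2} guarantees $p < 1 - 2^{-\lambda} - \frac{1}{\sqrt R} - \frac{\delta}{2}$, Lemma~\ref{lem:del-pattern-2} applies and yields a set of at least $\delta n$ indices $i$ for which $\tau_i$ is $(\lambda-1)$-admissible. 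I would fix any $\delta n$ of these, say $i_1 < i_2 < \cdots < i_{\delta n}$, and let $\sigma \in \calD(n,(1-\delta)n)$ be the outer deletion pattern whose surviving coordinates are exactly $\{i_1,\dots,i_{\delta n}\}$, so that $\sigma(X) = X_{i_1}X_{i_2}\cdots X_{i_{\delta n}}$ for every $X\in[K]^n$.

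Next I would set $\tau'_k := \tau_{i_k}$ for $k=1,\dots,\delta n$ and define $\tau' := \tau'_1\frown\cdots\frown\tau'_{\delta n}$, which lies in $\calD(\delta nL)$ as the concatenation of $\delta n$ patterns in $\calD(L)$ (the statement imposes no bound on $|\tau'|$). For each $k$, $(\lambda-1)$-admissibility means $|\tau_{i_k}| \le L(1 - 2^{-\lambda} - \frac{1}{\sqrt R})$, so Lemma~\ref{lem:del-pattern-1} gives that $\tau_{i_k}$ preserves $g_j$ for all but at most $\lambda-1$ values of $j$. I would then let $S_k \subseteq [K]$ be any set of size exactly $\lambda-1$ containing every $j$ for which $\tau_{i_k}$ corrupts $g_j$ (padding arbitrarily if fewer than $\lambda-1$ are corrupted). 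This immediately yields conditions (1) and (3).

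For condition (2) I would just unfold the definitions. From $\psi(X) = g_{X_1}g_{X_2}\cdots g_{X_n}$ and the block structure of $\frown$,
\[
  \tau(\psi(X)) \;=\; \tau_1(g_{X_1})\,\tau_2(g_{X_2})\cdots\tau_n(g_{X_n}),
\]
while, since $\sigma$ followed by $\psi$ produces exactly the surviving inner blocks in order,
\[
  \tau'(\psi(\sigma(X))) \;=\; \tau'_1(g_{X_{i_1}})\cdots\tau'_{\delta n}(g_{X_{i_{\delta n}}}) \;=\; \tau_{i_1}(g_{X_{i_1}})\cdots\tau_{i_{\delta n}}(g_{X_{i_{\delta n}}}).
\]
The second string is obtained from the first by deleting the blocks $\tau_i(g_{X_i})$ for $i\notin\{i_1,\dots,i_{\delta n}\}$, hence is a subsequence, and this holds for every $X\in[K]^n$. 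There is no real obstacle here: all the quantitative estimates have already been carried out in Lemmas~\ref{lem:del-pattern-1} and~\ref{lem:del-pattern-2}, and the only thing to be careful about is correctly tracking the identifications — that the $\tau_i$ act independently on the $i$-th inner block of $\psi(X)$, that applying $\sigma$ and then $\psi$ reproduces precisely those blocks indexed by the surviving coordinates, and that a subsequence witness is obtained by dropping whole blocks.
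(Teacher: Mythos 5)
Your proposal is correct and follows essentially the same route as the paper's own proof: apply Lemma~\ref{lem:del-pattern-2} to extract $\delta n$ indices with $(\lambda-1)$-admissible inner patterns, restrict $\sigma$ and $\tau'$ to those indices, take $S_i$ to be the (padded) corrupted sets via Lemma~\ref{lem:del-pattern-1}, and observe that the resulting string is obtained from $\tau(\psi(X))$ by deleting the surviving bits of the discarded blocks. No gaps.
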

  \begin{proof}
    Let $\tau=\tau_1\frown\cdots\frown\tau_n$.
    By Lemma~\ref{lem:del-pattern-2}, there exist $\delta n$ indices $\ell_1<\cdots<\ell_{\delta n}$ such that, for $i=1,\dots,\delta n$, $\tau_{\ell_i}$ is $(\lambda-1)$-admissible.
    Choose $\sigma\in\calD(n,(1-\delta)n)$ via $\sigma(X_1\dots X_n) = X_{\ell_1}X_{\ell_2}\dots X_{\ell_{\delta n}}$, and choose $\tau'\in\calD(\delta nL)$ via $\tau' = \tau_{\ell_1}\frown\cdots\frown\tau_{\ell_{\delta n}}$.
    We have $\tau'\circ\psi\circ\sigma(X) \sqsubseteq \tau\circ\psi(X)$ for all $X\in[K]^n$ because $\tau'\circ\psi\circ\sigma(X)$ is simply the result of deleting the remaining bits in inner codewords of non-admissible indices in $\tau\circ\psi(X)$.
    By construction, each $\tau_{\ell_i}\in\calD(L)$ is $(\lambda-1)$-admissible, so we can choose $S_1,\dots,S_{\delta n}$ by setting $S_\ell$ to be the set that $\tau_{\ell_i}$ corrupts.
    Note that some $S_i$ may have size less than $\lambda-1$, but we can arbitrarily add elements of $[K]$ to $S_i$ until it has $\lambda-1$ elements.
    This is okay as item 3 in the corollary statement remains true if we add elements to $S_i$.
  \end{proof}

  In our analysis, for a deletion pattern $\tau = \tau_1\frown\cdots\frown\tau_{n}$, we only care about the behavior of a given inner code deletion pattern $\tau_i$ as far as the set $S_i$ of inner codewords $g_j$ that it corrupts; instead of considering all possible deletion patterns $\tau\in\calD(nL)$, it suffices to only consider all possible $\sigma, S_1,\dots, S_{\delta n}$.
  This motivates the following definition.
  \begin{definition}
    \label{def:signature}
    The \textit{signature} of a deletion pattern $\tau$ is $(\sigma, S_1, S_2,\dots, S_{\delta n})$, where $\sigma, S_1,\dots, S_{\delta n}$ are given by Lemma~\ref{lem:del-pattern-3}.
    If the choice of $\sigma, S_1,\dots, S_{\delta n}$ satisfying the conditions of Lemma~\ref{lem:del-pattern-3} are not unique, then choose one such collection of $\sigma,S_1,\dots,S_{\delta n}$ arbitrarily and assign this collection to be the signature of $\tau$.
  \end{definition}
  Below we define the matchability relation $\prec$.
  Definition~\ref{def:matching-3} allows us to worry only about the signature of a deletion pattern $\tau$ rather than $\tau$ itself.
  Intuitively, we can think of the matchability relation $\prec$ as an approximation of the subsequence relation $\sqsubseteq$.
  Proposition \ref{prop:match-1} establishes this relationship formally.
  Specifically, it states that if $\tau$ is a deletion pattern with signature $(\sigma, S_1,\dots, S_{\delta n})$, then for $X,Y\in[K]^n$, we have $\sigma(X)\sqsubseteq Y$ implies that $\sigma(X)$ has a matching in $Y$ with appropriate parameters.
  This means that if we want to show there are few incorrectly decoded codewords in a given code, it suffices to show that few codewords have an appropriately parameterized matching in some other codeword. 

  We first define type-A and type-B pairs of indices $(i,j)\in\{1,\dots,|X|\}\times\{1,\dots,|Y|\}$.
  Intuitively, pairs $(i,j)$ are type-B only if $\tau_i(\psi(X_i))$ has many more runs than $\psi(Y_j)$, i.e. it is difficult to find (contiguous) subwords of $\tau_i(\psi(X_i))$ as subsequences of $\psi(Y_j)$.
  \begin{definition}
    Let $X,Y\in[K]^*$ be words over the alphabet $[K]$ and let $S_1,\dots,S_{|X|}$ be subsets of $K$.
    Given a pair $(i,j)\in\{1,\dots,|X|\}\times\{1,\dots,|Y|\}$, we say $(i,j)$ is \textit{type-A} with respect to $X,Y,S_1,\dots,S_{|X|}$ (or simply \textit{type-A} if the parameters are understood) if $X_i\in S_i$ or $X_i\ge Y_j$.
    Call a pair $(i,j)$ \textit{type-B} with respect to $X,Y,S_1,\dots,S_{|X|}$ otherwise.
    \label{def:matching}
  \end{definition}

  \begin{definition}
    Let $X,Y\in[K]^*$ be words over the alphabet $[K]$, let $S_1,\dots,S_{|X|}$ be subsets of $K$, and let $s$ and $t$ be positive integers.
    The following algorithm constructs the $(s,t,S_1,\dots,S_{|X|})$ \textit{matching} of $X$ with $Y$.
    Begin with a pair $(a,b)=(1,1)$.
    The first and second coordinates correspond to indices of the strings $X$ and $Y$, respectively.
    Define an \textit{A-move} to be incrementing the first coordinate, $a$, by 1, and a \textit{B-move} to be incrementing of the second coordinate, $b$, by 1.
    \begin{enumerate}
      \item If $a=|X|$ or $b=|Y|$, stop.
      \item Do one of the following
      \begin{enumerate}
        \item If the last $s$ moves were A-moves, make a B-move.
        \item Else if the last $t$ moves were B-moves, make an A-move.
        \item Else if $(a,b)$ is type-A, make an A-move.
        \item Else, $(a,b)$ must be type-B, in which case make a B-move.
      \end{enumerate}
      \item Repeat from step 1.
    \end{enumerate}
    Note that at the end of this algorithm, exactly one of $a=|X|$ and $b=|Y|$ is true.
    We say this matching is a \textit{success} if we ended with $a=|X|$, otherwise it is a \textit{failure}.
    \label{def:matching-2}
 \end{definition}
 \begin{definition}
    Note also that the matching is uniquely determined by $X,Y,S_1,\dots,S_{|X|},s,t$.
    If this matching is a success, we say $X$ is \textit{$(s,t,S_1,\dots,S_{|X|})$-matachable} (or has a \textit{$(s,t,S_1,\dots,S_{|X|})$-matching}) in $Y$, denoted
    \begin{equation}
      X\prec_{(s,t,S_1,\dots,S_{|X|})} Y.
      \label{}
    \end{equation}
    \label{def:matching-3}
  \end{definition}

  \begin{proposition}
    \label{prop:match-1}
    Let $S_1,\dots, S_{\delta n}$ be subsets of $[K]$ of size exactly $\lambda-1$.
    Let $\tau = \tau_1\frown\cdots\frown\tau_{\delta n}\in\calD(nL)$ be a deletion pattern such that for all $i$, $\tau_i\in\calD(L)$ is $(\lambda-1)$-admissible and in particular preserves $g_j$ for all $j\notin S_i$.
    Suppose we have $X\in[K]^{\delta n}$ and $Y\in[K]^n$ such that $\tau(\psi(X))\sqsubseteq \psi(Y)$ (recall $\sqsubseteq$ is the subsequence relation).
    Then $X\prec_{(2^{\lambda},\sqrt{R},S_1,\dots,S_{\delta n})} Y$.
  \end{proposition}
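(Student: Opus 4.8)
The plan is to fix an increasing embedding $\phi$ witnessing $\tau(\psi(X))\sqsubseteq\psi(Y)$ and to run the matching algorithm of Definition~\ref{def:matching-2} ``against'' $\phi$, showing by induction on the number of steps that the algorithm's pointer $b$ into $Y$ never overtakes the inner codeword of $\psi(Y)$ into which $\phi$ maps the block of $X$ currently being processed. Since $\phi$ maps all of $\tau(\psi(X))$ into the $n$ inner codewords of $\psi(Y)$, this forces the algorithm to exhaust $X$ before it exhausts $Y$, i.e.\ to terminate with $a=|X|=\delta n$, which is exactly success of the matching; hence $X\prec_{(2^\lambda,\sqrt R,S_1,\dots,S_{\delta n})}Y$.

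Concretely, write $\tau(\psi(X))=B_1B_2\cdots B_{\delta n}$ with $B_i=\tau_i(g_{X_i})$, and $\psi(Y)=g_{Y_1}\cdots g_{Y_n}$; for each $i$ let $[\ell_i,r_i]$ be the minimal interval of inner-codeword indices of $\psi(Y)$ whose union contains $\phi(B_i)$, so that $1\le\ell_1$, $r_i\le\ell_{i+1}$, and $r_{\delta n}\le n$. I would then record two structural facts. First a \emph{length} fact: since each $\tau_i$ is $(\lambda-1)$-admissible, $|B_i|=L-|\tau_i|\ge L(2^{-\lambda}+R^{-1/2})>L/2^\lambda$, so any $2^\lambda$ consecutive blocks have total length exceeding $L$ and cannot all be mapped by $\phi$ into a single length-$L$ inner codeword; thus $r_{i+2^\lambda-1}\ge\ell_i+1$. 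Second a \emph{runs} fact: if $X_i\notin S_i$ then $\tau_i$ preserves $g_{X_i}$, so $B_i$ has at least $2R^{K+\frac12-X_i}$ runs, and using that a subsequence has no more runs than its host and that a concatenation has at most the sum of the run counts of its parts, whenever every $g_{Y_j}$ with $j\in[\ell_i,r_i]$ satisfies $Y_j>X_i$ (so $\mathrm{runs}(g_{Y_j})\le 2R^{K-X_i}$) one gets $r_i-\ell_i+1\ge\sqrt R$. These two facts are exactly what calibrate the thresholds $s=2^\lambda$ and $t=\sqrt R$ in the algorithm.

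With these in hand I would carry out the induction, maintaining the invariant that, roughly, $b$ stays at or below the inner codeword into which $\phi$ sends the not-yet-consumed portion of $B_a$; in particular $b\le r_a$ always, and $b\le\ell_a$ immediately after an A-move into block $a$. The case analysis tracks the four rules (a)--(d) of step~2: under rule~(c) ($b$ fixed, advance to block $a+1$) the invariant carries over via $b\le r_a\le\ell_{a+1}$; a maximal run of rule-(d) B-moves occurs only while $Y_b>X_a$ and $X_a\notin S_a$, is terminated after $t=\sqrt R$ of them by rule~(b), and by the runs fact remains inside $[\ell_a,r_a]$; and a rule-(a) B-move occurs only after $2^\lambda$ consecutive A-moves, during which $b$ was unchanged and at most $\ell_{a-2^\lambda}$, so by the length fact the single increment keeps $b\le r_{a-1}\le\ell_a$, re-establishing the ``fresh'' invariant. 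Chaining through termination yields $b\le r_{\delta n}\le n$ at the step where $a$ first equals $\delta n$, so the algorithm halts in the success state and the proposition follows.

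The step I expect to be the main obstacle is exactly this bookkeeping: making the invariant robust enough to survive the interaction of the two forced-move rules and, above all, the ``leftover-bits'' phenomenon (the first bits of $B_{a+1}$ may lie in the same inner codeword of $\psi(Y)$ as the last bits of $B_a$, so $b$ can sit slightly inside $[\ell_{a+1},r_{a+1}]$) without that slack accumulating across the $\delta n$ blocks and pushing $b$ past $n$. Handling this cleanly is what forces the indirection through the \emph{matching} relation rather than arguing about subsequences directly, and it is where the generous choices $K=2^{\lceil 2^{\lambda+5}/\delta\rceil}$ and $R=4K^4$ are used to absorb all lower-order losses.
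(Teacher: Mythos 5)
Your proposal is correct and follows essentially the same route as the paper: your intervals $[\ell_i,r_i]$ are the paper's $[\kappa_0(i),\kappa_f(i)]$, your ``length'' and ``runs'' facts are parts 3 and 1 of the paper's Lemma~\ref{lem:proper}, and your invariant that the algorithm's pointer $b$ never overtakes the embedding is exactly the paper's notion of $X$-proper and $Y$-proper indices, established by the same induction along the algorithm's steps. The paper merely formalizes the ``leftover-bits'' bookkeeping you flag via the eight functions $\alpha_0,\alpha_f,\beta_0,\beta_f,\iota_0,\iota_f,\kappa_0,\kappa_f$ and a case analysis on $\prev_\calM$.
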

  \begin{proof}
    Let $s=2^\lambda, t = \sqrt{R}$.
    Run the matching algorithm defined above to obtain a matching of $X$ and $Y$.
    Let $\calM$ be the set of all $(a,b)$ reached by some step of the algorithm.
    We wish to show this matching is a success, i.e. that there exists some $b$ such that $(|X|,b)\in\calM$, or, equivalently, there does not exist $a$ such that $(a,|Y|)\in \calM$.

    Since $\tau(\psi(X))\sqsubseteq \psi(Y)$, we can find $\tau(\psi(X))$ as a subsequence of $\psi(Y)$ by greedily matching the bits of $\tau(\psi(X))$ with the bits of $\psi(Y)$.
    Let $\calN$ be the set of $(i,k)$ such that some bit of $\tau_i(\psi(X_i))$ is matched with some bit in $\psi(Y_k)$. 
    We first establish some basic facts about $\calM,\calN$.
    \begin{fact}
      \label{fact:match}
      \begin{enumerate}
        \item $(|X|,|Y|)\notin \calM$.
        \item $\calM,\calN\subseteq \{1,\dots,|X|\}\times \{1,\dots,|Y|\}$.
        \item For all $a\sar\in\{1,\dots,|X|\}, b\sar\in\{1,\dots,|Y|\}$, we have $\{b:(a\sar,b)\in\calM\}$ and $\{a:(a,b\sar)\in\calM\}$ are intervals of consecutive integers of lengths at most $t+1$ and $s+1$, respectively.
        \item For all $i\sar\in\{1,\dots,|X|\}, k\sar\in\{1,\dots,|Y|\}$, we have $\{k:(i\sar,k)\in\calN\}$ and $\{i:(i,k\sar)\in\calN\}$ are intervals of consecutive integers.
        \item Let $\le_\calM$ be a relation on $\calM$ such that $(a,b)\le_\calM(a',b')$ iff $a\le a'$ and $b\le b'$.
          Then $\calM$ is totally ordered under $\le_\calM$.
          As such, we can define $\next_\calM(a,b)$ and $\prev_\calM(a,b)$ to be the next larger and next smaller element after $(a,b)$ under $\le_\calM$, respectively.
          Then $\next_\calM(a,b)\in\{(a+1,b),(a,b+1)\}$ and $\prev_\calM(a,b)\in\{(a-1,b),(a,b-1)\}$.
        \item Let $\le_\calN$ be a relation on $\calN$ such that $(i,k)\le_\calN(i',k')$ iff $i\le i'$ and $k\le k'$.
          Then $\calN$ is totally ordered under $\le_\calN$.
          As such, we can define $\next_\calN(i,k)$ and $\prev_\calN(i,k)$ to be the next larger and next smaller element after $(i,k)$ under $\le_\calN$, respectively.
          Then $\next_\calN(i,k)\in\{(i+1,k),(i,k+1),(i+1,k+1)\}$ and $\prev_\calN(i,k)\in\{(i-1,k),(i,k-1),(i-1,k-1)\}$.
        \item If $(a,b),(a',b')\in\calM$, we never have both $a<a'$ and $b'<b$.
        \item If $(i,k),(i',k')\in\calN$, we never have both $i<i'$ and $k'<k$.
      \qedhere
      \end{enumerate}
    \end{fact}
    For $a\in\{1,\dots,|X|\}$ and $b\in\{1,\dots,|Y|\}$, define
    \begin{align}
      &\alpha_0(b) = \min\{a:(a,b)\in\calM\} &\qquad& \alpha_f(b) = \max\{a:(a,b)\in\calM\} \nonumber\\
      &\beta_0(a) = \min\{b:(a,b)\in\calM\} &\qquad& \beta_f(a) = \max\{b:(a,b)\in\calM\} \nonumber\\
      &\iota_0(b) = \min\{i:(i,b)\in\calN\} &\qquad& \iota_f(b) = \max\{i:(i,b)\in\calN\} \nonumber\\
      &\kappa_0(a) = \min\{k:(a,k)\in\calN\} &\qquad& \kappa_f(a) = \max\{k:(a,k)\in\calN\}.
      \label{}
    \end{align}
    \begin{figure}[t!]
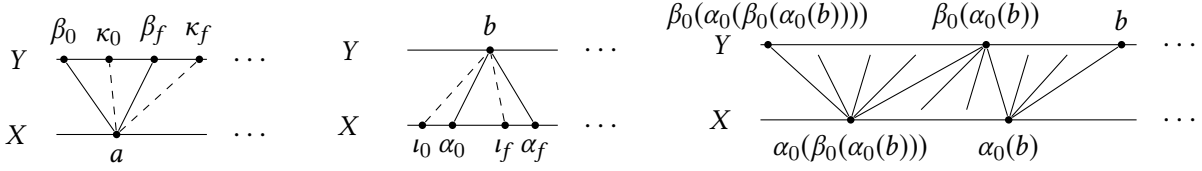

      \begin{subfigure}[b]{0.27\textwidth}
        \tikzcenter{
          \draw (0,1) -- (2,1);
          \draw (0,0) -- (2,0);
          \draw (-0.25,0) node[left] {$X$};
          \draw (-0.25,1) node[left] {$Y$};
          \draw (3,0) node[left] {$\dots$};
          \draw (3,1) node[left] {$\dots$};
          \draw (0.8,0) \ptbelow{a}{$a$};
          \draw (0.1,1) \ptabove{1}{$\beta_0 $};
          \draw (0.7,1) \ptabove{2}{$\kappa_0$};
          \draw (1.3,1) \ptabove{4}{$\beta_f $};
          \draw (1.9,1) \ptabove{3}{$\kappa_f$};
          \draw (a)--(1);
          \draw[dashed] (a)--(2);
          \draw[dashed] (a)--(3);
          \draw (a)--(4);
        }
      \end{subfigure}
      \begin{subfigure}[b]{0.27\textwidth}
          \tikzcenter{
            \draw (0,1) -- (2,1);
            \draw (0,0) -- (2,0);
            \draw (-0.5,0) node[left] {$X$};
            \draw (-0.5,1) node[left] {$Y$};
            \draw (3,0) node[left] {$\dots$};
            \draw (3,1) node[left] {$\dots$};
            \draw (0.2,0) \ptbelow{i0}{$\iota_0$};
            \draw (0.6,0) \ptbelow{a0}{$\alpha_0$};
            \draw (1.7,0) \ptbelow{af}{$\alpha_f$};
            \draw (1.3,0) \ptbelow{if}{$\iota_f$};
            \draw (1.1,1) \ptabove{b}{$b$};
            \draw[dashed] (i0)--(b);
            \draw (a0)--(b);
            \draw (af)--(b);
            \draw[dashed] (if)--(b);
          }
      \end{subfigure}
      \begin{subfigure}[b]{0.45\textwidth}
        \tikzcenter{
          \draw (0,1) -- (5,1);
          \draw (0,0) -- (5,0);
          \draw (-0.25,0) node[left] {$X$};
          \draw (-0.25,1) node[left] {$Y$};
          \draw (6,0) node[left] {$\dots$};
          \draw (6,1) node[left] {$\dots$};
          \draw (1.2,0) \ptbelow{abab}{$\alpha_0(\beta_0(\alpha_0(b)))$};
          \draw (2.0,0) node[draw=none] (b1) {};
          \draw (2.7,0) node[draw=none] (b2) {};
          \draw (3.3,0) \ptbelow{ab}{$\alpha_0(b)$};
          \draw (0.1,1) \ptabove{babab}{$\beta_0(\alpha_0(\beta_0(\alpha_0(b))))$};
          \draw (0.7,1) node[draw=none] (a1) {};
          \draw (1.5,1) node[draw=none] (a2) {};
          \draw (2.2,1) node[draw=none] (a3) {};
          \draw (3.0,1) \ptabove{bab}{$\beta_0(\alpha_0(b))$};
          \draw (3.6,1) node[draw=none] (c1) {};
          \draw (4.2,1) node[draw=none] (c2) {};
          \draw (4.8,1) \ptabove{b}{$b$};
          \draw (babab)--(abab);
          \draw (abab)--(a1);
          \draw (abab)--(a2);
          \draw (abab)--(a3);
          \draw (abab)--(bab);
          \draw (bab)--(b1);
          \draw (bab)--(b2);
          \draw (bab)--(ab);
          \draw (ab)--(c1);
          \draw (ab)--(c2);
          \draw (ab)--(b);
        }
      \end{subfigure}
      \caption{Illustrations of $\alpha_0(b), \alpha_f(b), \beta_0(a),\beta_f(a), \iota_0(b), \iota_f(b),\kappa_0(a),\kappa_f(a)$}
      \label{fig:0}
    \end{figure}
    See Figure~\ref{fig:0} for illustrations of the behavior of these eight functions.
    We first establish a few facts about the notation $\alpha_0,\beta_0,\dots$ that are helpful for developing intuition and are also useful later.
    These proofs are more involved than Fact~\ref{fact:match} and are provided.
    \begin{lemma}
      \label{lem:proper}
      \begin{enumerate}
        \item For all $a\in\{1,\dots,|X|\}$, if $\kappa_f(a)-\kappa_0(a) < \sqrt{R}$, then there exists $b'\in[\kappa_0(a),\kappa_f(a)]$ such that $(a,b')$ is type-A.
        \item For all $a\in\{1,\dots,|X|\}$ we have $\beta_f(a) - \beta_0(a) \le \sqrt{R}$ and for all $\beta_0(a)\le b' < \beta_f(a)$ we have  $(a,b')$ is type-B 
        Furthermore, if $\beta_f(a)-\beta_0(a) < \sqrt{R}$, then $(a,\beta_f(a))$ is type-A.
        \item For all $b\in\{1,\dots,|Y|\}$, we have $\iota_f(b) - \iota_0(b) \le 2^{\lambda}$.
        \item For all $b\in\{1,\dots,|Y|\}$, we have $(i', b)$ is type-A for all $i'\in [\iota_0(b) + 1, \iota_f(b) - 1]$.
        \item For all $b\in\{1,\dots,|Y|\}$, we have $\alpha_f(b)-\alpha_0(b)\le 2^{\lambda}$ and for all $a'\in[\alpha_0(b), \alpha_f(b)-1]$ we have $(a',b)$ is type-A.
          Furthermore, if $\alpha_f(b)-\alpha_0(b) < 2^{\lambda}$, then $(\alpha_f(b),b)$ is type-B.
      \end{enumerate}
    \end{lemma}
    \begin{proof}
      Parts 2 and 5 follow from Definition~\ref{def:matching-2}.

      For part 1, suppose for contradiction that $(a,b')$ is type-B for all $\kappa_0(a)\le b'\le \kappa_f(a)$.
      Then $X_a\notin S_a$ and $X_a < Y_{b'}$ for all such $b'$.
      This means $\tau_a(\psi(X_a))$ has at least $2R^{K+\half-X_a}$ runs while $\psi(Y_{b'})$ has at most $2R^{K+1-(X_a+1)}$ runs for $\kappa_0(a)\le b'\le \kappa_f(a)$.
      On the other hand, we have
      \begin{equation}
        \tau_a(\psi(X_a)) \sqsubseteq \psi\pa{Y_{\kappa_0(a)}\dots Y_{\kappa_f(a)}}.
        \label{eq:proper-1}
      \end{equation}
      As $\kappa_f(a) - \kappa_0(a) < \sqrt{R}$ this means the right side of \eqref{eq:proper-1} has less than $\sqrt{R}\cdot 2R^{K-X_a} = 2R^{K+\half-X_a}$ runs while the left side has at least that many runs, a contradiction.

      For part 3, suppose for contradiction that $\iota_f(b) - \iota_0(b) - 1 \ge 2^{\lambda}$.
      Since $\psi(Y_b)$ contains $\prod_{i'=\iota_0(b)+1}^{\iota_f(b)-1}\tau_{i'}(\psi(X_{i'}))$ as a strict subsequence ($\psi(Y_b)$ additionally contains at least one bit from each of $\tau_{\iota_0}(\psi(X_{\iota_0}))$ and $\tau_{\iota_f}(\psi(X_{\iota_f}))$), we have
      \begin{equation}
        L + 2
        \ \le \ (\iota_f(b)-\iota_0(b)-1)\cdot\frac{L}{2^\lambda} + 2
        \ \le \ \pa{\sum_{i'=\iota_0+1}^{\iota_f-1}\abs{\tau_{i'}(\psi(X_{i'}))} } + 2 
        \ \le \ \abs{\psi(Y_{b})}
        \ = \ L,
        \label{eq:match-2-1}
      \end{equation}
      a contradiction.

      For part 4, suppose for contradiction that $(i',b)$ is type-B for some $\iota_0(b) < i'<\iota_f(b)$.
      Thus $X_{i'} \notin S_{i'}$ and $X_{i'} < Y_{b}$.
      In particular, $\tau_{i'}(\psi(X_{i'}))$ has at least $2R^{K+\half-X_{i'}}$ runs, which is more than the at-most-$2R^{K-X_{i'}}$ runs of $\psi(Y_b)$.
      However, $\iota_0(b) < i'<\iota_f(b)$, so $(i',b)\in\calN$ and in particular $\tau_{i'}(\psi(X_{i'}))\sqsubseteq\psi(Y_b)$, which is a contradiction.
      Note that $i'=\iota_0(b)$ and $i'=\iota_f(b)$ do not guarantee a contradiction because for such $i'$, some bits of $\tau_{i'}(\psi(X_{i'}))$ might be matched with other inner codewords in $\psi(Y)$.
    \end{proof}

    The following definition of proper indices is introduced for convenience.
    Intuitively, indices of $Y$ are $Y$-proper if the bit-matching $\calN$ consumes corresponding indices of $X$ ``slower'' than in the algorithmic matching $\calM$, and indices of $X$ are $X$-proper if the bit-matching $\calN$ consumes corresponding indices of $Y$ ``faster'' than in the algorithmic matching $\calM$.
    \begin{definition}
      We say an index in $a\in\{1,\dots,|X|\}$ is \textit{$X$-proper} if 
      \begin{equation}
        \beta_0(a)\ \le\ \kappa_0(a), 
        \qquad\beta_f(a)\ \le\ \kappa_f(a)
        \label{}
      \end{equation}
      and we say an index $b\in\{1,\dots,|Y|\}$ is \textit{$Y$-proper} if
      \begin{equation}
        \qquad\iota_0(b)\ \le\ \alpha_0(b),  
        \qquad\iota_f(b)\ \le\ \alpha_f(b).
        \label{}
      \end{equation}
    \end{definition}
    \begin{figure}
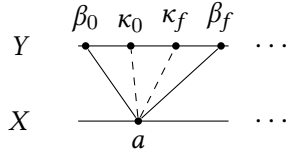

      \begin{center}
        \tikz{
          \draw (0,1) -- (2,1);
          \draw (0,0) -- (2,0);
          \draw (-0.5,0) node[left] {$X$};
          \draw (-0.5,1) node[left] {$Y$};
          \draw (3,0) node[left] {$\dots$};
          \draw (3,1) node[left] {$\dots$};
          \draw (0.8,0) \ptbelow{a}{$a$};
          \draw (0.1,1) \ptabove{1}{$\beta_0$};
          \draw (0.7,1) \ptabove{2}{$\kappa_0$};
          \draw (1.3,1) \ptabove{3}{$\kappa_f$};
          \draw (1.9,1) \ptabove{4}{$\beta_f$};
          \draw (a)--(1);
          \draw[dashed] (a)--(2);
          \draw[dashed] (a)--(3);
          \draw (a)--(4);
        }
      \end{center}
      \caption{Step 2, Assume FSOC $\beta_0(a)\le\kappa_0(a)\le \kappa_f(a) < \beta_f(a)$}
      \label{fig:1}
    \end{figure}
    \begin{claim}
      \label{clm:match-1}
      For all $a\in\{1,\dots,|X|\}$ and all $b\in\{1,\dots,|Y|\}$, $a$ is $X$-proper and $b$ is $Y$-proper.
    \end{claim}
    \begin{remark}
      First we illustrate how the Claim~\ref{clm:match-1} implies Proposition~\ref{prop:match-1}.
      Suppose for contradiction that Proposition~\ref{prop:match-1} is false.
      Then there exists some $a$ such that $(a,|Y|)\in\calM$ and for all $b$ we have $(|X|,b)\notin\calM$.
      In particular, $\alpha_f(b)<|X|$ for all $b\in\{1,\dots,|Y|\}$.
      By the claim, $\iota_f(|Y|)\le \alpha_f(|Y|)<|X|$, implying that no bits from $\tau_{|X|}(\psi(X_{|X|}))$ are matched with bits of $\psi(Y)$, a contradiction of $\tau(\psi(X))\sqsubseteq \psi(Y)$.
      \qedhere
    \end{remark}
    \begin{proof}
      \textbf{Step 1.} First, note that, as $(1,1)\in\calM,\calN$, we have $\alpha_0(1) = \beta_0(1) = \iota_0(1) = \kappa_0(1) = 1$.

      \textbf{Step 2.} 
      Next, we show that for all $a$, if $\beta_0(a)\le\kappa_0(a)$ then $a$ is $X$-proper. 
      That is, we show $\beta_f(a)\le \kappa_f(a)$.
      Suppose for contradiction we have $\kappa_f(a) < \beta_f(a)$ so that $\beta_0(a)\le\kappa_0(a)\le \kappa_f(a) < \beta_f(a)$ (see Figure~\ref{fig:1}).
      By Lemma~\ref{lem:proper} part 2, we have $\beta_f(a)-\beta_0(a)\le\sqrt{R}$ and $(a,b')$ is type-B for all $b'\in[\beta_0(a), \beta_f(a)-1]$.
      In particular, $(a,b')$ is type-B for all $b'\in[\kappa_0(a),\kappa_f(a)]$.
      As $\kappa_f(a) - \kappa_0(a) < \beta_f(a)-\beta_0(a)$, we have $\kappa_f(a) - \kappa_0(a) < \sqrt{R}$, so we can apply Lemma~\ref{lem:proper} part 1 to obtain that $(a,b')$ is type-A for some $b'\in[\kappa_0(a),\kappa_f(a)]$.
      This is a contradiction as all such $b'$ must be type-B.

      \textbf{Step 3.} 
      Next, we show that for all $b$, if $\iota_0(b)\le\alpha_0(b)$ and $\alpha_0(b)$ is $X$-proper, then $b$ is $Y$-proper.
      That is, we show $\iota_f(b)\le\alpha_f(b)$.
      Suppose for contradiction that $\alpha_f(b) < \iota_f(b)$ so that $\iota_0(b)\le\alpha_0(b) \le\alpha_f(b) < \iota_f(b)$ (see Figure~\ref{fig:2}).
      We have $\alpha_f-\alpha_0<\iota_f-\iota_0\le 2^{\lambda}$ by Lemma~\ref{lem:proper} part 3.
      Thus, by Lemma~\ref{lem:proper} part 5, $(\alpha_f(b),b)$ is type-B.
      By Lemma~\ref{lem:proper} part 4, $(i',b)$ is type-A for $i'\in[\iota_0(b)+1,\iota_f(b)-1]$.
      Since $\alpha_f(b)\in[\iota_0(b),\iota_f(b)-1]$ we must have $\alpha_f(b) = \iota_0(b)$, so $\iota_0(b)=\alpha_0(b)=\alpha_f(b)$ (See Figure~\ref{fig:3}).
      By definition of $\alpha_f(b)$, we have $\next_\calM(\alpha_f(b),b) = (\alpha_f(b),b+1)$ so $\beta_f\id{\alpha_f(b)} \ge b+1$.
      However, since we assumed $\alpha_f<\iota_f$, we have $\next_\calN(\alpha_f(b),b) = (\alpha_f(b)+1,b)$, so $\kappa_f(\alpha_f(b)) = b$.
      Thus
      \begin{equation}
        \beta_f(\alpha_0(b)) =  \beta_f(\alpha_f(b)) \ge b+1 > b = \kappa_f(\alpha_f(b)) = \kappa_f(\alpha_0(b)).
        \label{}
      \end{equation}
      On the other hand, $\beta_f\id{\alpha_0(b)}\le\kappa_f\id{\alpha_0(b)}$ by assumption that $\alpha_0(b)$ is $X$-proper, which is a contradiction.
      This covers all possible cases, completing Step 3.

      \begin{figure}
        \begin{subfigure}[b]{0.5\textwidth}
          \tikzcenter{
            \draw (0,1) -- (2,1);
            \draw (0,0) -- (2,0);
            \draw (-0.5,0) node[left] {$X$};
            \draw (-0.5,1) node[left] {$Y$};
            \draw (3,0) node[left] {$\dots$};
            \draw (3,1) node[left] {$\dots$};
            \draw (0.2,0) \ptbelow{i0}{$\iota_0$};
            \draw (0.6,0) \ptbelow{a0}{$\alpha_0$};
            \draw (1.3,0) \ptbelow{af}{$\alpha_f$};
            \draw (1.7,0) \ptbelow{if}{$\iota_f$};
            \draw (1.1,1) \ptabove{b}{$b$};
            \draw[dashed] (i0)--(b);
            \draw (a0)--(b);
            \draw (af)--(b);
            \draw[dashed] (if)--(b);
          }
          \caption{Assume FSOC $\iota_0(b)\le \alpha_0(b)\le \alpha_f(b) < \iota_f(b)$}
          \label{fig:2}
        \end{subfigure}
        \begin{subfigure}[b]{0.5\textwidth}
          \tikzcenter{
            \draw (0,1) -- (5,1);
            \draw (0,0) -- (5,0);
            \draw (-0.5,0) node[left] {$X$};
            \draw (-0.5,1) node[left] {$Y$};
            \draw (6,0) node[left] {$\dots$};
            \draw (6,1) node[left] {$\dots$};
            \draw (0.6,0) \ptbelow{a}{$\iota_0 = \alpha_0 = \alpha_f$};
            \draw (3.3,0) \ptbelow{i}{$\iota_f$};
            \draw (0.1,1) \ptabove{1}{$\beta_0   \id{\alpha_0}$};
            \draw (1.4,1) \ptabove{2}{$\kappa_0  \id{\alpha_0}$};
            \draw (3.1,1) \ptabove{3}{$b=\kappa_f\id{\alpha_0}$};
            \draw (4.8,1) \ptabove{4}{$\beta_f   \id{\alpha_0}$};
            \draw (a)--(1);
            \draw[dashed] (a)--(2);
            \draw[dashed] (a)--(3);
            \draw (a)--(4);
            \draw[dashed] (i)--(3);
          }
          \caption{$\iota_0(b) = \alpha_0(b) = \alpha_f(b) < \iota_f(b)$}
          \label{fig:3}
        \end{subfigure}
        \caption{Step 3}
      \end{figure}

      \begin{figure*}[t!]
        \begin{subfigure}[b]{0.5\textwidth}
          \tikzcenter{
            \draw (0,1) -- (3,1);
            \draw (0,0) -- (3,0);
            \draw (-0.5,0) node[left] {$X$};
            \draw (-0.5,1) node[left] {$Y$};
            \draw (4,0) node[left] {$\dots$};
            \draw (4,1) node[left] {$\dots$};
            \draw (0.3,0) \ptbelow{a}{$\alpha_0(b)$};
            \draw (1.3,0) \ptbelow{i}{$\iota_0(b)$};
            \draw (2.2,0.6) node[draw=none] (a1) {};
            \draw (2.2,0.4) node[draw=none] (i1) {};
            \draw (0.9,1) \ptabove{b}{$b$};
            \draw (2.4,1) \ptabove{k}{$\kappa_f\id{\alpha_0(b)}$};
            \draw (a)--(b);
            \draw[dashed] (a)--(k);
            \draw[dashed] (b)--(i);
            \draw[dashed] (b)--(i1);
            \draw (b)--(a1);
          }
          \caption{$b < \kappa_f(\alpha_0(b))$}
        \end{subfigure}
        \begin{subfigure}[b]{0.5\textwidth}
          \tikzcenter{
            \draw (0,1) -- (3,1);
            \draw (0,0) -- (3,0);
            \draw (-0.5,0) node[left] {$X$};
            \draw (-0.5,1) node[left] {$Y$};
            \draw (4,0) node[left] {$\dots$};
            \draw (4,1) node[left] {$\dots$};
            \draw (0.3,0) \ptbelow{a}{$\alpha_0(b)$};
            \draw (1.3,0) \ptbelow{i}{$\iota_0(b)$};
            \draw (2.2,0.6) node[draw=none] (a1) {};
            \draw (2.2,0.4) node[draw=none] (i1) {};
            \draw (0.9,1) \ptabove{b}{$b=\kappa_f\id{\alpha_0(b)}$};
            \draw[transform canvas={xshift=-1pt}] (a)--(b);
            \draw[dashed,transform canvas={xshift=1pt}] (a)--(b);
            \draw[dashed] (b)--(i);
            \draw[dashed] (b)--(i1);
            \draw (b)--(a1);
          }
          \caption{$b = \kappa_f(\alpha_0(b))$}
        \end{subfigure}
        \caption{Step 4}
        \label{fig:4}
      \end{figure*}

      \textbf{Step 4.}
      We prove that, for all $b\in\{1,\dots,|Y|\}$, if $\alpha_0(b)$ is $X$-proper, then $b$ is $Y$-proper. 
      By Step 3, it suffices to prove $\iota_0(b)\le \alpha_0(b)$.
      Suppose for contradiction that $\alpha_0(b) < \iota_0(b)$.
      Since $(\alpha_0(b),b)\in\calM$, we have $b\le\beta_f(\alpha_0(b))$.
      By assumption, $\alpha_0(b)$ is $X$-proper, so $\beta_f(\alpha_0(b))\le\kappa_f(\alpha_0(b))$, which means $b\le \kappa_f(\alpha_0(b))$.
      If $b< \kappa_f\id{\alpha_0}$, then we have $(\alpha_0(b),\kappa_f(\alpha_0(b)))\in\calN$.
      However, $(\iota_0(b),b)\in\calN$, contradicting Fact~\ref{fact:match} part 8.
      Thus, $\kappa_f(\alpha_0(b)) = b$.
      But then $(\alpha_0(b),b)=(\alpha_0(b),\kappa_f(\alpha_0(b)))\in\calN$ with $\alpha_0(b) < \iota_0(b)$, contradicting the minimality of $\iota_0(b)$.

      \textbf{Step 5.}
      By the same argument as Step 4, we have that, for all $a\in \{1,\dots,|X|\}$, if $\beta_0(a)$ is $Y$-proper, then $a$ is $X$-proper. 

      \textbf{Step 6.}
      We prove by strong induction that for pairs $(a,b)$, ordered by $<_\calM$, we have $a$ is $X$-proper and $b$ is $Y$-proper.
      Combining Steps 1,2, and 3, we have 
      \begin{align}
        \beta_0(1)\ \le\ \kappa_0(1),  
        \qquad\beta_f(1)\ \le\ \kappa_f(1),
        \qquad\iota_0(1)\ \le\ \alpha_0(1),  
        \qquad\iota_f(1)\ \le\ \alpha_f(1),
      \end{align}
      where the first and third inequalities are actually equalities arising from Step 1, the second inequality is established by Step 2, and the fourth is established by Step 3.
      Thus, 1 is both $X$-proper and $Y$-proper.

      Now suppose we have some pair $(a,b)\in\calM$ with $(1,1)<_\calM(a,b)$ and \eqref{eq:proper-1} has been established for all smaller pairs.
      If $\prev_\calM(a,b) = (a-1,b)$, then by the inductive hypothesis, we have $b$ is $Y$-proper.
      However, as $(a-1,b)\in\calM$, we have $(a,b-1)\notin\calM$, so we have $\beta_0(a) = b$.
      Thus $\beta_0(a)$ is $Y$-proper, so $a$ is $X$-proper by Step 5.
      Similarly, if $\prev_\calM(a,b) = (a,b-1)$, then by the inductive hypothesis, we have $a$ is $X$-proper.
      Thus $\alpha_0(b)=a$ is $X$-proper, so $b$ is $Y$-proper by Step 4.
      This completes the proof of Claim~\ref{clm:match-1}, proving Proposition~\ref{prop:match-1}.
    \end{proof}
  \end{proof}

  The following proposition is the key result of this subsection.
  Following our approach, it should be possible to prove this proposition for any choice of $S_1,\dots,S_{\delta n}$, not just $[\lambda-1],\dots,[\lambda-1]$.
  However, because of Lemma~\ref{lem:match-4}, which tells us that $[\lambda-1],\dots,[\lambda-1]$ is the ``worst'' possible choice of $S_1,\dots,S_{\delta n}$, it suffices to prove the proposition as stated.

  \begin{proposition}
    \label{lem:match-3}
    There exists a constant $\beta > 0$ such that for any fixed deletion pattern $\sigma\in\calD(n,(1-\delta)n)$ we have
    \begin{equation}
      \Pr_{X,Y\sim U([K]^n)}[\sigma(X)\prec_{(2^\lambda,\sqrt R,[\lambda-1],\dots,[\lambda-1])} Y] \ < \ 2^{-\beta n}.
      \label{eq:match-3}
    \end{equation}
  \end{proposition}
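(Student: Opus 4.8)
The plan is to recast the matching relation as a one-dimensional random walk and prove a concentration bound on the number of B-moves it makes.

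\emph{Reduction and set-up.} Since $X\sim U([K]^n)$ and $\sigma$ merely retains $\delta n$ of the $n$ coordinates, $\sigma(X)$ is distributed as $U([K]^{\delta n})$ and is independent of $Y$, so it suffices to bound $\Pr[X'\prec_{(2^\lambda,\sqrt R,[\lambda-1],\dots,[\lambda-1])}Y]$ for independent $X'\sim U([K]^{\delta n})$ and $Y\sim U([K]^n)$. Run the matching algorithm of Definition~\ref{def:matching-2}, and for $a=1,\dots,\delta n-1$ let $B_a$ be the number of B-moves the walk makes while its first coordinate equals $a$ (extend $Y$ with i.i.d.\ uniform symbols so that $B_a$ is always defined). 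The algorithm never makes more than $t=\sqrt R$ consecutive B-moves, so $0\le B_a\le\sqrt R$. If the matching succeeds, the first coordinate reaches $\delta n$ while the second is still below $n$, so $\sum_{a=1}^{\delta n-1}B_a<n$; hence $\Pr[\text{success}]\le\Pr[\sum_{a=1}^{\delta n-1}B_a<n]$. The point is that $\sum_a B_a$ is typically \emph{much} larger than $n$: a symbol $X'_a=k\ge\lambda$ forces the walk to skip past roughly $\Geometric(k/K)$ symbols of $Y$ before it meets one that is $\le k$, and $\frac1K\sum_{k=\lambda}^K (K/k)\approx\ln(K/\lambda)$, which by the choice $K=2^{\lceil 2^{\lambda+5}/\delta\rceil}$ exceeds $1/\delta$ by a large constant factor; symbols $X'_a\le\lambda-1$ are always type-A and contribute essentially nothing, and forced B-moves only increase $B_a$, so morally $\E[\sum_a B_a]\approx\delta n\cdot\ln(K/\lambda)\gg n$.

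\emph{Martingale.} Reveal the coordinates of $X'$ and $Y$ in the order the algorithm inspects them, and let $\calF_{a-1}$ be the information known when the walk first reaches first-coordinate $a$. Set $M_0=0$ and $M_a=M_{a-1}+\big(B_a-\E[B_a\mid\calF_{a-1}]\big)$; this is a martingale with $|M_a-M_{a-1}|\le\sqrt R$. Conditioned on $\calF_{a-1}$, the symbol $X'_a$ is fresh and uniform on $[K]$, and whenever $X'_a=k\ge\lambda$ and block $a$ produces a B-move, the $Y$-symbols inspected afterwards in that block are fresh i.i.d.\ uniform, so the number of B-moves in block $a$ stochastically dominates a $\Geometric(k/K)$ truncated at $t-1$; since $t=\sqrt R\gg K$ the truncation loses only an additive $o(1)$. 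I would then establish $\sum_{a=1}^{\delta n-1}\E[B_a\mid\calF_{a-1}]\ge c'n$ for a constant $c'>1$, after which $\{\sum_a B_a<n\}\subseteq\{M_{\delta n-1}<-(c'-1)n\}$, and Azuma's inequality (Lemma~\ref{lem:azuma}) gives $\Pr[M_{\delta n-1}<-(c'-1)n]\le\exp(-(c'-1)^2n^2/(2R(\delta n-1)))=2^{-\Omega(n)}$, proving the proposition with $\beta$ any sufficiently small positive constant (it depends only on $p$, through $R,\delta,c'$).

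\emph{The main obstacle.} The delicate step is the lower bound $\sum_a\E[B_a\mid\calF_{a-1}]\ge c'n$, which is exactly the ``lack-of-independence'' issue flagged in \S\ref{sec:2}: the column the walk occupies when it enters block $a$ is inherited from the end of block $a-1$, so its $Y$-symbol is not fresh and is biased toward small values (it is precisely the symbol that halted the previous B-run), and the forced-move rules interact awkwardly with block boundaries; consequently $\E[B_a\mid\calF_{a-1}]$ cannot be bounded below by a fixed constant uniformly in $a$. The fix, as the overview indicates, is to relax by a constant factor --- for instance by carrying the single stale $Y$-symbol as bookkeeping (a small potential function of its value, or discarding the B-moves made before the first \emph{fresh} $Y$-symbol inspected in block $a$) --- and then to observe that the surviving geometric tail still has mean $\Theta(\ln(K/\lambda))=\Omega(2^\lambda/\delta)\gg 1/\delta$, so that the amortized expected number of B-moves per block remains bounded below by a constant exceeding $1/\delta$. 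This is where the slack deliberately built into the choices of $K$, $R$, $\lambda$, and of the matching parameters $(2^\lambda,\sqrt R)$ is consumed.
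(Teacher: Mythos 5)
Your overall strategy is the paper's: reduce to independent $X'\sim U([K]^{\delta n})$ and $Y\sim U([K]^n)$, show that the second coordinate of the matching walk advances so fast that it exhausts $Y$ before the first coordinate exhausts $X'$, and finish with Azuma. The gap is the step you yourself flag as delicate: the bound $\sum_a\E[B_a\mid\calF_{a-1}]\ge c'n$ is asserted but not proven, and with your block decomposition (one block per $X$-index) the per-block conditional expectation genuinely fails to be bounded below. Concretely, if the stale symbol $Y_b$ at the start of block $a$ equals $\lambda$, then $(a,b)$ is type-A for \emph{every} value of $X_a$, so $B_a=0$ deterministically, and the walk sits at that same column for up to $s-1$ further blocks before a B-move is forced; a whole run of blocks contributes nothing. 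Azuma needs the submartingale property pointwise, i.e.\ for every history, so an ``on average over histories'' or amortized bound does not plug in, and neither of your proposed patches (a potential function on the stale symbol, or discarding pre-fresh B-moves) is carried out or obviously repairs the fact that the zero-contribution blocks come in runs.

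The paper's resolution is to change the decomposition rather than patch it: it tracks the walk through alternating maximal A-runs and B-runs via $A_i=\alpha_f(B_{i-1})$, $B_i=\beta_f(A_{i-1})$. Rule 2(a) forces $A_{i+1}-A_i\le s=2^\lambda$, so there are at least about $\delta n/2^\lambda$ rounds, and each round's B-run is governed by the single symbol $X_{A_{i+1}}$ matched against fresh symbols of $Y$, giving an increment distributed as $\min(\Geometric(X_{A_{i+1}}/K),\sqrt R)$. Conditioned on the history, $X_{A_{i+1}}$ is either uniform on $[K]$ (round used all $s$ A-moves; Lemma~\ref{lem:geom-1}) or uniform on $[\lambda,Y_{B_i}]$ (round stopped early because $(A_{i+1},B_i)$ is type-B; Lemma~\ref{lem:geom-2}) --- and in the latter case the staleness of $Y_{B_i}$ pushes $X_{A_{i+1}}$ \emph{smaller}, which only increases the truncated-geometric mean. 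This yields $\E[B_{i+1}-B_i\mid\text{history}]>(\log K)/4$ for every history, a genuine pointwise submartingale bound, with the factor $s=2^\lambda$ lost to the grouping absorbed by the slack $\log K\ge 2^{\lambda+5}/\delta$. Your amortization intuition and parameter accounting are right; the missing idea is that the grouping into forced-move rounds, not a potential function, is what converts the amortized gain into the conditional-expectation bound Azuma requires.
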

  \begin{proof}
    Let $s=2^\lambda$.
    With hindsight, let $\beta = \frac{\log K}{16R}$.
    It suffices to prove
    \begin{equation}
      \Pr_{\substack{X\sim U([K]^{\delta n})\\Y\sim U([K]^n)}} \left[X\prec_{(2^\lambda,\sqrt R,[\lambda-1],\dots,[\lambda-1])} Y\right] \ < \ 2^{-\beta n}.
      \label{eq:match-3-1}
    \end{equation}
    This suffices because for any $\sigma$, the distribution of $\sigma(X)$ for $X\sim U([K]^n)$ is the same as $X\sim U([K]^{\delta n})$.

    Let $X_1,\dots,X_{\delta n},Y_1,Y_2,\dots$ be independently chosen from $[K]$.
    Let $X = X_1\dots X_{\delta n}$, $Y = Y_1\dots Y_n$, and $Y_\infty=Y_1Y_2\dots$ so that $X\sim U([K]^{\delta n})$ and $Y\sim U([K]^n)$.
    Construct a $(2^\lambda,\sqrt R,S_1,\dots,S_{\delta n})$-matching of $X$ in $Y_\infty$.
    Note that, as $Y_\infty$ is an infinite random string, $\calM$ succeeds almost surely, i.e. ends in $(|X|,b)$ for some integer $b$.
   
    Let $\calM$ be the set of all reached states $(a,b)$ in the matching, and let $\alpha_f(b) = \max\{a:(a,b)\in\calM\}$ and $\beta_f(a) = \max\{b:(a,b)\in\calM\}$ as in Proposition~\ref{prop:match-1}.
    Let $A_0 = 1, B_0 = 1$.
    For $i\ge 1$, set $A_i = \alpha_f(B_{i-1})$ and $B_i = \beta_f(A_{i-1})$.
    As $A_i - A_{i-1} \le s$ for all $i \ge 1$, we have $A_i$ is well defined for $i\le \delta n/s - 1$. 
    By the definition of matching, $\calM'$ succeeds if and only if $\calM$ succeeds and the final position $(|X|,\beta_f(|X|))$ satisfies $\beta_f(|X|)<|Y|$.
    It thus suffices to prove
    \begin{equation}
      \Pr\left[  B_{\floor{\delta n/s}-1} < |Y| \right] < 2^{-\beta n}.
      \label{}
    \end{equation}

    The key idea of this proof is that the $B_i$'s grow much faster than the $A_i$'s, so that the $B_i$'s ``run out of indices in $Y$'' faster than the $A_i$'s ``run out of indices in $X$'', even though $Y$ is longer than $X$.
    In particular, by definition of a matching, we have $A_{i+1}-A_i\le s=2^\lambda$, but, on the other hand, we show that $B_{i+1}-B_i$ is, in expectation, $\Omega(\log K)$ (see Figure~\ref{fig:10}).

    \begin{figure}
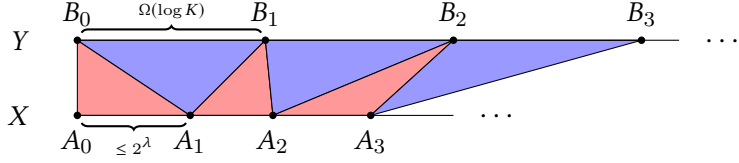

      \tikzcenter{
        \draw (0,1) -- (8,1);
        \draw (0,0) -- (5,0);
        \draw (-0.5,0) node[left] {$X$};
        \draw (-0.5,1) node[left] {$Y$};
        \draw (6,0) node[left] {$\dots$};
        \draw (9,1) node[left] {$\dots$};
        \draw (0,0) \ptbelow{a1}{$A_0$};
        \draw (1.5,0) \ptbelow{a2}{$A_1$};
        \draw (2.6,0) \ptbelow{a3}{$A_2$};
        \draw (3.9,0) \ptbelow{a4}{$A_3$};
        \draw (0,1) \ptabove{b1}{$B_0$};
        \draw (2.5,1) \ptabove{b2}{$B_1$};
        \draw (5,1) \ptabove{b3}{$B_2$};
        \draw (7.5,1) \ptabove{b4}{$B_3$};
        \begin{scope}[on background layer]
          \draw[fill=red!40!white,] (a1.center) -- (b1.center) -- (a2.center) -- cycle;
          \draw[fill=blue!40!white,] (b1.center) -- (a2.center) -- (b2.center) -- cycle;
          \draw[fill=red!40!white,] (a2.center) -- (b2.center) -- (a3.center) -- cycle;
          \draw[fill=blue!40!white,] (b2.center) -- (a3.center) -- (b3.center) -- cycle;
          \draw[fill=red!40!white,] (a3.center) -- (b3.center) -- (a4.center) -- cycle;
          \draw[fill=blue!40!white,] (b3.center) -- (a4.center) -- (b4.center) -- cycle;
        \end{scope}
				\draw [
						thick,
						decoration={
								brace,
								mirror,
								raise=0.1cm
						},
						decorate
				] (a1) -- (a2) 
        node [pos=0.5,anchor=north,yshift=-0.15cm] {\tiny{$\le 2^\lambda$}};
				\draw [
						thick,
						decoration={
								brace,
								raise=0.1cm
						},
						decorate
				] (b1) -- (b2) 
        node [pos=0.5,anchor=south,yshift=0.15cm] {\tiny{$\Omega(\log K)$}};
      }
      \caption{$A_i$'s and $B_i$'s behavior}
      \label{fig:10}
    \end{figure}

    We have two technical lemmas.
    The proofs are straightforward, and we include them in Appendix~\ref{app:3} for completeness.
    \begin{lemma}
      Let $J$ be chosen uniformly from $[K]$.
      Let $D$ be a random variable that is 1 if $J\in[\lambda-1]$ and, conditioned on a fixed $J\ge \lambda$, is distributed as $\min(\Geometric(J/K),\sqrt R)$.
      Then $\E[D]\ge (\log K)/4$.
      \label{lem:geom-1}
    \end{lemma}
    \begin{lemma}
      Let $\lambda'\in[\lambda,K]$ and let $J$ be chosen uniformly from $\{\lambda,\lambda+1,\dots,\lambda'\}.$
      Let $D$ be the random variable that, conditioned on a fixed $J$, is distributed as $\min(\Geometric(J/K),\sqrt R)$.
      Then $\E[D]\ge (\log K)/4$.
      \label{lem:geom-2}
    \end{lemma}
    \begin{claim}
      Let $i\ge 1$.
      For any fixed $A_0,\dots,A_i, B_0,\dots, B_i, X_1,\dots, X_{A_i}, Y_1,\dots, Y_{B_i}$, we have
      \begin{equation}
        \E\left[B_{i+1} - B_i | A_1,\dots, A_i, B_1,\dots, B_i, X_1,\dots, X_{A_i}, Y_1,\dots, Y_{B_i}\right] > \frac{\log K}{4}.
        \label{}
      \end{equation}
    \end{claim}
    \begin{proof}
      It suffices to prove that if we additionally condition on $A_{i+1} - A_i < s$, then the expectation is at least $(\log K)/4$, and that the same is true if we condition on $A_{i+1} - A_i = s$.

      First, note that, for $1\le b<\sqrt{R}$, we have $B_{i+1} = B_i + b$ if and only if $(A_{i+1}, B_i+j)$ is type-B for $j\in\{1,\dots,B_i+b-1\}$ and $(A_{i+1},B_i+b)$ is type-A. 
      If no such $b$ exists, we have $B_{i+1} - B_i = \sqrt{R}$.
      Thus, conditioned on fixed $A_0,\dots,A_{i+1}, B_0,\dots, B_i, X_1,\dots, X_{A_{i+1}}, Y_1,\dots, Y_{B_i}$, we have $B_{i+1}-B_i$ is distributed as $\min(\Geometric(X_{A_{i+1}}/K),\sqrt R)$.

      Suppose we condition on $A_{i+1} - A_i = s$.
      This is equivalent to saying $(A_i+\ell,B_i)$ is type-A for $\ell=1,\dots,s-1$.
      However, this assertion depends only on $X_{A_i}, X_{A_i+1},\dots, X_{A_i+s-1}$, which are independent of $X_{A_{i+1}}$, so we have $X_{A_{i+1}}$ is still distributed uniformly on $[K]$.
      If $X_{A_{i+1}}\in[\lambda-1]$, then $B_{i+1}-B_i=1$, otherwise $B_{i+1}-B_i$ is distributed as $\min(\Geometric(X_{A_{i+1}}/K),\sqrt R)$, so by Lemma~\ref{lem:geom-1} on $D=B_{i+1}-B_i$, we have
      \begin{equation}
        \E[B_{i+1}-B_i|A_{i+1}-A_i=s,A_1,\dots, A_i, B_1,\dots, B_i, X_1,\dots, X_{A_i}, Y_1,\dots, Y_{B_i}] > \frac{\log K}{4}.
        \label{}
      \end{equation}

      Suppose we condition on $A_{i+1}-A_i = s' < s$.
      This is equivalent to saying $(A_i +\ell, B_i)$ is type-A for $\ell=1,\dots,s'-1$ and $(A_i+s', B_i)$ is type-B. 
      This implies $X_{A_{i+1}}\ge \lambda$ (i.e. $X_{A_{i+1}}\notin[\lambda-1]$) and $X_{A_{i+1}}<Y_{B_i}$,
      Since $Y_{B_i}$ is fixed and, without any conditioning, $X_{A_{i+1}}$ is distributed uniformly in $[K]$, we have $X_{A_{i+1}}$ is distributed uniformly on $[\lambda,\dots, Y_{B_i}]$.
      By the previous argument, we have $B_{i+1}-B_i$ is distributed as $\min(\Geometric(X_{A_{i+1}}/K),\sqrt R)$, so by Lemma~\ref{lem:geom-2} on $\lambda'=Y_{B_i}$ and $D=B_{i+1}-B_i$, we have
      \begin{equation}
        \E[B_{i+1}-B_i|A_{i+1}-A_i=s'<s,A_1,\dots, A_i, B_1,\dots, B_i, X_1,\dots, X_{A_i}, Y_1,\dots, Y_{B_i}] > \frac{\log K}{4}.
        \label{}
      \end{equation}

    \end{proof}

    \begin{corollary}
      \label{cl:4}
      We have, for any fixed $B_1,\dots,B_i$,
      \begin{equation}
        \E[B_{i+1}-B_i|B_1,\dots,B_i] > \fourth \log K.
        \label{eq:prop-6}
      \end{equation}
    \end{corollary}

    Continuing the proof of Proposition~\ref{lem:match-3}, let $\alpha = \fourth\log K$ so that
    \begin{equation}
      \alpha (k-1) 
        \ > \ \fourth\log K\cdot\left(\frac{\delta n}{2^\lambda}-1\right)
        \ > \ 2n.
      \label{eq:prop-7}
    \end{equation}
    Define the random variable $B_i' \defeq  B_i - i\cdot \alpha$.
    As $B_i'$ and $B_i$ uniquely determine each other, we have, by Corollary~\ref{cl:4},
    \begin{equation}
      \E[B_{i+1}'-B_i'|B_1',\dots,B_i'] = \E[B_{i+1} - B_i|B_1,\dots,B_i] - \alpha > 0.
      \label{eq:prop-8}
    \end{equation}
    Thus, $B_i'$ form a submartingale with $|B_{i+1}'-B_i'|< \sqrt{R}$, so by Azuma's Inequality (Lemma~\ref{lem:azuma}), we have
    \begin{equation}
      \Pr\left[B_k'-B_1' \le - \frac{\alpha (k-1)}{2}\right] 
      \ \le \ \exp\left( -\frac{(\alpha (k-1)/2)^2}{2(k-1)\cdot (\sqrt{R})^2} \right)
      \ = \ \exp\left( -\frac{\alpha^2(k-1)}{8R} \right)
      \ < \ \exp\left( -\frac{n\log K}{16R} \right).
      \label{eq:prop-9}
    \end{equation}
    Combining with \eqref{eq:prop-7} gives
    \begin{equation}
      \Pr[B_k\le n]
      \ \le \ \Pr[B_k'-B_1' \le -n]
      \ \le \  \Pr\left[B_k'-B_1' \le - \frac{\alpha (k-1)}{2}\right] 
      \ < \ \exp\left( -\frac{n\log K}{16R} \right).
      \label{eq:prop-10}
    \end{equation}
    We conclude
    \begin{align}
      \Pr_{\substack{X\sim U([K]^{\delta n})\\Y\sim U([K]^n)}} [X\prec_{(\sqrt{R},2^\lambda,[\lambda-1],\dots,[\lambda-1])} Y] 
      \ \le \ \Pr[m_{\delta n}\le n] 
      \ \le \ \Pr[B_k\le n]
      \ < \ \exp\left( -\frac{n\log K}{16R} \right).
    \end{align}
    As $\exp(-\frac{n\log K}{16R}) < 2^{-\frac{n\log K}{16R}}$, we have \eqref{eq:match-3-1} is true, completing the proof of Proposition~\ref{lem:match-3}.
  \end{proof}

  To conclude this section, we formalize the intuition that the ``worst possible deletion pattern'', that is, the deletion pattern that makes decoding most difficult in some sense, is the deletion pattern such that each $(\lambda-1)$-admissible inner deletion pattern corrupts $g_1,\dots,g_{\lambda-1}$.
  This fact is intuitive, as it is hard, for example, to find $g_1$ as a subsequence of a random $\psi(Y)$ because $g_1$ has many runs.
  Thus if our deletion pattern $\tau$ corrupts the inner codewords with the most runs, there is a greater probability that over random $X,Y$ we have $\tau(\psi(X))$ is a subsequence of $\psi(Y)$.
  However, note that, to make a clean assertion, we continue to argue using the matching relation rather than subsequence relation.

  \begin{lemma}
    \label{lem:match-4}
    Let $r,s\in\NN$ and $S_1,\dots,S_{\delta n}$ be subsets of $[K]$ of size exactly $\lambda-1$.
    Then for all $Y\in[K]^n$ we have
    \begin{equation}
      \#\left\{X\in[K]^{\delta n}: X\prec_{(2^\lambda,\sqrt R,S_1,\dots,S_{\delta n})} Y\right\} 
      \ \le \ \#\left\{X\in[K]^{\delta n}: X\prec_{(2^\lambda,\sqrt R,[\lambda-1],\dots,[\lambda-1])} Y\right\} 
      \label{eq:match-4}
    \end{equation}
  \end{lemma}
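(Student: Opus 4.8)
The plan is to produce, for the fixed $Y$, a single injection (in fact a bijection of $[K]^{\delta n}$) $\Phi$ carrying $\{X : X\prec_{(2^\lambda,\sqrt R,S_1,\dots,S_{\delta n})} Y\}$ into $\{X : X\prec_{(2^\lambda,\sqrt R,[\lambda-1],\dots,[\lambda-1])} Y\}$, which immediately yields \eqref{eq:match-4}. I build $\Phi$ coordinatewise: for $i\in[\delta n]$ let $\phi_i\colon[K]\to[K]$ be the bijection sending $S_i$ onto $[\lambda-1]$ in increasing order and $[K]\setminus S_i$ onto $\{\lambda,\dots,K\}$ in increasing order, and set $\Phi(X)_i=\phi_i(X_i)$. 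Two facts are immediate: (i) if $x\in S_i$ then $\phi_i(x)\in[\lambda-1]$; and (ii) if $x\notin S_i$ then $\phi_i(x)=(\lambda-1)+x-\abs{S_i\cap[1,x]}\ge x$, since $\abs{S_i\cap[1,x]}\le\abs{S_i}=\lambda-1$. From (i) and (ii) I get the transfer property: for every $(a,b)\in[\delta n]\times[n]$, if $(a,b)$ is type-A with respect to $X,S_1,\dots,S_{\delta n}$ then $(a,b)$ is type-A with respect to $\Phi(X),[\lambda-1],\dots,[\lambda-1]$. Indeed, $X_a\in S_a$ forces $\Phi(X)_a\in[\lambda-1]$; and $X_a\notin S_a$ together with $X_a\ge Y_b$ gives $\Phi(X)_a\ge X_a\ge Y_b$.

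Given the transfer property, the lemma follows from a monotonicity principle for matchings: enlarging the set of type-A pairs can only help the matching succeed. Precisely, I claim that if every pair that is type-A with respect to $X,S_1,\dots,S_{\delta n}$ is also type-A with respect to $X',S_1',\dots,S_{\delta n}'$, then $X\prec_{(s,t,S_1,\dots,S_{\delta n})}Y$ implies $X'\prec_{(s,t,S_1',\dots,S_{\delta n}')}Y$ (only the case $s=2^\lambda$, $t=\sqrt R$ is needed). Applying this with $(X',S_i')=(\Phi(X),[\lambda-1])$ shows $\Phi$ maps the left-hand set of \eqref{eq:match-4} into the right-hand set, and as $\Phi$ is a bijection of $[K]^{\delta n}$, the lemma is proved.

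To establish the monotonicity principle, run both matching algorithms to obtain the walks $\mathcal M$ (for $X,S$) and $\mathcal M'$ (for $X',S'$), each a monotone lattice path from $(1,1)$. For a walk $W$ and each first coordinate $a$ it reaches, let $\mathrm{first}_W(a)$ be the $Y$-coordinate when $W$ first reaches first coordinate $a$, and let $\mathrm{aphase}_W(a)$ be the length of the run of A-moves ending at that step (so $\mathrm{aphase}_W(1)=0$, and $1\le\mathrm{aphase}_W(a)\le s$ for $a\ge2$, since rule (a) caps A-runs at $s$). I would prove by induction on $a$ that $\mathcal M'$ also reaches first coordinate $a$ and that $\bigl(\mathrm{first}_{\mathcal M'}(a),\mathrm{aphase}_{\mathcal M'}(a)\bigr)\le\bigl(\mathrm{first}_{\mathcal M}(a),\mathrm{aphase}_{\mathcal M}(a)\bigr)$ lexicographically, i.e.\ $\mathrm{first}_{\mathcal M'}(a)<\mathrm{first}_{\mathcal M}(a)$, or equality there together with $\mathrm{aphase}_{\mathcal M'}(a)\le\mathrm{aphase}_{\mathcal M}(a)$. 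The inductive step analyzes the run of B-moves a walk performs while at first coordinate $a$ before its next A-move: that run ends at the first $Y$-position type-A for that walk's system, or after $t$ B-moves, whichever comes first, with a single forced B-move at the very start when $\mathrm{aphase}=s$. The transfer hypothesis together with $\mathrm{first}_{\mathcal M'}(a)\le\mathrm{first}_{\mathcal M}(a)$ force $\mathcal M'$ to leave first coordinate $a$ at a $Y$-position no larger than $\mathcal M$ does, and the $\mathrm{aphase}$ tiebreak excludes the only dangerous configuration: the two walks at a common $Y$-position with $\mathcal M'$ mid-A-run of length $s$, hence forced to waste a B-move. Finally, $X\prec_{(s,t,S)}Y$ means $\mathcal M$ reaches first coordinate $|X|=\delta n$ at a $Y$-position below $n$, so the invariant gives $\mathrm{first}_{\mathcal M'}(\delta n)\le\mathrm{first}_{\mathcal M}(\delta n)<n$; thus $\mathcal M'$ also reaches first coordinate $\delta n$ without exhausting $Y$, i.e.\ $X'\prec_{(s,t,S')}Y$.

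I expect the monotonicity principle to be the only genuine obstacle: constructing $\Phi$, checking it is a bijection, and verifying the transfer property are routine, whereas closing the induction requires the phase-aware lexicographic invariant precisely to handle the forced-move rules (a) and (b). The remaining steps, including the final counting, are bookkeeping.
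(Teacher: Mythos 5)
Your proposal is correct and follows essentially the same route as the paper's proof: a coordinatewise bijection sending each $S_i$ into $[\lambda-1]$ and satisfying $\phi_i(x)\ge x$ off $S_i$, the transfer of type-A pairs, and a domination argument showing the matching for $(\Phi(X),[\lambda-1],\dots,[\lambda-1])$ stays ahead of the one for $(X,S_1,\dots,S_{\delta n})$. The only difference is bookkeeping in the monotonicity step — the paper runs a synchronous induction on the step count $\ell$ (where the forced-move rules (a) and (b) resolve themselves because equal positions at equal step counts force equal recent histories), while you induct on the first coordinate with a phase-aware lexicographic invariant; both are valid, and yours is the more explicit about why the forced moves cause no trouble.
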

  \begin{proof}
    We start with a claim.
    \begin{claim}
      \label{cl:4-1}
      For every set $A\subseteq[K]$ with size exactly $\lambda-1$, there exists a bijection $h_A:[K]\to[K]$ such that $h_A(x)\in[\lambda-1]$ for $x\in A$ and $h_A(x)\ge x$ for $x\notin A$.
    \end{claim}
    \begin{proof}
      Pair each element $x\in A\setminus[\lambda-1]$ with an element $y\in [\lambda-1]\setminus A$ arbitrarily and for each pair $(x,y)$ set $h_A(x) = y, h_A(y) = x$.
      Note this always gives $x>y$.
      This is possible as $A\setminus[\lambda-1]$ and $[\lambda-1]\setminus A$ have the same size.
      Then set $h_A(x) = x$ for all other $x$.
      It is easy to check this function satisfies $h_A(x)\ge x$ for $x\notin A$.
    \end{proof}

    Fix $Y\in[K]^n$.
    For all $i$, let $h_i:S_i\to [\lambda-1]$ be a bijection such that $h_i(x)\in[\lambda-1]$ for $x\in S_i$ and $f(x)\ge x$ for all other $x$.
    This exists by Claim~\ref{cl:4-1}.
    Let $h:[K]^{\delta n}\to[K]^{\delta n}$ be such that $h(X_1,\dots,X_{\delta n}) = h_1(X_1)h_2(X_2)\cdots h_{\delta n}(X_{\delta n})$.
    Since each of $h_1,\dots,h_{\delta n}$ are bijections, $h$ is a bijection as well.

    Let $X$ be such that $X\prec_{(2^\lambda,\sqrt R,S_1,\dots,S_{\delta n})} Y$.
    We claim $h(X)\prec_{(2^\lambda,\sqrt R,[\lambda-1],\dots,[\lambda-1])} Y$.
    Let $\calM$ be a $(2^\lambda,\sqrt R,S_1,\dots,S_{\delta n})$-matching of $X$ in $Y$ and let $\calM'$ be a $(2^\lambda,\sqrt R,[\lambda-1],\dots,[\lambda-1])$-matching of $h(X)$ in $Y$.
    We know $\calM$ is a successful matching, and we wish to show $\calM'$ is a successful matching.
    
    If $(a,b)$ is type-A with respect to $X,Y,S_1,\dots,S_{|X|}$, then $Y_b \le X_a$ or $X_a\in S_a$, so, by definition of $h$, either $h(X_a)\in[\lambda-1]$ or $Y_b\le X_a\le h(X_a)$, so $(a,b)$ is type-A with respect to $h(X),Y,[\lambda-1],\dots,[\lambda-1]$.
    Let $\ell\ge 0$ be an integer and let $(a,b)\in\calM$ and $(a',b')\in\calM'$ be the state of the matchings after the $\ell$th step of the matchings of $X$ in $Y$ and $h(X)$ in $Y$, respectively. 
    It is easy to see by induction that $a\le a'$ and $b'\le b$; if $a<a'$ then after one move we still have $a\le a'$, and if $a=a'$ and $\calM$s next move is an A-move, then $\calM'$s next move must also be an A-move.
    Since $\calM$ succeeds, we have $(|X|,b)\in\calM$ for some $b<|Y|$, so we conclude that $(|X|,b')\in\calM'$ for some $b'\le b<|Y|$.
    Thus $\calM'$ succeeds as well.
    
    Since $h$ is a bijection such that
    \begin{equation}
      h\pa{\br{X\in[K]^{\delta n}: X\prec_{(2^\lambda,\sqrt{R},S_1,\dots,S_{\delta n})} Y}}
      \ \subseteq \ \br{X\in[K]^{\delta n}: X\prec_{(2^\lambda,\sqrt{R},[\lambda-1],\dots,[\lambda-1])} Y},
      \label{eq:match-4-1}
    \end{equation}
    we have 
    \begin{equation}
      \#\br{X\in[K]^{\delta n}: X\prec_{(2^\lambda,\sqrt{R},S_1,\dots,S_{\delta n})} Y} 
      \ \le \ \#\br{X\in[K]^{\delta n}: X\prec_{(2^\lambda,\sqrt{R},[\lambda-1],\dots,[\lambda-1])} Y} 
      \label{eq:match-4-2}
    \end{equation}
    as desired.
  \end{proof}

  \subsection{Technical combinatorial lemmas}
  \label{sec:5-4}

  \begin{lemma}
    \label{lem:sparse-undirected-whp}
    Let $n$ be a positive integer and suppose we have $0<\beta<1$.
    Suppose that $\gamma=\beta / 3$, $M=2^{\gamma n}$, and $\eps \ge 2^{-(\gamma-o(1)) n/2}$.
    Suppose $G$ is a graph on $N=K^{(1-o(1))n}$ vertices such that each vertex has degree at most $d = Nr$ for some $r = r(n) = 2^{-(\beta-o(1)) n}$.
    Suppose ${C_{out}}$ is chosen as a random subset of $V(G)$ by including each vertex of $V(G)$ in ${C_{out}}$ with probability $M/N$. 
    Then for sufficiently large $n$, we have $\Pr_{C_{out}}[|E(G\restriction {C_{out}})| > \eps {M}]<2^{-\omega(n)}$.
  \end{lemma}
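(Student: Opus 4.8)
The plan is to bound the probability by applying Markov's inequality not to $E:=\abs{E(G\restriction C_{out})}$ itself but to a large, slowly growing moment $E^k$ of it; applying Markov to the first moment only gives a $2^{-\Omega(n)}$ bound (since $\E[E]\le M^2r/2$ but $\eps M$ is exponentially large), so we must take $k=k(n)\to\infty$. Writing $E=\sum_{e\in E(G)}\one[V(e)\subseteq C_{out}]$ and using that the vertices lie in $C_{out}$ independently with probability $M/N$, expanding the power gives
\[
  \E\ba{E^k}\ =\ \sum_{(e_1,\dots,e_k)\in E(G)^k}\pa{\tfrac{M}{N}}^{\abs{V(e_1)\cup\cdots\cup V(e_k)}},
\]
so the whole argument reduces to a good upper bound on this sum, and the target is $\E[E^k]\le(\mu'+2kMr+2k^2)^k$ where $\mu':=\tfrac12 M^2 r$.

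Proving that moment bound is the step I expect to be the main obstacle. The idea is to build the tuple $(e_1,\dots,e_k)$ one edge at a time, maintaining the vertex set $U_i:=V(e_1)\cup\cdots\cup V(e_i)$ (so $\abs{U_i}\le 2i\le 2k$), and to classify step $i$ by how many endpoints of $e_i$ lie outside $U_{i-1}$: ``type-$2$'' if both are new, ``type-$1$'' if exactly one, ``type-$0$'' if none. Step $1$ is always type-$2$. A type-$2$ step multiplies the weight $(M/N)^{\abs{\cup V(e_j)}}$ by $(M/N)^2$ and admits at most $\abs{E(G)}\le Nd/2$ choices of $e_i$; a type-$1$ step multiplies the weight by $(M/N)$ and admits at most $\abs{U_{i-1}}\cdot d\le 2kd$ choices (choose the old endpoint, then one of its $\le d$ neighbours); a type-$0$ step leaves the weight fixed and admits at most $\binom{\abs{U_{i-1}}}{2}\le 2k^2$ choices. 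Crucially these per-step bounds are uniform in the history (via $\abs{U_{i-1}}\le 2k$), so summing over all $3^{k-1}$ type-patterns the contributions factor and, using $d/N=r$, one gets $\E[E^k]\le\mu'(\mu'+2kMr+2k^2)^{k-1}\le(\mu'+2kMr+2k^2)^k$, as desired.

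The remaining step is a routine asymptotic calculation. Since $\gamma=\beta/3$ we have $Mr=2^{(\gamma-\beta+o(1))n}=2^{-(2\beta/3-o(1))n}\to0$ and $\mu'=\tfrac12 M\cdot Mr\to0$, so for $n$ large and $k=\ceil{\log n}$ both $\mu'<1$ and $2kMr<1$, giving $\E[E^k]\le(2+2k^2)^k\le(4k^2)^k$; also $\eps M\ge 2^{-(\gamma-o(1))n/2}\cdot 2^{\gamma n}=2^{(\gamma/2+o(1))n}\ge 2^{\gamma n/3}$ for large $n$. Markov's inequality then yields
\[
  \Pr_{C_{out}}\ba{E>\eps M}\ \le\ \frac{\E\ba{E^k}}{(\eps M)^k}\ \le\ \pa{\frac{4k^2}{\eps M}}^{k}\ \le\ \pa{\frac{4\ceil{\log n}^2}{2^{\gamma n/3}}}^{\ceil{\log n}},
\]
and since $4\ceil{\log n}^2\cdot 2^{-\gamma n/3}\le 2^{-\gamma n/4}$ for $n$ large, this is at most $2^{-\gamma n\ceil{\log n}/4}=2^{-\omega(n)}$, which is the claimed bound. (Any $k=k(n)\to\infty$ would do; $\ceil{\log n}$ is merely a concrete choice.)
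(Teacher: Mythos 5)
Your proof is correct, and it follows the same overall strategy as the paper: write $|E(G\restriction C_{out})|$ as a sum of edge indicators, bound the $k$-th moment for $k=\Theta(\log n)$, and finish with Markov's inequality. The one place where you diverge is the heart of the moment computation. The paper groups the ordered edge tuples $(e_1,\dots,e_k)$ by the number $\ell$ of covered vertices and proves the bound $P_{k,\ell}< N^\ell r^{\ell/2}\ell^{2k+\ell-1}k^k$ by first building a spanning forest (with casework on its number of components) and then placing the leftover edges inside the covered set; summing $(M/N)^\ell P_{k,\ell}$ over $\ell$ and using $M\sqrt{r}\ll 1$ gives $\E[Z^k]\le 2^{4k}k^{5k}$. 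Your sequential type-$0/1/2$ classification reorganizes exactly the same count so that the weight $(M/N)^{|U_k|}$ and the number of choices both factor step by step, yielding $\E[Z^k]\le(\mu'+2kMr+2k^2)^k$ directly and avoiding the forest construction and the casework on components entirely. The two bounds are of the same quality where it matters (each new vertex effectively costs at most $M\sqrt{r}$ in both arguments, which is why $\gamma=\beta/3<\beta/2$ suffices), and your closing asymptotics match the paper's choice $k=\log n$ and the use of $\eps M\ge 2^{\Omega(n)}$. In short: same method, a cleaner and somewhat more elementary derivation of the key moment estimate.
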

  \begin{remark}
    Essentially this lemma is saying that when the edge density is extremely small, around $2^{-\beta n}$, then for all but an extremely small set of choices for ${C_{out}}$, ${C_{out}}$ is extremely sparse.
  \end{remark}
  \begin{proof}
    Let $E = E(G)$. We know $E$ satisfies $|E| \le dN/2 < N^2r$.

    Enumerate the edges $1,\dots, |E|$ arbitrarily.
    Let $Y_1,\dots, Y_{|E|}$ be Bernoulli random variables denoting whether the $i$th edge is in ${C_{out}}$.
    Let $Z = Y_1+\cdots+Y_{|E|}$. We would like to show
    \begin{equation}
      \Pr_{C_{out}}[Z > \eps {M}] \ < \ 2^{-n}.
    \end{equation}
    We do this by computing a sufficiently large moment of $Z$ and using
    \begin{equation}
      \Pr_{C_{out}}[Z > \eps{M}] \ \le \ \frac{\E[Z^k]}{(\eps {M})^k}.
    \end{equation}
    
    For a tuple of (not necessarily distinct) edges $(e_1,\dots,e_k)$, denote $V(e_1,\dots, e_k)$ to be the set of vertices on at least one of the edges $e_1,\dots,e_k$.
    Alternatively, we say $V(e_1,\dots,e_k)$ is the set of vertices \textit{covered} by edges $e_1,\dots, e_k$.
    Note that for all $e_1,\dots, e_k$, we have $2\le V(e_1,\dots,e_k)\le 2k$.
    Let $P_{k,\ell} = \#\{(e_1,\dots, e_k):|V(e_1,\dots,e_k)|=\ell\}$ denote the number of ordered tuples of edges from $E$ that cover exactly $\ell$ vertices.
    \begin{claim*}
      \begin{equation}
        \label{eq:p_k_ell-fml}
        P_{k,\ell}
          \ < \ N^\ell\cdot r^{\ell/2}\cdot \ell^{2k+\ell-1}\cdot k^k
      \end{equation}
    \end{claim*}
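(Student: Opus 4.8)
The plan is to prove the bound on $P_{k,\ell}$ by a ``template plus embedding'' counting argument. Given an ordered $k$-tuple $(e_1,\dots,e_k)$ of edges of $G$ covering exactly $\ell$ vertices, first label the $\ell$ covered vertices $1,\dots,\ell$ in their order of first appearance among $e_1,\dots,e_k$ (breaking the tie inside $e_1$ by vertex index, say). This turns the tuple into a pair consisting of a \emph{template} $(\bar e_1,\dots,\bar e_k)$, where $\bar e_i\subseteq[\ell]$ records the two labels of the endpoints of $e_i$, together with an \emph{embedding} $\phi\colon[\ell]\to V(G)$, the injection sending each label to the vertex carrying it. Since $e_i=\{\phi(a):a\in\bar e_i\}$, the pair determines the tuple, so this map is injective; hence it suffices to bound, for a fixed template, the number of embeddings realizing it, and to multiply by the number of templates.

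The number of templates is at most $\binom{\ell}{2}^{k}\le\ell^{2k}$, using that $G$ has no loop so each $\bar e_i$ consists of two distinct labels. For the embedding count, fix a template coming from a genuine $\ell$-covering tuple; then every label in $[\ell]$ occurs in some $\bar e_i$, so the simple graph $H$ on vertex set $[\ell]$ with edge set $\{\bar e_1,\dots,\bar e_k\}$ has no isolated vertex. Consequently every connected component of $H$ has at least two vertices, and the number $c$ of components satisfies $c\le\ell/2$. Fix a spanning forest $F$ of $H$ (it has $\ell-c$ edges); any $\phi$ realizing the template in particular sends each edge of $F$ to an edge of $G$. Ordering the labels so that within each component a root comes first and every later label has an $F$-neighbour appearing earlier, and building $\phi$ one label at a time, a root can be placed in at most $N=|V(G)|$ ways and each non-root in at most $d$ ways (it must land in the $G$-neighbourhood of an already-placed $F$-neighbour). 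Thus the number of realizing embeddings is at most
\[
 N^{c}\,d^{\,\ell-c}\ =\ N^{\ell}\,r^{\,\ell-c}\ \le\ N^{\ell}\,r^{\,\ell/2},
\]
using $d=Nr$, $r\le1$, and $\ell-c\ge\ell/2$.

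Combining the two estimates, $P_{k,\ell}\le\ell^{2k}\cdot N^{\ell}r^{\ell/2}\le N^{\ell}r^{\ell/2}\,\ell^{2k+\ell-1}k^{k}$, where the last inequality holds because $\ell\ge2$ and $k\ge1$ force $\ell^{\ell-1}k^{k}\ge2>1$ (so the inequality is in fact strict, as claimed). The only step needing genuine care is the inequality $c\le\ell/2$: this is precisely where the hypothesis that the tuple covers \emph{exactly} $\ell$ vertices, together with the absence of loops in $G$, is used, and it is what makes the $N$-factors (as opposed to the $d$-factors) appear at most $\ell/2$ times. I do not expect a real obstacle beyond bookkeeping here — the whole argument is a standard instance of counting homomorphic images of a bounded-degree graph, and all arithmetic is routine.
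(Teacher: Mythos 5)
Your proof is correct, and it rests on the same underlying combinatorial fact as the paper's: a tuple of edges covering exactly $\ell$ vertices induces a graph on those $\ell$ vertices with no isolated vertex, hence at most $\ell/2$ connected components, so one pays $N$ for at most $\ell/2$ ``root'' choices and $d=Nr$ for the remaining $\ell-c\ge \ell/2$ forest-edge extensions, yielding the crucial $N^\ell r^{\ell/2}$ factor. The bookkeeping around that idea differs, though. The paper counts \emph{unordered} edge sets by explicitly building a spanning forest out of edges of $G$ (first $c$ disjoint edges at cost $|E|^c$, then one new vertex at a time), handles the $k-\ell+c$ leftover edges separately via $\binom{\ell}{2}^{k-\ell+c}\le \ell^{2k}$, sums over $c$, and finally multiplies by $k!$ to restore the ordering. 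You instead factor each ordered tuple injectively into an abstract \emph{template} on $[\ell]$ (absorbing both the ordering and all $k$ edges into the single factor $\binom{\ell}{2}^k\le\ell^{2k}$) and a vertex embedding, and only then run the spanning-forest count on the template graph. This removes the $k!$ factor and the casework over $c$ entirely, and makes the role of the ``exactly $\ell$ covered vertices'' hypothesis (no isolated labels, hence $c\le\ell/2$) more transparent; the paper's looser accounting is harmless only because the final bound carries the generous slack terms $\ell^{\ell-1}k^k$, which in your version are not needed at all except to match the stated inequality.
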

    \begin{proof}
      We bound $P_{k,\ell}$ by first bounding the number of sets (unordered tuples) of edges covering exactly $\ell$ vertices and then multiply by $k!$.
      We first compute the number of ways to construct a forest of trees covering $\ell$ vertices using only edges from $E$.
      We do this by casework on the number of connected components.
      Let $1\le c\le \floor{\ell/2}$ be an integer.
      To construct a forest with $c$ connected components, we first choose $c$ disjoint edges $e_1,\dots,e_c$.
      This can be done in at most $|E|^c$ ways.
      We have $|V(e_1,\dots,e_c)|=2c$.
      For $c+1\le i\le \ell-c$, choose a vertex $v\in V(e_1,\dots, e_{i-1})$ and an edge $vw$ such that $w\notin V(e_1,\dots,e_{i-1})$.
      By construction, for $c+1\le i\le \ell-c$, we have $|V(e_1,\dots,e_i)| = i+c$.
      Thus the $i$\tsup{th} edge can be added in at most $(i+c-1)d$ ways.
      The $i=\ell-c$\tsup{th} edge completes a forest covering $\ell$ vertices.
      Recalling that $|E|<N^2r$ and $d=Nr$, we have that the total number of ways to construct a forest in this fashion is at most
      \begin{equation}
        \label{eq:p_k_ell-forest-bound}
        |E|^c\cdot 2cd\cdot(2c+1)d\cdots(\ell-1) d
        \ \le \ |E|^c\cdot d^{\ell-2c}\cdot \ell^{\ell-2c}
        \ < \ N^\ell\cdot r^{\ell-c}\cdot\ell^{\ell-2c}.
      \end{equation}
      We have used $\ell-c$ edges thus far.
      There are $k-\ell+c$ remaining edges. 
      As $|V(e_1,\dots,e_k)|=\ell$, these remaining edges must connect one of the $\binom\ell 2$ pairs of vertices in $V(e_1,\dots,e_{k-\ell+c})$.
      As $c\le\floor{\ell/2}<\ell$, we have $k-\ell+c < k$.
      The remaining edges can thus be chosen in at most $\binom{\ell}{2}^{k-\ell+c}<\ell^{2k}$ ways.
      Using \eqref{eq:p_k_ell-forest-bound} and multiplying by $k!$ we have
      \begin{equation}
        P_{k,\ell}
          \ < \ k!\cdot \sum_{c=1}^{\ell/2} N^\ell\cdot r^{\ell-c}\cdot \ell^{\ell-2c}\cdot \ell^{2k}
          \ < \ k^k\cdot (\ell/2)\cdot N^{\ell}\cdot r^{\ell/2} \cdot \ell^{\ell-2}\cdot\ell^{2k}
          \ < \ N^\ell\cdot r^{\ell/2}\cdot \ell^{2k+\ell-1}\cdot k^k.
          \qedhere
      \end{equation}
    \end{proof}

    Note that ${M}\sqrt{r}\ll 1$. With this claim, we have
    \begin{align}
      \E[Z^k] 
        \ &= \ \E[(Y_1+\cdots+Y_{|E|})^k] \nonumber\\
        \ &= \ \sum_{(e_1,\dots,e_k)\in\{1,\dots,|E|\}^k} \E[Y_1\cdots Y_k] \nonumber\\
        \ &= \ \sum_{\ell=2}^{2k} \sum_{|V(e_1,\dots,e_k)|=\ell}\E[Y_{e_1}\cdots Y_{e_k}] \nonumber\\
        \ &= \ \sum_{\ell=2}^{2k} \sum_{|V(e_1,\dots,e_k)|=\ell}\pa{\frac{{M}}{N}}^\ell \nonumber\\
        \ &= \ \sum_{\ell=2}^{2k} \pa{\frac{{M}}{N}}^\ell\cdot P_{k,\ell} \nonumber\\
        \ &< \ \sum_{\ell=2}^{2k} \pa{\frac{{M}}{N}}^\ell\cdot N^\ell\cdot r^{\ell/2}\cdot \ell^{2k+\ell-1}\cdot k^k  \nonumber\\
        \ &< \ \sum_{\ell=2}^{2k} \ell^{2k+\ell-1}\cdot k^k  \nonumber\\
        \ &< \ 2k\cdot (2k)^{4k-1}\cdot k^k = 2^{4k}\cdot k^{5k}.
    \end{align}
    Finally, choosing $k=\log n$, we have
    \begin{align}
      \Pr[Z\ge \eps{M}]
        \ \le \ \frac{\E[Z^k]}{(\eps{M})^k}
        \ < \ \frac{2^{4k}k^{5k}}{\eps^k{M}^k}
        \ < \ \pa{\frac{16k^5}{\eps 2^{\gamma n}}}^k 
        \ \le \ \pa{\frac{16k^5}{2^{\gamma n/2}}}^k 
        \ < \ 2^{-\omega(n)}
    \end{align}
    as desired.
  \end{proof}

  \begin{lemma}
    Let $n$ be a positive integer, and suppose $0<\beta < 1$ and $\gamma = \beta / 4$.
    Suppose $M=2^{\gamma n}$ and $\eps \ge 2^{-\gamma n/ 2}$.
    Suppose $G$ is a directed graph on $N = K^{n(1-o(1))}$ vertices such that each vertex has outdegree at most $Nr$ for some $r = 2^{-(\beta-o(1)) n}$.
    Choose a subset $C_{out}\subseteq V(G)$ at random by including each vertex of $V(G)$ in ${C_{out}}$ with probability $M/N$ so that $\E[|{C_{out}}|] = M$.
    Then, for sufficiently large $n$, we have
    \begin{equation}
      \Pr_{{C_{out}}}\left[ \#\left\{ X\in {C_{out}}: \exists Y\in {C_{out}}\suchthat \overrightarrow{YX}\in E(G) \right\} 
      > \eps |{C_{out}}| \right]
      \ < \ 2^{-\omega(n)}.
      \label{eq:sparse-directed-whp}
    \end{equation}
    \label{lem:sparse-directed-whp}
  \end{lemma}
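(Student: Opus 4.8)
The plan is to split the vertices of $G$ according to in-degree: vertices of large in-degree will be handled by a crude first-moment/Chernoff estimate, while the vertices of small in-degree will be reduced to the undirected Lemma~\ref{lem:sparse-undirected-whp}. Write $B := \{X \in C_{out} : N^{-}_G(X) \cap C_{out} \neq \emptyset\}$ for the set whose size we must control, fix an integer threshold $\theta$ with $N 2^{-3\beta n/4} \le \theta < 2 N 2^{-3\beta n/4}$, and set $H := \{v \in V(G) : \indeg_G(v) \ge \theta\}$ and $L := V(G) \setminus H$. Since $H,L$ partition $V(G)$ we have $|B| \le |B \cap H| + |B \cap L| \le |H \cap C_{out}| + |B \cap L|$, and since $|C_{out}| \ge M/2$ except with probability $e^{-\Omega(M)} = 2^{-\omega(n)}$ by Chernoff (Lemma~\ref{lem:chernoff}), it suffices to prove $\Pr[\,|H \cap C_{out}| > \eps M/2\,] < 2^{-\omega(n)}$ and $\Pr[\,|B \cap L| > \eps M/2\,] < 2^{-\omega(n)}$.

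For the high-in-degree part, counting edges through their tails gives $|H|\,\theta \le \sum_v \indeg_G(v) = |E(G)| \le N\cdot Nr$, so $|H| \le Nr\cdot 2^{3\beta n/4} = N 2^{-(\beta/4 - o(1))n}$ since $r = 2^{-(\beta - o(1))n}$. Hence, recalling $\gamma = \beta/4$ and $M = 2^{\gamma n}$, we get $\E[\,|H \cap C_{out}|\,] = |H|\,M/N \le M\, 2^{-(\beta/4 - o(1))n} = 2^{o(n)}$, which is at most $\eps M/(4e)$ for large $n$ because $\eps M \ge 2^{\gamma n/2} = 2^{\beta n/8}$. If $|H| < \eps M/2$ the event $|H\cap C_{out}| > \eps M/2$ is impossible; otherwise $|H \cap C_{out}| \sim \Bin(|H|, M/N)$ has mean $\mu \le \eps M/(4e)$, and the Chernoff upper-tail estimate $\Pr[\Bin(m,q) \ge t] \le (e m q/t)^t$ with $t = \eps M/2$ yields $\Pr[\,|H\cap C_{out}| \ge \eps M/2\,] \le (1/2)^{\eps M/2} = 2^{-\eps M/2} \le 2^{-2^{\gamma n/2-1}} < 2^{-\omega(n)}$.

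For the low-in-degree part, I would pass to the undirected simple graph $G'$ on vertex set $V(G)$ in which $\{u,v\}$ is an edge precisely when $\overrightarrow{uv} \in E(G)$ with $v \in L$, or $\overrightarrow{vu} \in E(G)$ with $u \in L$. A $G'$-neighbour of $v$ is either an out-neighbour of $v$ in $G$ (at most $Nr$ of those) or, only when $v \in L$, an in-neighbour of $v$ in $G$ (fewer than $\theta$ of those), so $\deg_{G'}(v) < Nr + \theta < N 2^{-(3\beta/4 - o(1))n}$; writing $\beta' := 3\beta/4$, the graph $G'$ has maximum degree below $N r'$ with $r' = 2^{-(\beta' - o(1))n}$, and crucially $\beta'/3 = \beta/4 = \gamma$. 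Moreover, each $X \in B \cap L$ comes with an in-neighbour $Y(X) \in C_{out}$, and then $\{X,Y(X)\} \in E(G'\restriction C_{out})$; the map $X \mapsto \{X, Y(X)\}$ sends $B \cap L$ into $E(G' \restriction C_{out})$ at most two-to-one (a fixed edge has only two endpoints), whence $|B \cap L| \le 2\,|E(G' \restriction C_{out})|$. Applying Lemma~\ref{lem:sparse-undirected-whp} to $G'$ with parameters $\beta'$, $\gamma = \beta'/3$, $M = 2^{\gamma n}$, $N = K^{(1-o(1))n}$ and $\eps/4$ in place of $\eps$ (that lemma's conclusion is unaffected by an extra constant factor in $\eps$) gives $\Pr[\,|E(G' \restriction C_{out})| > \eps M/4\,] < 2^{-\omega(n)}$, hence $\Pr[\,|B \cap L| > \eps M/2\,] < 2^{-\omega(n)}$; combined with the previous paragraph this proves \eqref{eq:sparse-directed-whp}.

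Two remarks on the plan. First, the hypothesis $\gamma = \beta/4$ is exactly what makes the parameters line up: reducing from bounded out-degree to the bounded \emph{total} degree that Lemma~\ref{lem:sparse-undirected-whp} needs costs a constant factor in the exponent (the passage $\beta \mapsto 3\beta/4$), which $\beta/3 \mapsto \beta/4$ just absorbs; this also pins the threshold near $N 2^{-3\beta n/4}$, since a larger $\theta$ would make $G'$ too dense while a smaller $\theta$ would make $|H|$ too large. Second, the step I expect to be the genuine obstacle is the very first one: because in-degrees are unbounded, a single vertex of huge in-degree lying in $C_{out}$ can put $\Theta(M)$ edges inside $C_{out}$, so one cannot bound $|B|$ by $|E(G \restriction C_{out})|$ and is forced both to count each bad vertex only once and to quarantine the high-in-degree vertices into $H$, whose contribution is then controlled purely through its (small) expectation rather than by the moment method.
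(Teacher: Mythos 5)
Your proof is correct in substance and follows essentially the same route as the paper's: split off the high-in-degree vertices, control their contribution to $C_{out}$ by a first-moment-plus-Chernoff argument, and reduce the remaining low-in-degree part to Lemma~\ref{lem:sparse-undirected-whp} via an auxiliary undirected graph with $\beta'=\tfrac34\beta$ (the paper picks its in-degree threshold as $\tfrac{8}{\eps}Nr$ via Markov rather than your fixed $\theta\approx N2^{-3\beta n/4}$, but this is cosmetic). The one slip is the final accounting: with budgets $\eps M/2+\eps M/2=\eps M$ for the two parts and only $|C_{out}|\ge M/2$ guaranteed, you conclude $|B|\le\eps M$, which need not be $\le\eps|C_{out}|$; this is repaired by tightening both budgets to $\eps M/4$ (all your tail bounds have ample slack, and Lemma~\ref{lem:sparse-undirected-whp} still applies with $\eps/8$), or by using $|C_{out}|\ge\tfrac34M$ with budgets $\eps M/4$ and $\eps M/2$ as the paper does.
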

  \begin{proof}
    As, by assumption, $Y$ has outdegree at most $N\cdot r$ for all $Y\in V(G)$, the average indegree of $G$ is at most $N\cdot r$.
    Thus at most $\eps/8$ fraction of all words in $V(G)$ have indegree larger than $\frac8\eps\cdot N\cdot r$.
    Call this set of vertices $W$.
    We have
    \begin{equation}
      \E_{C_{out}}\ba{\#\br{X\in {C_{out}}: \indeg_G(X) > \frac8\eps\cdot N\cdot r}}
        \ = \ \E[|{C_{out}}\cap W|]
        \ \le \ \frac{\eps}{8}M.
    \end{equation}
    Since $|{C_{out}}\cap W|$ is the sum of i.i.d Bernoulli random variables, we have by Lemma~\ref{lem:chernoff}
    \begin{equation}
      \label{eq:graph-1}
      \Pr_{C_{out}}\ba{\#\br{X\in {C_{out}}: \indeg_G(X) > \frac8\eps\cdot N\cdot r } > \frac{\eps}{4}M}
      \ = \ \Pr\ba{|{C_{out}}\cap W|>\frac{\eps}{4} M}
      \ < \ e^{-\half\cdot\frac{\eps}{8}\cdot M} < 2^{-\omega(n)}.
    \end{equation}
    Now consider the undirected graph $H$ on $V(G)$ such that $XY\in E(H)$ if $\overrightarrow{XY}\in E(G)$ and $Y\notin W$ or $\overrightarrow{YX}\in E(G)$ and $X\notin W$.
    For every vertex $v$, the in-edges of $v$ in $G$ correspond to edges in $H$ only if the indegree is at most $\frac{8}{\eps}\cdot N\cdot r$, and the outdegree of $v$ in $G$ is always at most $Nr$.
    Therefore the degree of every vertex in $H$ is at most $r'N$ where $r' = \pa{\frac{8}{\eps}+1}\cdot r < 2^{-(\beta-\gamma-o(1))n}$.

    We are including each vertex of $V(G)$ (and thus each vertex of $V(H)$) in $C_{out}$ independently with probability $M/N$.
    Let $\eps' = \eps/4, \beta' = \frac{3}{4}\beta, \gamma'=\gamma$ so that $\gamma'=\beta'/3, \eps\ge 2^{-(\gamma+o(1)) n/2}, M = 2^{\gamma n},$ and $r' = 2^{-(\beta'-o(1))n}$. 
    By Lemma~\ref{lem:sparse-undirected-whp} for $\beta',\gamma', M,\eps',$ and $r'$, and $H$, we have for sufficiently large $n$ that
    \begin{equation}
      \label{eq:graph-2}
      \Pr_{C_{out}}\pa{E(H\restriction {C_{out}}) \ > \ \frac{\eps}{4} M} < 2^{-\omega(n)}.
    \end{equation}
    Also, by Chernoff bounds, we have
    \begin{equation}
      \label{eq:graph-3}
      \Pr_{C_{out}}\left[|{C_{out}}| \ < \ \frac{3}{4}M\right] < 2^{-\Omega(M)} 
      \implies
      \Pr_{C_{out}}\left[|{C_{out}}| \ < \ \frac{3}{4}M\right] < 2^{-\omega(n)}.
    \end{equation}
    Note that if the number of $X$ such that $\indeg_{G\restriction {C_{out}}}(X) > 0$ at least $\eps |{C_{out}}|$, that is,
    \begin{equation}
      \#\left\{ X\in {C_{out}}: \exists Y\in {C_{out}}\suchthat \overrightarrow{YX}\in E(G) \right\} > \eps |{C_{out}}|,
    \end{equation}
    then one of the following must be false.
    \be
      \ii\label{item:graph-1} 
      $|{C_{out}}| \ge \frac{3}{4}M$
      \ii\label{item:graph-2}
      $\#\{X\in {C_{out}}: \deg_{H\restriction {C_{out}}}(X)>0\} \le \frac{\eps}{2}M$
      \ii\label{item:graph-3}
      $|{C_{out}}\cap W| \le \frac{\eps}{4}M$
    \ee
    Indeed, suppose to the contrary all of these were true.
    The number of vertices in ${C_{out}}\cap W$ with positive indegree in $G\restriction {C_{out}}$ is at most $|{C_{out}}\cap W|\le \frac{\eps}{4}M$.
    If a vertex $X\in {C_{out}}\setminus W$ has positive indegree in $G\restriction {C_{out}}$ then there exists $Y\in {C_{out}}$ such that $\overrightarrow{YX}\in E(G)$, so $XY\in E(H)$ by definition of $H$ and thus $\deg_{H\restriction {C_{out}}}(X)>0$.
    Thus the number of vertices in ${C_{out}}\setminus W$ with positive indegree is at most $\frac{\eps}{2}M$ by property \ref{item:graph-2}.
    Hence the total number of vertices in ${C_{out}}$ with positive indegree in $G\restriction {C_{out}}$ is at most $\frac{3}{4}\eps M \le \eps |{C_{out}}|$.

    Each of items \ref{item:graph-1}, \ref{item:graph-2}, and \ref{item:graph-3} is false with probability $2^{-\omega(n)}$ by \eqref{eq:graph-1}, \eqref{eq:graph-2}, and \eqref{eq:graph-3}, so the probability any of them occur is at most $2^{-\omega(n)}$, as desired.
  \end{proof}

  \subsection{Proof of Construction (Theorem~\ref{thm:average-case})}
  \label{sec:5-5}

  \begin{proof}[Proof of Theorem~\ref{thm:average-case}]
    We would like to show there exists a code $C$ and an $\eps=o_N(1)$ such that, for any deletion pattern $\tau$ deleting $pN$ bits,
    \begin{equation}
      \#\{x\in C: \exists y\in C\,\suchthat\,x\neq y\,\textand\tau(x)\le y\} \ \le \ \eps|C|. 
    \end{equation}
    For $Y\in[K]^n$, define
    \begin{equation}
      \label{eq:fy-def}
      f(Y) \ = \ \Pr_{Z\sim U([K]^{\delta n})}\left[Z\prec_{(2^\lambda,\sqrt{R},[\lambda-1],\dots,[\lambda-1])}Y\right].
    \end{equation}

    Let $\sigma\in\calD(n,(1-\delta)n)$ be a deletion pattern for our outer code $[K]^n$.
    Let $S_1,\dots,S_{\delta n}$ be subsets of $[K]$ of size exactly $\lambda-1$.
    Let $G\sar_{\sigma,S_1,\dots,S_{\delta n}}$ be the graph on the vertex set $[K]^n$ such that $\overrightarrow{YX}$ is an edge if and only if $\sigma(X)\prec_{(2^\lambda,\sqrt{R},S_1,\dots,S_{\delta n})} Y$.
    Note that for all $\sigma\in\calD(n,(1-\delta)n)$ and all $Y\in[K]^n$ we have, by Lemma~\ref{lem:match-4},
    \begin{align}
      \#\{X\in[K]^n:\sigma(X)\prec_{(2^\lambda,\sqrt{R},S_1,\dots,S_{\delta n})}Y\}
      \ &= \ K^{(1-\delta)n} \cdot \#\{Z\in[K]^{\delta n}: Z\prec_{(2^\lambda,\sqrt{R},S_1,\dots,S_{\delta n})} Y\} \nonumber\\
      \ &\le \ K^{(1-\delta)n} \cdot \#\{Z\in[K]^{\delta n}: Z\prec_{(2^\lambda,\sqrt{R},[\lambda-1],\dots,[\lambda-1])} Y\} \nonumber\\
      \ &= \ K^n\cdot f(Y).
    \end{align}
    In the graph language, this means every $Y\in[K]^n$ has outdegree at most $K^n\cdot f(Y)$ in every $G\sar_{\sigma,S_1,\dots,S_{\delta n}}$.

    We remove all $Y$ with large $f(Y)$.
    Let $\beta=\log K/16R$. 
    By Proposition~\ref{lem:match-3}, we know the average value of $f(Y)$ is $2^{-\beta n}$.
    Hence at most $K^n/2$ such $Y$ satisfy $f(Y)\ge 2\cdot 2^{-\beta n} $.
    These are the \textit{easily disguised} candidate codewords mentioned in \S\ref{sec:2}.
    There is thus a set $W\subseteq[K]^n$ such that $|W|=K^n/2$ and each $Y\in W$ satisfies $f(Y) < 2\cdot 2^{-\beta n} = 2^{-(\beta-o(1)) n}$.
    For all $\sigma\in\calD(n,(1-\delta)n), S_1,\dots, S_{\delta n}$, consider the subgraph $G_{\sigma,S_1,\dots,S_{\delta n}} \defeq G\sar_{\sigma,S_1,\dots,S_{\delta n}}\restriction W$.
    By construction, for all $\sigma, S_1,\dots, S_{\delta n}$, the outdegree of every vertex of $G_{\sigma,S_1,\dots,S_{\delta n}}$ is at most $(K^n/2)\cdot r$ for some $r= 2^{-(\beta-o(1)) n}$.

    Let $\gamma = \beta / 4$.
    Let $M=2^{\gamma n}$ and $\eps = 2^{-\gamma n/2}$. 
    Choose a subset $C_{out}\subseteq W$ by including each vertex of $W$ in $C_{out}$ independently with probability $M/|W|$ so that $\E[|C_{out}|]=M$.

    By Lemma~\ref{lem:sparse-directed-whp} for $N=K^n/2, \beta, \gamma, M, \eps, r,$ and $G_{\sigma,S_1,\dots,S_{\delta n}}$, we have that, for all $\sigma$ and sufficiently large $n$,
    \begin{equation}
      \Pr_{C_{out}}\left[ \#\{X\in C_{out}:\exists Y\in C_{out}\suchthat \overrightarrow{YX}\in E(G_{\sigma,S_1,\dots, S_{\delta n}})\} > \eps|C_{out}| \right]
      \ < \ 2^{-\omega(n)}.
    \end{equation}
    This is equivalently
    \begin{equation}
      \Pr_{C_{out}}\left[\#\{X\in C_{out}:\exists Y\in C_{out}\suchthat \sigma(X)\prec_{(2^\lambda,\sqrt{R},S_1,\dots,S_{\delta n})} Y\} > \eps|C_{out}|\right]
      \ < \ 2^{-\omega(n)}.
    \end{equation}
    Union bounding over the at-most-$2^{n}$ possible values of $\sigma$ and $\binom{K}{\lambda-1}^{\delta n}$ values of $S_1,\dots,S_{\delta n}$ gives
    \begin{align}
      \Pr_{C_{out}}\Big[\exists \sigma,S_1,\dots,S_{\delta n} &\suchthat \#\big\{X\in C_{out}:\exists Y\in C_{out}\suchthat \sigma(X)\prec_{(2^\lambda,\sqrt{R},S_1,\dots,S_{\sigma n})} Y\big\} > \eps|C_{out}|\Big] \nonumber\\
      \ &< \ 2^{n}\cdot\binom{K}{\lambda-1}^{\delta n}\cdot 2^{-\omega(n)}\nonumber\\
      \ &< \ 2^{n}\cdot2^{n\delta(\lambda-1)\log K}\cdot 2^{-\omega(n)} = 2^{-\omega(n)}.
    \end{align}
    Additionally, with probability approaching 1, $|C_{out}|\ge \frac34 M$.
    Thus, there exists a $C_{out}$ with $|C_{out}|\ge \frac{3}{4}M$ such that the following holds for all $\sigma\in\calD(n,(1-\delta)n)$ and all $S_1,\dots,S_{\delta n}\in\binom{[K]}{\lambda-1}$: at most $\eps|C_{out}|$ of the codewords $X$ are $(2^\lambda,\sqrt{R},S_1,\dots,S_{\delta n})$ matchable in some other codeword $Y\in C_{out}$.  

    By Lemma~\ref{lem:del-pattern-3}, there exists $\sigma\in D(n, (1-\delta)n)$ and $\tau'\in\calD(\delta nL)$ such that the following holds: If we write $\tau'=\tau'_1\frown\cdots\frown\tau'_{\delta n}$ then each $\tau'_i$ is $(\lambda-1)$-admissible and preserves all $g_j$ for all $j$ not in some size-$\lambda-1$ set $S_i$, and furthermore we have $\tau'(\psi(\sigma(X)))\sqsubseteq\tau(\psi(X))$ for all $X\in[K]^n$.
    By choice of $C_{out}$, for at least $(1-\eps)|C_{out}|$ choices of $X\in C_{out}$, $\sigma(X)$ is not $(2^\lambda,\sqrt{R},S_1,\dots,S_{\delta n})$ matchable in all $Y\in C_{out}$.
    Thus, for these $X$, we have $\tau'(\psi(\sigma(X)))\not\sqsubseteq \psi(Y)$ for all $Y\in C_{out}$ by Lemma~\ref{prop:match-1}.
    Since $\tau'(\psi(\sigma(X)))\sqsubseteq\tau(\psi(X))$ for all $X\in[K]^n$, we have for these $(1-\eps)|C_{out}|$ choices of $X$ that $\tau(\psi(X))\not\sqsubseteq\psi(Y)$ for all $Y\in C_{out}$.
    Thus, choosing $C=\psi(C_{out})$ gives our desired average deletion code.

    Recall $\lambda$ is the smallest integer such that $(1+p)/2<1-2^{-\lambda}$ and $\delta = 1-2^{-\lambda}-p$ so that $2^\lambda = \Theta(1/(1-p))$ and $\delta = \Theta(1-p)$.
    Recall $K = 2^{\ceil{2^{\lambda+5}/\delta}}$ and $R=4K^4$ so that $K=2^{\Theta(1/(1-p)^2)}$.
    The rate of the outer code is $\log 2^{\gamma n}/n\log K$ and the rate of the inner code is $\log K/L$.
    The total rate is thus
    \begin{equation}
      \calR \ = \ \frac{\log K}{48R\cdot 2R^K} \ = \ 2^{-2^{\Theta(1/(1-p)^2)}}.  
      \label{}
    \end{equation}
    This completes the proof of Theorem~\ref{thm:average-case}.
  \end{proof}

%%%%%%%%%%%%%%%%%%%%%%%%%%%%%%%%%%%%%%%%%%%%%%%%%%%%%%%%%%%%%%%%%%%%%%%%%%%%%%%%%%%%%%%%%%%%%%%%%%%%%%%%%%%%%%%%%%%%%%%%%%%%%%%%%%%%%%%%%%%%%%%%%%%%%%%%%%%%%%%%
%%%%%%%%%%%%%%%%%%%%%%%%%%%%%%%%%%%%%%%%%%%%%%%%%%%%%%%%%%%%%%%%%%%%%%%%%%%%%%%%%%%%%%%%%%%%%%%%%%%%%%%%%%%%%%%%%%%%%%%%%%%%%%%%%%%%%%%%%%%%%%%%%%%%%%%%%%%%%%%%

  \section{Relating zero-rate threshold of online and adversarial deletions}
  \label{sec:6}

  \subsection{Proof that $p_0\id{adv}=\half$ if and only if $p_0\id{on,d}=\half$}
  \label{sec:6-1}

%%%%%%%%%%%%%%%%%%%%%%%%%%%%%%%%%%%%%%%%%%%%%%%%%%%%%%%%%%%%%%%%%%%%%%%%%%%%%%%%%%%%%%%%%%%%%%%%%%%%%%%%%%%%%%%%%%%%%%%%%%%%%%%%%%%%%%%%%%%%%%%%%%%%%%%%%%%%%%%%
%%%%%%%%%%%%%%%%%%%%%%%%%%%%%%%%%%%%%%%%%%%%%%%%%%%%%%%%%%%%%%%%%%%%%%%%%%%%%%%%%%%%%%%%%%%%%%%%%%%%%%%%%%%%%%%%%%%%%%%%%%%%%%%%%%%%%%%%%%%%%%%%%%%%%%%%%%%%%%%%

Recall that the \textit{zero-rate threshold for adversarial deletions}, $p_0\id{adv}$, is the supremum of all $p$ such that there exists a family of code $C\subseteq\{0,1\}^n$ with $|C|=2^{\Omega(n)}$ satisfying $\LCS(C) > (1-p)n$.
Additionally, recall that the \textit{deterministic zero-rate threshold for online deletions}, $p_0\id{on,d}$, is the supremum of $p$ such that there exist families of deterministic codes with rate bounded away from 0 that can correct against $pn$ online deletions in the \emph{average case} (when a uniformly random codeword is transmitted). 
  That is, when a uniformly random codeword is transmitted, the probability of a decoding error, over the choice of the codeword and possibly the randomness of the online channel strategy, is $o(1)$ in the block length of the code.
  Since an online strategy can guess the minority bit in the codeword and delete all its occurrences, adversary can always delete the majority bit, we trivially have
 \begin{equation}
    p_0\id{adv} \le p_0\id{on,d} \le \half.
    \label{fact:on-1}
  \end{equation}
  Recall that the currently best known lower bound on $p_0\id{adv}$ is based on the code construction in \cite{BukhGH17} and implies $p_0\id{adv} \ge \sqrt{2}-1$.  An outstanding question in this area is whether $p_0\id{adv} = 1/2$. Our main result in this section ties this question to the corresponding question for online deletions. Namely, we have
\begin{theorem}
  If $p_0\id{adv}=\half$ if and only if $p_0\id{on,d}=\half$.
  \label{thm:adv-on-0}
\end{theorem}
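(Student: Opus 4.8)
The plan is to obtain Theorem~\ref{thm:adv-on-0} directly from Theorem~\ref{thm:online} together with the trivial sandwich \eqref{fact:on-1}; essentially all the content sits in Theorem~\ref{thm:online} (proved in \S\ref{sec:6} via the modified wait--push strategy), and what remains is only to package the two implications.

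\emph{The ``only if'' direction} ($p_0\id{adv}=\half\Rightarrow p_0\id{on,d}=\half$) needs no online machinery: by \eqref{fact:on-1} we always have $p_0\id{adv}\le p_0\id{on,d}\le\half$, so $p_0\id{adv}=\half$ forces $p_0\id{on,d}=\half$.

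\emph{The ``if'' direction} I would prove in contrapositive form: assuming $p_0\id{adv}<\half$, show $p_0\id{on,d}<\half$. Set $p\sar=1/(3-2p_0\id{adv})$, so that $p_0\id{adv}<\half$ gives $p\sar<\half$. Fix any $p\in(p\sar,\half)$ and any $\calR>0$. Theorem~\ref{thm:online} produces a deterministic online channel $\OnAdv$ inflicting $pn$ deletions for which, at all sufficiently large $n$, every length-$n$ rate-$\calR$ code $C$ and every decoder $\Dec$ satisfy $\Pr_{c\in C}[\Dec(\OnAdv(c))\neq c]\ge\eta$ for an absolute constant $\eta>0$. Since $\eta$ does not tend to $0$, no family of deterministic codes of rate bounded away from $0$ can correct $pn$ online deletions in the average case. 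As the set of deletion fractions admitting such a family is downward closed, this gives $p_0\id{on,d}\le p$; taking $p\downarrow p\sar$ yields $p_0\id{on,d}\le p\sar<\half$.

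One routine gap to fill: Theorem~\ref{thm:online} is phrased for a fixed rate $\calR$, whereas ``rate bounded away from $0$'' only guarantees rate $\ge\calR_0>0$ at each length. This is handled by passing to a subcode --- if $C$ with decoder $\Dec$ had average online-decoding error $o(1)$ against $\OnAdv$, then any $C'\subseteq C$ with $|C'|\ge|C|/2$ has average error at most $2\cdot o(1)=o(1)$ using the same $\Dec$, and choosing $|C'|$ to be, say, $\lceil 2^{\calR_0 n/2}\rceil$ produces a code family of a fixed positive rate to which Theorem~\ref{thm:online} applies. I do not anticipate any real obstacle here: the difficulty is entirely in constructing the online adversary of Theorem~\ref{thm:online}, and the present statement is just its logical combination with \eqref{fact:on-1}.
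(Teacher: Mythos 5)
Your proposal is correct and matches the paper's own derivation: the paper likewise obtains Theorem~\ref{thm:adv-on-0} from the sandwich \eqref{fact:on-1} (giving the ``only if'' direction for free) together with the quantitative bound $p_0\id{on,d}\le 1/(3-2p_0\id{adv})$ of Theorem~\ref{thm:adv-on-1}, which is the precise form of Theorem~\ref{thm:online} proved by the wait--push strategy. The closing subcode digression is unnecessary (and as written the two requirements $|C'|\ge|C|/2$ and $|C'|=\lceil 2^{\calR_0 n/2}\rceil$ are incompatible), since the online impossibility argument applies verbatim to any code of rate at least $\calR$.
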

By virtue of \eqref{fact:on-1},  Theorem~\ref{thm:adv-on-0} follows if we show that $p_0\id{adv}<\half$ implies $p_0\id{on,d}<\half$. 
Our result is a negative result against deterministic codes, and our argument does not immediately extend to stochastic codes. While we expect such a result to exist, we elaborate \S\ref{sec:6-2} on the current limitations of our approach to proving a negative result against stochastic codes.

The idea of the proof of Theorem~\ref{thm:adv-on-0} is based on a suitable adaptation of the ``wait-push'' idea of \cite{BassilyS14}. 
We wait $qn$ bits until we know the codeword, then push so that it is confused with another codeword. In the first wait phase, we delete the minority bit for a budget of $q n/2$ deletions, and while we push we need at most $(1-q)np_0\id{adv}$ deletions. 
If we assume that $p_0\id{adv} <\half$, we end up with a total of $\alpha n$ deletions for some $\alpha<\half$. The exact details are more subtle, but this is the rough idea.

From the definition of $p_0\id{adv}$, we have the following fact.
\begin{fact}
  For any $\calR>0$ and $p > p_0\id{adv}$, there exists $n_0$ such that for all $n\ge n_0$, for any $C\subseteq\{0,1\}^n$ with $|C|\ge 2^{\calR n}$, there exist two strings $x,y\in C$ such that $\LCS(x,y)>(1-p)n$.
  \label{lem:adv-1}
\end{fact}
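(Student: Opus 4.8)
The plan is to read this off directly from the definition of $p_0\id{adv}$, together with Levenshtein's lemma (the equivalence, recalled in the introduction, between decodability under $t$ deletions and the condition $\LCS(C)\le n-t-1$). I would argue by contradiction. Suppose the statement fails for some $\calR>0$ and some $p>p_0\id{adv}$. Negating the nested quantifier ``$\exists n_0\ \forall n\ge n_0$'', this means that for arbitrarily large block lengths $n$ there is a code $C_n\subseteq\{0,1\}^n$ with $|C_n|\ge 2^{\calR n}$ in which every pair of distinct codewords $x,y$ satisfies $\LCS(x,y)\le (1-p)n$; equivalently, $\LCS(C_n)\le (1-p)n$.

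The next step is to pass to a slightly smaller deletion fraction. Fix any $p'$ with $p_0\id{adv}<p'<p$. For every $n\ge 1/(p-p')$ we have $(1-p)n\le(1-p')n-1=n-p'n-1$, so each sufficiently large code $C_n$ from the previous paragraph satisfies $\LCS(C_n)\le n-p'n-1$ and is therefore decodable against $p'n$ adversarial deletions by Levenshtein's lemma. These codes exist for arbitrarily large $n$ and all have rate at least $\calR>0$, so they constitute a family of codes of size $2^{\Omega(n)}$ decodable against $p'n$ adversarial deletions. By the definition of the zero-rate threshold this forces $p'\le p_0\id{adv}$, contradicting the choice $p'>p_0\id{adv}$.

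I do not anticipate a genuine obstacle: the statement is essentially a reformulation of the definition of $p_0\id{adv}$. The only points requiring a little care are negating the quantifier correctly, so that the failure of the statement supplies positive-rate codes for infinitely many block lengths (which is exactly what a code family is), and absorbing the additive $-1$ in Levenshtein's lemma, which is handled by the slack $(p-p')n\to\infty$ gained by shrinking $p$ to $p'$ while staying above $p_0\id{adv}$. One could also sidestep this off-by-one by invoking the reformulation of $p_0\id{adv}$ directly in terms of the longest common subsequence recalled at the start of this section, in which case $\{C_n\}$ witnesses $p'\le p_0\id{adv}$ immediately.
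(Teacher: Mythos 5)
Your argument is correct. The paper states this as a Fact without proof, asserting only that it follows from the definition of $p_0\id{adv}$; your contradiction argument --- negating the quantifiers to extract positive-rate codes with $\LCS$ at most $(1-p)n$ for infinitely many block lengths, shrinking $p$ to $p'$ to absorb the $-1$ in Levenshtein's lemma, and concluding $p'\le p_0\id{adv}$ --- is precisely the intended unpacking of that definition.
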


\begin{corollary}
  For any $\calR>0$ and $p>p_0\id{adv}$, there exists $n_0$ such that for all $n\ge n_0$, for any $C\subseteq\{0,1\}^n$ with $|C|\ge 2^{\calR n}$, there exists $C'\subseteq C$ such that $|C'|\ge |C|-2^{\calR n/2}$, and all elements of $C'$ are confusable with another element of $C$.
  That is, for all $x\in C'$ there exists $y\in C$ such that $x\neq y$ and $\LCS(x,y)>(1-p)n$.
  \label{lem:adv-2}
\end{corollary}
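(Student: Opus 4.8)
The plan is to take $C'$ to be exactly the set of elements of $C$ that are confusable with some other element of $C$, and to argue that the complement $C\setminus C'$ cannot be too large by applying Fact~\ref{lem:adv-1} to it with a halved rate parameter. Concretely, given $\calR>0$ and $p>p_0\id{adv}$, I would first invoke Fact~\ref{lem:adv-1} with the rate $\calR/2$ in place of $\calR$ to obtain a threshold $n_0$; this $n_0$ will be the threshold claimed in the corollary.

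Now fix $n\ge n_0$ and any $C\subseteq\{0,1\}^n$ with $|C|\ge 2^{\calR n}$. Define
\[
  C' \ = \ \{x\in C: \exists\, y\in C \text{ with } x\neq y \text{ and } \LCS(x,y)>(1-p)n\}.
\]
By construction every element of $C'$ is confusable with another element of $C$, so it remains only to bound $|C\setminus C'|$. Suppose for contradiction that $|C\setminus C'| > 2^{\calR n/2} = 2^{(\calR/2)n}$. Then $D \defeq C\setminus C'$ is a subset of $\{0,1\}^n$ of size at least $2^{(\calR/2)n}$, so by Fact~\ref{lem:adv-1} (applied with rate $\calR/2$, valid since $n\ge n_0$) there are two distinct strings $x,y\in D$ with $\LCS(x,y)>(1-p)n$. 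But $y\in D\subseteq C$, so $x$ is confusable with an element of $C$ other than itself, i.e. $x\in C'$, contradicting $x\in D = C\setminus C'$. Hence $|C\setminus C'|\le 2^{\calR n/2}$, which gives $|C'|\ge |C| - 2^{\calR n/2}$, as desired.

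There is essentially no serious obstacle here; the only thing to be careful about is the bookkeeping of the rate parameter — one must apply Fact~\ref{lem:adv-1} with rate $\calR/2$ (not $\calR$) to the leftover set, and take the resulting $n_0$ as the threshold in the statement, so that the subset $C\setminus C'$ of hypothesized size exceeding $2^{\calR n/2}$ is large enough to trigger the fact. Everything else is an immediate extremal/pigeonhole argument.
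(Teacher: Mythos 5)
Your proposal is correct and is essentially the same argument as the paper's: both apply Fact~\ref{lem:adv-1} with the halved rate $\calR/2$ to the leftover set of non-confusable codewords to show it has size at most $2^{\calR n/2}$. The paper phrases this as a greedy loop (keep extracting confusable elements while the remainder exceeds $2^{\calR n/2}$) whereas you take $C'$ to be the full set of confusable codewords and argue by contradiction, but these are the same idea.
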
 
\begin{proof}
  We can apply Fact~\ref{lem:adv-1} for $\calR'=\calR/2$.
  Start with $C'=\emptyset$.
  While $|C\setminus C'| > 2^{\calR n/2}$, find an $x$ that is confusable with some $y$ in $C$ and add it to $C'$.
  This is possible by Fact~\ref{lem:adv-1}.
\end{proof}

The following is the precise statement of our result relating the zero-rate thresholds for online and adversarial deletions. 
\begin{theorem}
  We have
  \begin{equation}
    p\id{on,d}_0 \le \frac{1}{3-2p\id{adv}_0}.
    \label{eq:adv-on-1}
  \end{equation}
  In particular, if $p\id{adv}_0<\half$, then $p\id{on,d}_0<\half$.
  \label{thm:adv-on-1}
\end{theorem}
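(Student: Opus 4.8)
The plan is to establish the contrapositive form already displayed: for every $p>p^\ast:=\frac{1}{3-2p_0\id{adv}}$ and every $\calR>0$, every code $C\subseteq\{0,1\}^n$ with $|C|\ge 2^{\calR n}$ and $n$ large can be defeated in the average case by an online channel inflicting at most $pn$ deletions, and then $p_0\id{on,d}\le p^\ast$ follows from the definition of the threshold (and $p^\ast<\tfrac12$ whenever $p_0\id{adv}<\tfrac12$). Since the channel is allowed to depend on $C$, it suffices to exhibit a single \emph{deterministic} online channel $A$ with at most $pn$ deletions whose image satisfies $|A(C)|\le(1-\eta)|C|$ for an absolute constant $\eta>0$: then the optimal decoder errs with probability exactly $1-|A(C)|/|C|\ge\eta$, and every decoder does at least as badly. (Alternatively one may first build a randomized online strategy that fools every decoder and then sample a deterministic one, as sketched in \S\ref{sec:3}.)

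First I would split $C$ by \emph{decision depth}: let $q_xn$ be the least $k$ with $x_1\cdots x_k$ not a prefix of any other codeword, and put $C_{\mathrm{long}}=\{x:q_x>1-p\}$, $C_{\mathrm{short}}=C\setminus C_{\mathrm{long}}$. If $|C_{\mathrm{long}}|\ge|C|/2$ the channel just keeps the first $(1-p)n$ bits and deletes the last $pn$; every $x\in C_{\mathrm{long}}$ shares its length-$(1-p)n$ prefix with another codeword (which then also lies in $C_{\mathrm{long}}$), so $x\mapsto x_1\cdots x_{(1-p)n}$ is at least $2$-to-$1$ on $C_{\mathrm{long}}$ and injective on $C_{\mathrm{short}}$, giving $|A(C)|\le|C|-|C_{\mathrm{long}}|/2\le\tfrac34|C|$. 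The substantive case is $|C_{\mathrm{short}}|\ge|C|/2$, where I would adapt the ``wait--push'' strategy of \cite{BassilyS14}: fix the bit $b\in\{0,1\}$ that is a minority bit in $x_1\cdots x_{q_xn}$ for at least half the codewords of $C_{\mathrm{short}}$; in a \emph{wait phase} delete every occurrence of $b$ until the prefix read so far determines the codeword $x$ (capping the wait-phase deletions at $\lfloor(1-p)n/2\rfloor$, so the budget is never violated even on codewords where $b$ is the majority bit); then, knowing $x$, in a \emph{push phase} greedily delete bits of the suffix $x^\ast$ of length $(1-q_x)n$ so as to leave a prescribed common subsequence of $x^\ast$ and the suffix of a preassigned partner codeword, of exactly the length that makes the total number of deletions equal $pn$.

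The core of the argument is manufacturing partners so that a partnered pair $\{x,y\}$ yields \emph{identical} received words. Both constraints ``$b$ a minority bit'' and ``$q_x\le1-p$'' leave room to group the eligible codewords by \emph{type} $(q,r)$, where $rn$ is the number of $b$'s among the first $qn$ bits (note $r\le q/2$); within a type the wait phase produces the same monochromatic run $\bar b^{(q-r)n}$, so only the length-$(1-q)n$ suffixes (and $(1-q)n\ge pn$) need to agree after pushing. Discarding the $O(n^2)$ small types loses only a $o(1)$ fraction, so a constant fraction of all codewords lies in large types; inside a large type either many codewords already share a suffix --- and then they collide outright --- or the distinct suffixes form a positive-rate code of block length $(1-q)n$, to which I would apply Fact~\ref{lem:adv-1} repeatedly to extract a near-perfect matching all of whose edges have LCS greater than $(1-p_0\id{adv}-\eps)(1-q)n$; pushing both endpoints of an edge to a fixed common subsequence of the required length makes those two codewords collide. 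A short computation shows the needed suffix common subsequence has length $(1-p-q+r)n$, and that $(1-p-q+r)\le(1-p_0\id{adv}-\eps)(1-q)$ holds for small $\eps$ precisely because $p>p^\ast$, via $q/2+p_0\id{adv}(1-q)\le(1-p)/2+p_0\id{adv}p<p$, equivalently $p(3-2p_0\id{adv})>1$ --- this is where the bound $\frac{1}{3-2p_0\id{adv}}$ enters. Summing the resulting collisions over large types yields $|A(C)|\le(1-\eta)|C|$.

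The hardest point is the interplay between onlineness and the deletion budget: the channel must commit to erasing all copies of $b$ before it learns $q_x$ (hence before it knows whether the wait phase will be short), yet must still guarantee at most $pn$ deletions on \emph{every} codeword while forcing identical outputs on partnered pairs; this is what forces the cap on the wait phase, the exact bookkeeping of the push-phase deletions, and the passage to a constant-fraction ``nice'' subcode. A second delicate point --- and the reason I would extract an honest matching rather than invoke Corollary~\ref{lem:adv-2} as a black box --- is turning ``every eligible codeword is LCS-close to some other codeword'' into actual collisions under a single deterministic channel; with a matching in hand the outputs of the two endpoints literally coincide, so no randomized push and no minimax step is needed.
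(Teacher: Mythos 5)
Your proposal is correct and follows essentially the same route as the paper's proof: the wait--push adaptation with the same two strategies (truncate to the first $(1-p)n$ bits when the wait phase is long; delete the minority bit and then push to a prescribed common subsequence of the suffixes when it is short), the same partition of codewords into types by wait length, minority-bit count and majority bit, the same extraction of disjoint confusable pairs by repeated application of Fact~\ref{lem:adv-1}, and the same arithmetic $q/2+p_0\id{adv}(1-q)\le(1-p)/2+p_0\id{adv}p<p$ at the extremal $q=1-p$, which is exactly where $p>1/(3-2p_0\id{adv})$ is used. The only deviation is presentational: you fix the strategy and the bit $b$ deterministically (by comparing $|C_{\mathrm{long}}|$ with $|C_{\mathrm{short}}|$ and taking a majority vote over $b$) and bound the image size $|A(C)|$, whereas the paper randomizes over both choices and derandomizes by sampling; both give the same constant error probability.
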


\iffalse
\begin{remark}
  This is a stronger result than assuming the encoder receives messages according to a worst case distribution (if we prove that on average codewords are decoded incorrectly with constant probability, then \textit{some} codeword decodes incorrectly with constant probability).
\end{remark}
\fi

\begin{remark}
  If it is the case that the bound in \cite{BukhGH17} is tight and $p_0\id{adv} = \sqrt{2}-1 \approx 0.414$, then Theorem~\ref{thm:adv-on-1} gives $p_0\id{on,d} \le 0.4605$.
\end{remark}
The rest of this section is devoted to proving Theorem~\ref{thm:adv-on-1}. We start with the  following helpful definition.
\begin{definition}
  \label{def:wait}
  Given a code $C$ and a codeword $x\in C$ passing through an online deletion channel, define the \textit{wait phase} and the \textit{push phase} for a codeword as follows.
  Let $q_xn$ be the smallest index such that $x_1\dots x_{q_xn}$ uniquely determines the codeword.
  The wait phase refers to the time until Adv receives bit $q_xn$, but not acted on it (chosen whether to transmit or delete it).
  Thus in the wait phase Adv receives bits $1,\dots,q_xn$ and has acted on bits $1,\dots,q_xn-1$.
  The push phase is the time after the end of the wait phase.
  We say the \textit{wait length} of $x$ is $q_xn$ and the \textit{push length} is $(1-q_x)n$.

  For a codeword $x$, let $q_x$ denote the \textit{relative wait length} (wait length divided by $n$), and let $r_{x,b}$ denote number $b$-bits that appear in $x_1x_2\dots x_{q_xn}$, so that $r_{x,0} + r_{x,1} = q_x$.
  Let $b_x\in\{0,1\}$ denote the bit that appears more frequently in $x_1\dots x_{q_xn}$ (break ties arbitrarily).
\end{definition}
\begin{proof}[Proof of Theorem~\ref{thm:adv-on-1}]
  Since $p_0\id{on,d}\le\half$, \eqref{eq:adv-on-1} holds if $p_0\id{adv}=\half$, so assume $p_0\id{adv}<\half$.

  Let $p$ be such that
  \begin{equation}
    \half > p > \frac{1}{3-2p\id{adv}_0}.
    \label{eq:adv-on-2}
  \end{equation}
  We show that, for large enough $n$, when Adv gets $pn$ online deletions, he can, with constant probability independent of $n$, guarantee that, if the encoder sends $x\in C$, the decoder receives a string $s$ that is a substring of $y\in C$ where $x\neq y$.

  Call a pair of codewords $x,y$ \textit{online-confusable} if $q_x = q_{y}\le 1-p
  $, $r_{x,0} = r_{y,0}$, $b_x=b_{y}$, and 
  \begin{equation}
    \LCS(x[q_xn:],y[q_xn:]) > (1-q_x)(1-p_0\id{adv})n.
    \label{eq:stoc-2-4}
  \end{equation}
  For an online-confusable pair $x,y$, let $s\sar(x,y)$ denote a common subsequence of $x[q_xn:]$ and $y[q_xn:]$ with length at least $(1-q_x)(1-p_0\id{adv})n$, and let $s(x,y)$ denote the string $(b_x)^{r_{x,b}n}s\sar(x,y)$.
  Note that $s(x,y)$ is a common substring of $x$ and $y$ and, by choice of $p$ in \eqref{eq:adv-on-2} has length at least
  \begin{align}
    |s(x,y)| 
    &= r_{x,b}n + (1-q_x)(1-p_0\id{adv})n \nonumber\\
    &\ge \frac{q_xn}{2} + (1-q_x)(1-p_0\id{adv})n  \nonumber\\
    &= \pa{1-p\id{adv}_0 - \pa{\half - p\id{adv}_0}q_x} n \nonumber\\
    &\ge \pa{1-p\id{adv}_0 - \pa{\half - p\id{adv}_0}(1-p)} n \nonumber\\
    &> (1-p)n \ ,
    \label{eq:stoc-2-5}
  \end{align}
  where the last step follows from the bound on $p$ in \eqref{eq:adv-on-2}.
  
\noindent   We now claim we can find many disjoint online-confusable pairs.
  For fixed $q\sar, r\sar, b\sar$, let 
  \begin{equation}
    C(q\sar, r\sar,b\sar) \ \defeq \ \{x\in C: q_x = q\sar, r_{x,b\sar} = r\sar, b_x = b\sar\}.
    \label{eq:stoc-2-6}
  \end{equation}
  By Lemma~\ref{lem:adv-1} on the set $\{x[q\sar n:]: x\in C(q\sar,r\sar,b\sar)\}$, among any $2^{\calR n/2}$ (in fact, any $2^{\calR(1-q\sar)n/2}$) codewords in $C(q\sar,r\sar,b\sar)$, there exists some two codewords $x$ and $y$ in $C(q\sar,r\sar,b\sar)$ such that $\LCS(x[q\sar n:],y[q\sar n:]) > (1-q\sar)(1-p_0\id{adv})n$.
  Thus, among any $2^{\calR n/2}$ codewords in $C(q\sar,r\sar,b\sar)$, there are some two that are online-confusable.
  Thus for every $q\sar,r\sar,b\sar$, we can construct disjoint pairs of codewords $(x,y)$ that are online-confusable, so that all but $2^{\calR n/2}$ of the elements of $C(q\sar,r\sar,b\sar)$ are paired.
  For $(x,y)$ in a pair, call $x$ and $y$ \emph{partners}.
  Thus, all but $2n^2\cdot2^{\calR n/2}$ of the elements of $C$ have a partner.
  For sufficiently large $n$, this means that at least $0.99|C|$ elements have a partner.

  The online channel choose to run one of the following strategies, each with probability $\half$.
  \begin{enumerate}
    \item Strategy 1
      \begin{enumerate}
        \item Choose a bit $b$ uniformly from $\{0,1\}$.
        \item During the wait phase, delete every $x_i=1-b$.
          After the wait phase, we know the codeword $x$ and have transmitted $r_{x,b}n$ copies of the bit $b$.
        \item If $x\notin C(q_x,r_{x,b},b)$ (i.e. $x\in C(q_x,r_{x,1-b},1-b)$) or $x$ has no partner codeword, transmit the remaining bits (i.e. give up). 
        \item If $x\in C(q_x,r_{x,b},b)$ and $x$ has a partner $y$, then let $s_{xy}\sar$ by the canonical common subsequence of $x\sar,y\sar$ with $|s_{xy}\sar| \ge (1-q_x)(1-p_0\id{adv})n$.
          Since we know that the remaining bits sent by the encoder is $x\sar$, we can delete bits so that the decoder receives $s\sar$.
        \item If the online channel reaches $pn$ deletions at any point, transmit the remaining bits.
      \end{enumerate}
    \item Strategy 2
      \begin{enumerate}
        \item Transmit the first $(1-p)n$ bits.
        \item Delete the last $pn$ bits.
      \end{enumerate}
  \end{enumerate}

  The idea behind the strategy is that either the wait phase is short, in which case strategy 1 requires fewer than $n/2$ deletions, or the wait phase is long in which case strategy 2 gives less than $n/2$ deletions.
  In particular with probability $\half$, we employ a strategy (either strategy 1 or strategy 2) that, with constant probability, needs at most $\min(q_xn, q_xn/2+(1-q_x)p_0\id{adv}n)$ deletions, and we check this is at most $pn$.

  Suppose for codeword $x\in C$, the wait length is $q_xn$.
  Condition on the fact that $x$ has a partner $y$ with $q_x=q_y$ and canonical common subsequence $s_{xy}\sar$ of $x\sar$ and $y\sar$ with $|s\sar_{xy}|\ge(1-q_x)(1-p\id{adv}_0)n$.
  By the above argument, this assumption holds with probability more than 0.99.

  If $q_x>1-p$, then knowing $x_1\dots x_{(1-p)n}$ does not uniquely determine the codeword $x$, so there exists another codeword $y\in C$ with the same prefix $y_1\dots y_{(1-p)n}$.
  Then when the encoder sends each of $x$ and $y$, the online channel applying strategy 2 transmits the same string $x_1\dots x_{(1-p)n} = y_1\dots y_{(1-p)n}$.
  Since we assume the choice of codeword in $C$ is uniformly at random, $x$ and $y$ are equally likely to be transmitted, so the decoder fails in this case with probability at least $\half$.
  
  If $q_x\le 1-p$, then with probability $\half$ over the choice of $b$, we have $x\in C(q_x,r_{x,b},b)$, in which case strategy 1 reaches step (d). 
  Since $x\in C(q_x,r_{x,b},b)$, we have $x_1\dots x_{q_xn}$ has majority bit $b$, so step (b) deletes at most $q_xn/2$ bits. 
  Since $|s\sar_{xy}|\ge (1-q_x)(1-p\id{adv}_0)n$, step (d) deletes $|x\sar| - |s_{xy}\sar| \le (1-q_x)p\id{adv}_0n$ bits.
  The total number of bits required by strategy 1 is thus
  \begin{equation}
    \frac{q_xn}{2} + (1-q_x)p\id{adv}_0n 
    = \pa{p\id{adv}_0 +\pa{\half - p\id{adv}_0}q_x} n
    < \pa{p\id{adv}_0 +\pa{\half - p\id{adv}_0}(1-p)}n 
    < pn
    \label{}
  \end{equation}
  by our choice of $p$.
  Because $x$ and $y$ are partners, the encoder receives $b^{q_xn}s\sar_{xy}$ for both $x$ and $y$ being sent through the online channel.
  Again, as each of $x$ and $y$ is equally likely, the decoder fails with probability at least $\half$.

  We have shown that with probability at least 0.99 over the choice of codeword, either strategy 1 chooses the correct value of $b$ with probability $\half$, and thus causes decoding error with probability at least $\fourth$, or strategy 2 causes decoding error with probability at least $\half$.
  Adv chooses the ``correct'' strategy with probability $\half$, so with total probably at least $0.99\cdot\half\cdot\min(\fourth,\half) > \frac{1}{10}$, there is a decoding error, as desired.
\end{proof}
 
\subsection{Extending to stochastic codes}
\label{sec:6-2}

It is natural to wonder whether the above conditional negative result on deterministic coding against online deletions with deletion fraction approaching $1/2$ can be extended to stochastic codes.
Indeed, \cite{BassilyS14} and \cite{DeyJLS13}, provide tight upper bounds for the capacities of the online erasure and substitution channels, respectively.
Both works provide online channels that cause constant probability decoding errors for both deterministic and stochastic codes.
Both use a two-stage strategy.
In \cite{BassilyS14}, the authors use a ``wait-push'' strategy to construct an online erasure adversary that, for some transmitted word $x$, erases none of the first $qn$ bits, for suitable $0<q<1$, then suitably chooses a random codeword $x'$, and erases all bits $x_i$ for which $i>qn$ and $x_i\neq x_i'$. 
In \cite{DeyJLS13}, the authors use a ``babble-push'' strategy to construct an online substitution adversary that, for some transmitted word $x$, randomly flips $q'n$ of the first $qn$ bits for some $0<q'<q<1$, then suitably chooses a random codeword $x'$, and with probability $1/2$ flips each bit $x_i$ for which $i>qn$ and $x_i\neq x_i'$.

A key difference between our wait-push strategy and the strategies of \cite{BassilyS14} and \cite{DeyJLS13} is that they do not require the online adversary to know the transmitted word \emph{exactly} - they simply need a guarantee that, with constant probability, the codeword $x'$ that they are ``pushing to'' (i) has low Hamming distance from the transmitted word and (ii) is encoded by a different message.
However, our wait-push strategy requires us to wait until we know the transmitted codeword exactly, so that we can identify the LCS between the transmitted word $x$ and the word $x'$ we are pushing to.
In the case that the length of the wait phase, $qn$, is large (say, close to $n$), our online channel against deterministic codes can still guarantee constant probability of error by, with probability 1/2, choosing a different strategy that simply deletes the last $pn$ bits.
However, this patch of choosing a different strategy doesn't work in the stochastic case, since a stochastic code can have a message be identified using a small fraction of the bits while individual codewords need many bits to be identified.
In such a code, the wait-push strategy must wait a long time until it knows the codeword exactly and is thus expensive, while the ``delete the last $pn$ bits'' strategy obscures only the codeword but not the message.
Stochastic codes with this description exist in the literature.
One example is the mega-subcode construction of \cite{ChenJL15}, which encodes a message independently into $t$ different stochastic codes and concatenates the results: in this construction, the message can be identified in the first $n/t$ bits, but an online channel must see at least $n(t-1)/t$ bits to identify the codeword.

In spite of these obstacles to obtaining a negative result for stochastic codes, our result for the deterministic case is still strong evidence that the zero rate thresholds for online and adversarial deletions are either both or neither 1/2. 
In the oblivious error model, we simultaneously constructed deterministic codes under an average error criterion and stochastic codes under a worst case error criterion for $p$ approaching 1, suggesting that for the online setting the zero-rate threshold and deterministic zero-rate thresholds are the same.
As a caution, we note that this is not always the case \cite{DeyJLS16}: for online erasure channels with one bit of delay, the capacity of the channel for deterministic codes under average error criterion is $1-2p$, while for stochastic codes it is $1-p$.

%%%%%%%%%%%%%%%%%%%%%%%%%%%%%%%%%%%%%%%%%%%%%%%%%%%%%%%%%%%%%%%%%%%%%%%%%%%%%%%%
%%%%%%%%%%%%%%%%%%%%%%%%%%%%%%%%%%%%%%%%%%%%%%%%%%%%%%%%%%%%%%%%%%%%%%%%%%%%%%%%

\section{Conclusion}

  \subsection{Decoding deletions vs. insertions and deletions}

  A lemma due to Levenshtein \cite{Levenshtein66} states that a code $C$ can decode against $pn$ adversarial deletions if and only if it can decode against $pn$ adversarial insertions and deletions. 

  In the oblivious error model, decoding insertions vs insertions and deletions are not the same.
  In decoding against oblivious deletions, we not only need to worry about \textit{whether} two codewords are confusable by insertions and deletions, but also \textit{the number of ways} in which they are confusable.
  In particular, when the fraction of oblivious deletions is greater than $\half$, it is inevitable that there are codewords $c$ and $c'$ that are confusable, but we need the number of deletion patterns $\tau$ that confuse them, i.e. with $\tau(c)\sqsubseteq c'$ or $\tau(c')\sqsubseteq c$, to be small.

  It may be possible to extend our oblivious deletion construction to oblivious insertions and deletions by tweaking the parameters.  
  This was the case for bridging the gap between efficiently decoding adversarial deletions \cite{GuruswamiW17} and efficiently decoding adversarial insertions and deletions \cite{GuruswamiL16}.  
  Our analysis for deletions analyzes runs of 0s and 1s, but perhaps it would be possible to analyze insertions and deletions by relaxing the definition of a run to be ``a subword with at least 0.9 density of 0s or 1s.''  
  For now, however, we leave the case of correcting a fraction of oblivious insertions \emph{and} deletions approaching 1 as an open problem.

  \subsection{Open problems}
    These remain a number of open questions concerning deletion codes.
    Our work brings to the fore a few more.
    \begin{enumerate}
      \vspace{-1ex}
      \itemsep=0ex
  \item What is the zero-rate threshold $p_0\id{adv}$ for adversarial deletions?
  \item We showed the existence of codes that decode against $pn$ oblivious deletions for any $p<1$. Can we modify the proof to show existence of codes decoding against a combination of $pn$ oblivious insertions and deletions for every $p<1$?
  \item Can we adapt our existence proof of codes for oblivious deletions
   to obtain explicit codes that are constructable, encodable, and decodable in polynomial time?
  \item For erasures, the capacity of the random error channel and the capacity of the oblivious error channel are both $1-p$.
    For bit flips, the random and oblivious error channels also have the same capacity at $1-h(p)$.
    Can we construct codes that decode oblivious deletions with rate approaching the capacity of the random deletion channel (note this capacity itself is \emph{not} known)?
    On the other end, could we upper bound the best possible rate for correcting a fraction $p$ of oblivious deletions by $o(1-p)$ as $p\to1$?
    Such an upper bound would fundamentally distinguish the behavior of the deletions from errors and erasures.
  \item Can one find deterministic codes correcting $p$ fraction of \emph{online} deletions when $p$ approaches $\half$? By Theorem~\ref{thm:adv-on-0}, this would imply $p_0\id{adv} = 1/2$. 
  \item Given the discussion in \S\ref{sec:6-2}, can we extend our results against online deletions to stochastic codes, showing that if $p_0\id{adv}<1/2$ then there do not exist stochastic codes correcting a fraction of online deletions approaching 1/2?
\end{enumerate}

\section{Acknowledgements}
The authors would like to thank Joshua Brakensiek and Boris Bukh for helpful discussions.
The authors would also like to thank anonymous reviewers for helpful feedback on earlier versions of this paper.

%%%%%%%%%%%%%%%%%%%%%%%%%%%%%%%%%%%%%%%%%%%%%%%%%%%%%%%%%%%%%%%%%%%%%%%%%%%%%%%%%%%%%%%%%%%%%%%%%%%%%%%%%%%%%%%%%%%%%%%%%%%%%%%%%%%%%%%%%%%%%%%%%%%%%%%%%%%%%%%%
%%%%%%%%%%%%%%%%%%%%%%%%%%%%%%%%%%%%%%%%%%%%%%%%%%%%%%%%%%%%%%%%%%%%%%%%%%%%%%%%%%%%%%%%%%%%%%%%%%%%%%%%%%%%%%%%%%%%%%%%%%%%%%%%%%%%%%%%%%%%%%%%%%%%%%%%%%%%%%%%

\appendix

%%%%%%%%%%%%%%%%%%%%%%%%%%%%%%%%%%%%%%%%%%%%%%%%%%%%%%%%%%%%%%%%%%%%%%%%%%%%%%%%%%%%%%%%%%%%%%%%%%%%%%%%%%%%%%%%%%%%%%%%%%%%%%%%%%%%%%%%%%%%%%%%%%%%%%%%%%%%%%%%
%%%%%%%%%%%%%%%%%%%%%%%%%%%%%%%%%%%%%%%%%%%%%%%%%%%%%%%%%%%%%%%%%%%%%%%%%%%%%%%%%%%%%%%%%%%%%%%%%%%%%%%%%%%%%%%%%%%%%%%%%%%%%%%%%%%%%%%%%%%%%%%%%%%%%%%%%%%%%%%%

\section{Relating average-case and worst-case miscommunication probability}
\label{app:1}

In this appendix, we prove the following lemma, which shows that
constructing (deterministic) codes decodable against oblivious
deletions in the average case (over random messages, in the sense of
Theorem~\ref{thm:average-case}), suffices to construct randomized
codes with low decoding error probability for every message, as
guaranteed by Theorem~\ref{thm:oblivious}.

\begin{lemma}
  Let $p\in(0,1)$. 
  Suppose we have a family of codes $C$ of length $n$ and rate $\calR$ such that, for any deletion pattern $\tau$ with at most $pn$ deletions, at most $\eps=o_n(1)$ of the codewords are decoded incorrectly.
  Then, for any $\delta<\half$, provided $n$ is sufficiently large,  we can find a family of stochastic codes $C'$ of length $n$ and rate $\calR(1-\delta)-o(1)$ that corrects $pn$ oblivious deletions with probability $1-3\eps$.
\end{lemma}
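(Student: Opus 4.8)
The plan is to use the standard random-binning reduction: encode a message by a uniformly random codeword drawn from a randomly assigned bin of $C$, and decode by running unique decoding on $C$ and then reading off which bin the decoded codeword lies in. It is convenient first to observe that we may assume $\eps=\eps(n)$ is not astronomically small, say $\eps(n)\ge 2^{-\delta\calR n/2}$: the hypothesis ``at most $\eps|C|$ codewords decode incorrectly'' only weakens as $\eps$ grows, and in the regime where the true $\eps(n)$ is smaller the resulting error bound $3\cdot 2^{-\delta\calR n/2}$ is still $o_n(1)$, which is all Theorem~\ref{thm:oblivious} needs. Set the bin size to be $t\approx 2^{\delta\calR n}$ and the number of bins to $M_0=|C|/t\approx 2^{(1-\delta)\calR n}$; a routine adjustment (discarding an $o_n(1)$-fraction of codewords so that $t\mid|C|$) costs only $o(1)$ in the rate. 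For a deletion pattern $\tau$ deleting at most $pn$ bits, let $\calB_\tau=\{c\in C:\exists\,c'\in C,\ c'\neq c,\ \tau(c)\sqsubseteq c'\}$ be its set of ``bad'' codewords; the hypothesis is exactly $|\calB_\tau|\le\eps|C|$ for every such $\tau$.

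Concretely, pick a uniformly random bijection $\phi:[M_0]\times[t]\to C$, set $S_m=\{\phi(m,1),\dots,\phi(m,t)\}$, let the encoder on message $m$ output $\phi(m,U)$ with $U$ uniform on $[t]$ and private to the encoder, and let the decoder, on input $s$, output the first coordinate of $\phi^{-1}(c')$ if there is a unique $c'\in C$ with $s\sqsubseteq c'$, and $\perp$ otherwise. Since $\tau(c)\sqsubseteq c$ always, this decoder recovers $m$ from $\tau(\phi(m,U))$ whenever $\phi(m,U)\notin\calB_\tau$, so for every $m$ and $\tau$ we have $\Pr_U[\text{decoding fails}\mid m,\tau]\le |S_m\cap\calB_\tau|/t$. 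For a fixed pair $(m,\tau)$, $S_m$ is a uniformly random $t$-subset of $C$ (the image of a fixed set under a random bijection), so each codeword of $\calB_\tau$ lies in $S_m$ with probability $t/|C|$, whence $|S_m\cap\calB_\tau|$ is hypergeometric with mean $|\calB_\tau|\,t/|C|\le\eps t$. As sampling without replacement concentrates at least as well as with replacement, a Chernoff bound gives $\Pr_\phi[\,|S_m\cap\calB_\tau|>3\eps t\,]\le e^{-\Omega(\eps t)}=2^{-\omega(n)}$, using $\eps t\ge 2^{\delta\calR n/2}=\omega(n)$.

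Finally, union bound over the at most $M_0\binom{n}{pn}\le 2^{2n}$ pairs $(m,\tau)$: with probability at least $1-2^{2n}\cdot 2^{-\omega(n)}>0$ for all large $n$, the chosen $\phi$ satisfies $|S_m\cap\calB_\tau|\le 3\eps t$ simultaneously for all messages $m$ and all patterns $\tau$ deleting at most $pn$ bits. Fixing such a $\phi$ yields a stochastic code of block length $n$ with $M_0=2^{((1-\delta)\calR-o(1))n}$ messages and $\Pr_U[\dec(\tau(\enc(m)))\neq m]\le 3\eps$ for every $m$ and every admissible $\tau$ --- that is, it corrects $pn$ oblivious deletions with probability $1-3\eps$, as claimed. (The value range $\delta<1/2$ in the statement is more than the argument requires; any fixed $\delta\in(0,1)$ works.)

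I expect the one genuinely delicate point to be the concentration step: the per-pair failure probability has to beat the $2^{O(n)}$-size union bound over $(m,\tau)$, which forces $\eps t=\omega(n)$, i.e.\ $\eps=\omega(n/t)$; this is precisely why the first step raises $\eps$ up to $2^{-\delta\calR n/2}$ whenever it is smaller --- harmless, since in that regime $3\eps$ is already far stronger than the $o_n(1)$ the application needs. Everything else --- controlling bin sizes via the divisibility adjustment, and the fact that a random sample from a random bin is marginally uniform on $C$ (which is what underlies ``average-over-messages implies per-message'' here) --- is routine bookkeeping that only affects the $o(1)$ in the rate.
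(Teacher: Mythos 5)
Your proof is correct and follows essentially the same route as the paper's (Appendix~\ref{app:1}): randomly bin $C$ into groups of size $t\approx 2^{\delta\calR n}$, bound for each fixed pair $(m,\tau)$ the probability that more than a $3\eps$-fraction of the bin $E_m$ lands in the bad set $\calB_\tau$, and union bound over the $2^{O(n)}$ such pairs. Your explicit preliminary step of raising $\eps$ to at least $2^{-\delta\calR n/2}$ so that the concentration bound $e^{-\Omega(\eps t)}$ actually beats the union bound is a point the paper's write-up glosses over (it quotes a per-bin failure probability of $e^{-t/3}$ with no dependence on $\eps$), so your treatment is, if anything, the more careful one.
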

\begin{proof}
  For set of codewords $C^*$, a pair $(c, \tau)$ (where $c\in C^*$ and $\tau$ is a deletion pattern) is said to be $C^*$-bad if $\tau(c)\subseteq c'$ for some $c'\in C^*\setminus\{c\}$, and $C^*$-good otherwise.

  Let $M = |C| = 2^{\calR n}$. For any $\tau$, let $A_\tau$ denote the set of $c$ such that $(c,\tau)$ is $C$-bad. 
  Then $|A_\tau|\le \eps |C|$ by assumption that $C$ decodes against $pn$ deletions in the average case.
  Let $d=\ceil{h(p)/\calR}$ be a constant over varying $n$, so that there are at most $M^d>\binom{n}{pn}$ choices of $\tau$.

  Let $t=M^\delta$, and $N = \floor{M^{1-\delta}/2}$. 
  We construct $C'$ iteratively.
  For $1\le k\le N$, chose a random subset of exactly $t$ codewords from $C\setminus\cup_{i=1}^{k-1}E_i$ to form $E_k$.
  With these sets of codewords, we associate each message $m_i$ with a set $E_i$.
  We encode each message $m_i$ by choosing uniformly at random one of $t$ codewords in some set $E_i$.

  Note that $C'\subseteq C$ and $|C'| = Nt = M/2$.
  It is easy to see the rate of $C'$ is $\calR(1-\delta)-o(1)$.

  We claim that, with positive probability over the choice of $C'$, $C'$ corrects $pn$ oblivious deletions with probability $1-3\eps$.
  Define $B_\tau$ to be all $c$ such that $(c,\tau)$ is $C'$-bad. 
  As $B_\tau\subseteq A_\tau$, we have $|B_\tau|\le\eps M$.
  We wish to show that the probability there exists an $B_\tau$ and a message $m_i$ with encoding set $E_i$ such that $|B_\tau\cap E_i|\ge 3\eps|E_i|$ is less than 1.
  Fix a $\tau$. 
  We show that the probability $|B_\tau\cap E_i|\ge 3\eps|E_i|$ is tiny.
  When we chose $E_i$, we can imagine we picked the $t$ elements of $E_i$ one after the other. 
  Each one was chosen from at least $M/2$ codewords, so each codeword lands in $|B_\tau\cap E_i|$ with probability at most $\eps M/(M/2) = 2\eps$.
  By Chernoff bounds, $\Pr\ba{\abs{B_\tau\cap E_i}\ge 3\eps|E_i|} \le e^{-t/3}$.
  By union bound over all $E_i$ and all $B_\tau$, we have
  \begin{equation}
    \Pr\ba{\exists E_i\exists \tau : \abs{B_\tau\cap E_i}\ge 3\eps|E_i|}
      \ \le \ N\cdot M^d\cdot e^{-t/3}
      \ < \ M^{d + 1} / e^{M^\delta/3}
  \end{equation}
  This is less than $1$ for $n > \frac{2}{\delta \calR}\log(3(d+1))$, so our code corrects $pn$ oblivious deletions with probability $1-3\eps$, as desired.
\end{proof}

%%%%%%%%%%%%%%%%%%%%%%%%%%%%%%%%%%%%%%%%%%%%%%%%%%%%%%%%%%%%%%%%%%%%%%%%%%%%%%%%%%%%%%%%%%%%%%%%%%%%%%%%%%%%%%%%%%%%%%%%%%%%%%%%%%%%%%%%%%%%%%%%%%%%%%%%%%%%%%%%
%%%%%%%%%%%%%%%%%%%%%%%%%%%%%%%%%%%%%%%%%%%%%%%%%%%%%%%%%%%%%%%%%%%%%%%%%%%%%%%%%%%%%%%%%%%%%%%%%%%%%%%%%%%%%%%%%%%%%%%%%%%%%%%%%%%%%%%%%%%%%%%%%%%%%%%%%%%%%%%%
\section{Comparison of oblivious deletions with oblivious bit flips}
\label{app:2}
First, for completeness, we outline the proof that that there exist codes against oblivious bit flips of rate matching the capacity of the binary symmetric channel. 
After this, we show that it is not possible to follow the same approach to construct codes against oblivious deletions.

\begin{theorem}[\cite{Langberg08,CsiszarN88}]
  For every $p \in [0,1/2)$, and every $\calR < 1-h(p)$, there exists a binary code family of rate $\calR$ that can correct a fraction $p$ of oblivious bit flips.
 \footnote{Note that $1-h(p)$ is an upper bound on the rate as the capacity of the binary symmetric channel with crossover probability $p$ is $1-h(p)$ and the oblivious channel can always just choose the error vector randomly.}
  \label{thm:app-2-1}
\end{theorem}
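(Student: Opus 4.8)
The plan is to run the textbook random-coding argument, whose single delicate point is the union bound over error patterns; I would handle that exactly as one does for arbitrarily varying channels, via the method of types. By the reduction of Appendix~\ref{app:1} (whose statement and proof carry over verbatim with ``deletion pattern $\tau$'' replaced by ``bit-flip pattern $e$'' and $\tau(c)$ replaced by $c\oplus e$), it suffices to fix some $\calR'$ with $\calR<\calR'<1-h(p)$ and construct, for all large $n$, a deterministic code $C\subseteq\{0,1\}^n$ of rate $\calR'$ that ``decodes $pn$ oblivious bit flips in the average case'': for every $e\in\{0,1\}^n$ with $\wt(e)\le pn$, all but an $\epsilon_n=o_n(1)$ fraction of $c\in C$ have the property that no other codeword of $C$ lies within Hamming distance $pn$ of $c\oplus e$. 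The decoder, on input $y$, returns the unique codeword within Hamming distance $pn$ of $y$ if one exists and $\perp$ otherwise.

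First I would let $C$ be a uniformly random multiset of $M=2^{\calR' n}$ points of $\{0,1\}^n$ and set $\delta=1-h(p)-\calR'>0$. Fix $e$ with $\wt(e)\le pn$, and call $c\in C$ \emph{bad for $e$} if some $c'\in C\setminus\{c\}$ satisfies $\wt(c\oplus c'\oplus e)\le pn$ (equivalently $c'$ lies in the radius-$pn$ Hamming ball around $c\oplus e$). Any fixed pair of distinct codewords consists of two independent uniform strings, and the radius-$pn$ ball has size at most $2^{h(p)n}$ since $p<\tfrac12$, so a union bound over the at most $M-1$ other codewords gives $\Pr_C[\,c\text{ bad for }e\,]\le M\cdot 2^{(h(p)-1)n}=2^{-\delta n}$, hence $\E_C\big[\#\{c\in C:c\text{ bad for }e\}\big]\le 2^{-\delta n}M$. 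Markov's inequality then yields
\[
  \Pr_C\big[\,\#\{c\in C:c\text{ bad for }e\}>2^{-\delta n/2}M\,\big]\ \le\ 2^{-\delta n/2}.
\]

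The observation that closes the argument is that this probability depends on $e$ only through $\wt(e)$: for a coordinate permutation $\pi$ sending $e$ to $e'$, the number of codewords bad for $e'$ in $C$ equals the number bad for $e$ in $\pi^{-1}(C)$, and $\pi^{-1}(C)$ is distributed as $C$. So I would apply the displayed bound to one representative of each weight $0,1,\dots,\lfloor pn\rfloor$ and union bound over these at most $n+1$ cases: with probability at least $1-(n+1)2^{-\delta n/2}>0$ the random $C$ has, for \emph{every} $e$ with $\wt(e)\le pn$, at most $\epsilon_n:=2^{-\delta n/2}=o_n(1)$ fraction of codewords bad for $e$. Fixing such a $C$ and feeding it to the Appendix~\ref{app:1} reduction (with the reduction's parameter chosen small enough that the resulting rate exceeds $\calR$) produces a stochastic code of rate $\ge\calR$ with decoding error $o_n(1)$ for every message and every bit-flip pattern of weight $\le pn$, which is the claim.

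The hard part is precisely the step just described: a naive union bound over the $\binom{n}{\le pn}\approx 2^{h(p)n}$ error patterns would exactly consume the slack from $\calR'<1-h(p)$ (and a bare Markov bound per pattern is not strong enough to beat it), so one must exploit the coordinate-permutation symmetry of the i.i.d.\ codebook to collapse that to a union bound over only $O(n)$ weights. Everything else is routine. It is worth recording that this argument hinges on the ``confusion region'' for a \emph{fixed} error pattern being a Hamming ball of radius $pn$, of volume $2^{h(p)n}$; the analogous reasoning fails for deletions, which is the contrast taken up next.
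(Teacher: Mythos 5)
There is a genuine gap at the step you yourself flag as the crux. Your symmetry observation is correct as far as it goes: for a coordinate permutation $\pi$ with $\pi(e)=e'$, the count of codewords bad for $e'$ in $C$ equals the count bad for $e$ in $\pi^{-1}(C)$, and $\pi^{-1}(C)\sim C$, so $\Pr_C[A_e]$ depends on $e$ only through $\wt(e)$, where $A_e$ is the event that more than a $2^{-\delta n/2}$ fraction of codewords are bad for $e$. But this only says the events $A_e$ within a weight class are \emph{equiprobable}; it does not let you replace the union bound over all $\approx 2^{h(p)n}$ patterns by a union bound over $n+1$ representatives. The events $A_e$ and $A_{e'}$ for distinct $e,e'$ of the same weight are different subsets of the probability space (a fixed realization of an i.i.d.\ codebook is not permutation-invariant), so $\Pr[\bigcup_e A_e]$ must still be bounded by $\sum_e \Pr[A_e]\approx 2^{h(p)n}\cdot 2^{-\delta n/2}$, which is vacuous. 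Markov's inequality simply does not give a per-pattern failure probability small enough to survive the true union bound, and you cannot move the existential over $e$ inside the definition of ``bad'' either, since ``$\exists e$ of weight $\le pn$ confusing $c$ with $c'$'' collapses to the minimum-distance condition $\wt(c\oplus c')\le 2pn$, which is exactly the adversarial setting the oblivious model is supposed to beat.

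The paper's proof closes this gap differently: it works directly with the stochastic code (each message encoded into $t=n$ random codewords) and shows that for a \emph{fixed} message $m$ and error vector $e$, the probability that more than an $\eps=1/\log n$ fraction of $E(m)$ is bad concentrates to $2^{-\tilde\Omega(n^2)}$, because it is the probability that $\eps t = n/\log n$ independent $2^{-\Omega(n)}$-probability events all occur. That doubly small bound is what survives the union bound over all $2^n$ error vectors and all $2^{\calR n}$ messages. To repair your deterministic average-case route you would need an analogous high-probability (not first-moment) bound per pattern, e.g.\ a $k$-th moment argument in the spirit of Lemma~\ref{lem:sparse-undirected-whp} with $k=\Theta(\log n)$ yielding $2^{-\Omega(n\log n)}$ per pattern; the symmetry collapse alone cannot do the job.
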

\begin{proof}
We consider a stochastic code $C$ where each message $m$ is mapped uniformly into a set $E(m)$ of size $t$. 
  Choose $t=n$ and $\eps = 1/\log n$.
  This code decodes successfully if, for all $m$,
  \begin{equation}
    \Pr_{c\in E(m)}\left[ c+e\in \bigcup_{c\in C\setminus E(m)}B_{pn}(c')\right]
      \ \ge \ 1-\eps
  \end{equation}
  where $B_{pn}(x)\subseteq\{0,1\}^n$ is the set of all words within Hamming distance $pn$ of $x$.

  Choose $C$ to have $2^{\calR n}$ codewords at random.
  Fix a message $m$, the rest of the codeword $C\setminus E(m)$, and the error vector $e$. We have
  \begin{equation}
    \Pr_{c\sim\{0,1\}^n}\left[c\in \bigcup_{c\in C\setminus E(m)}B_{pn}(c'+e)\right]
      \ \le \ \frac{1}{2^n}|C|\cdot|B_{pn}(0)|
      \ \approx \ \frac{2^{\calR n}\cdot 2^{h(p)n}}{2^n},
  \end{equation}
  which is $2^{-\Omega(n)}$ if $\calR < 1-h(p)$. 
  In this case 
  \begin{align}
    \Pr_{E(m)}&\ba{
      \Pr_{c\in E(m)}\ba{c+e\notin \bigcup_{c\in C\setminus E(m)}B_{pn}(c')}
        < 1-\eps
    } \nonumber\\
      &= \Pr_{E(m)}\ba{
        \Pr_{c\in E(m)}\ba{c\notin \bigcup_{c\in C\setminus E(m)}B_{pn}(c'+e)}
          < 1-\eps
      } \nonumber\\
      &= \Pr_{E(m)}\ba{
        \abs{E(m)\cap \bigcup_{c\in C\setminus E(m)}B_{pn}(c'+e)}
          > \eps t
      } \nonumber\\
      &\le 2^{-\Omega(\eps nt)} = 2^{-\tilde\Omega(n^2)}.
  \end{align}
  Thus, applying a union bound over all $m$ and error vectors $e$, we get
  \begin{align}
    \Pr_C\ba{\exists m:
      \Pr_{c\in E(m)}\ba{c+e\in \bigcup_{c\in C\setminus E(m)}B_{pn}(c')}
        > \eps
    } 
    \le 2^{\calR n}\cdot 2^n\cdot 2^{-\tilde\Omega(n^2)} \le 2^{-\tilde\Omega(n^2)}.
  \end{align}
  Thus when $\calR<1-h(p)$, we can construct a rate $\calR$ stochastic code correcting w.h.p. $p$ oblivious bit flips.
\end{proof}

For oblivious deletions the above technique does not work.
This is established by the following lemma.
\begin{lemma}
  \label{lem:app-2-2}
  Let $C$ be a random length $n$ rate $\calR$ stochastic binary code such that each message is encoded uniformly at random in one of $t$ codewords.
  If $p\ge0.4$ and $0<\eps<1$, then for any message $m$ and deletion pattern $\tau$ we have
  \begin{equation}
    \Pr_{C}\ba{\Pr_{c\in E(m)}\ba{\exists c'\in C\setminus E(m):\tau(c)\sqsubseteq c'} > \eps } \ > \ 2^{-h(p)n}
  \end{equation}
  where $D_{k}(s)=\{s':|s'|=|s|-k, s'\le s\}$.
\end{lemma}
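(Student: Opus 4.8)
The plan is to show that, with probability larger than $2^{-h(p)n}$ over the random stochastic code $C$, there is a single codeword $c^{\star}$ in $C\setminus E(m)$ which contains a $>\eps$ fraction of the received words $\{\tau(c):c\in E(m)\}$ as subsequences. It is worth noting at the outset why the bit‑flip argument of Theorem~\ref{thm:app-2-1} does not transfer: a greedy‑matching computation shows that for a uniformly random codeword $c'$ and \emph{any} fixed string $s$ of length $(1-p)n$, $\Pr_{c'}[s\sqsubseteq c']$ equals $Q:=\Pr[\Binomial(n,1/2)\ge(1-p)n]$, which is exponentially small when $p<\tfrac12$; so when the rate is small the expected fraction of confused codewords in $E(m)$ is exponentially small, and a first‑moment/Markov estimate yields no lower bound whatsoever. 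One must instead lower‑bound the probability of a ``lottery'' event, namely that \emph{some} codeword of $C$ happens to be an almost‑universal supersequence. That event has probability $2^{-\Theta((1-p)n)}$, and the crux is that $\tfrac32(1-p)<h(p)$ for $p\ge 0.4$, so it is vastly more likely than $2^{-h(p)n}$ — whereas in the bit‑flip setting the corresponding failure probability is $2^{-\omega(h(p)n)}$, which is exactly what lets the union bound over the $\binom{n}{pn}\approx 2^{h(p)n}$ deletion patterns go through there.

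\textbf{A large family of near‑universal targets.} Write $k=(1-p)n$ and $\rho(w)$ for the number of runs of a word $w$. Fix a small constant $\epsilon'>0$, set $u_0=\lfloor(\tfrac{3p-1}{2}-\epsilon')n\rfloor$ (a positive fraction of $n$ since $p\ge 0.4>\tfrac13$) and $m=\lfloor(n-u_0)/2\rfloor$, and let $\mathcal S=\{\,u\cdot(01)^m : u\in\{0,1\}^{u_0}\,\}\subseteq\{0,1\}^n$, so $|\mathcal S|=2^{u_0}$ (ignoring $O(1)$ rounding bits). An elementary greedy‑embedding computation shows that a length‑$k$ string $s$ satisfies $s\sqsubseteq(01)^m$ whenever $\rho(s)\ge 2k-2m+1$; since $(01)^m$ is a suffix of every $w\in\mathcal S$, it follows that every $w\in\mathcal S$ contains, as a subsequence, every string of $\mathcal S':=\{\,s\in\{0,1\}^k:\rho(s)\ge 2k-2m+1\,\}$. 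By the choice of $u_0$ and $m$ one has $2k-2m+1\le(\tfrac{1-p}{2}-\epsilon')n+O(1)$, which sits a linear margin below the mean number of runs $\tfrac{k+1}{2}\approx\tfrac{1-p}{2}n$ of a uniform string in $\{0,1\}^k$; as that number of runs is distributed as $1+\Binomial(k-1,\tfrac12)$, a Chernoff bound gives $|\mathcal S'|\ge(1-2^{-\Omega(n)})\,2^k$.

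\textbf{The probabilistic step.} In the random stochastic code all codewords are i.i.d.\ uniform, so $E(m)$ and $C\setminus E(m)$ are independent, and for any deletion pattern $\tau$ with $pn$ deletions and a uniform $c$, the string $\tau(c)$ is uniform on $\{0,1\}^k$. Let $A$ be the event $\mathcal S\cap(C\setminus E(m))\ne\emptyset$ — determined by $C\setminus E(m)$ — and let $G$ be the event $\#\{c\in E(m):\tau(c)\in\mathcal S'\}>\eps t$ — determined by $E(m)$. On $A\cap G$, pick any $c^{\star}\in\mathcal S\cap(C\setminus E(m))$: each of the $>\eps t$ codewords $c\in E(m)$ with $\tau(c)\in\mathcal S'$ then satisfies $\tau(c)\sqsubseteq c^{\star}$ with $c^{\star}\in C\setminus E(m)$, so the event of the lemma holds. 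As $A$ and $G$ are independent, $\Pr_C[\text{failure}]\ge\Pr[A]\,\Pr[G]$. Using $|C\setminus E(m)|\ge 1$ and the elementary bound $1-(1-x)^N\ge\tfrac12\min(1,xN)$,
\[
\Pr[A]\;\ge\;\tfrac12\min\!\Big(1,\ \tfrac{|\mathcal S|\cdot|C\setminus E(m)|}{2^n}\Big)\;\ge\;\tfrac12\,2^{\,u_0-n}\;\ge\;\tfrac12\,2^{-(\frac32(1-p)+2\epsilon')n}
\]
for $n$ large; and Markov's inequality applied to $\#\{c\in E(m):\tau(c)\notin\mathcal S'\}$, whose mean is at most $t\cdot 2^{-\Omega(n)}$, gives $\Pr[G]\ge 1-o(1)\ge\tfrac12$. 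Hence $\Pr_C[\text{failure}]\ge\tfrac14\,2^{-(\frac32(1-p)+2\epsilon')n}$.

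\textbf{Closing, and the main difficulty.} It remains to verify $h(p)>\tfrac32(1-p)$ for every $p\in[0.4,1)$: the function $f(p)=h(p)-\tfrac32(1-p)$ satisfies $f(0.4)=0.971\ldots-0.9>0$, is increasing then decreasing on $[0.4,1)$ ($f'(p)=\log\tfrac{1-p}{p}+\tfrac32$ vanishes only near $p=0.739$), and $f(p)\to 0^{+}$ as $p\to 1$, so $f>0$ throughout. Choosing $\epsilon'>0$ small enough that $\epsilon'<\tfrac{3p-1}{2}$ and $\tfrac32(1-p)+2\epsilon'<h(p)$ (both feasible by the above), one obtains $\Pr_C[\text{failure}]\ge\tfrac14\,2^{(h(p)-\frac32(1-p)-2\epsilon')n}\cdot 2^{-h(p)n}>2^{-h(p)n}$ for all sufficiently large $n$, which is the assertion of the lemma (with $p\in[0.4,1)$). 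The principal obstacle is the combinatorial–existential core of the argument — exhibiting $2^{\Theta((1-p)n)}$ length‑$n$ strings that are each $(1-p)n$‑\emph{almost}‑universal (contain all but a $2^{-\Omega(n)}$ fraction of the length‑$(1-p)n$ strings as subsequences) and checking that the resulting exponent $\tfrac32(1-p)$ stays strictly below $h(p)$ on $[0.4,1)$ — coupled with the conceptual point that no bound on the failure probability better than $2^{-\Theta((1-p)n)}$ is available, which is precisely why the naive random‑code construction succeeds for oblivious bit flips but not for oblivious deletions.
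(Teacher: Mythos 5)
Your proof is correct and takes essentially the same route as the paper's: both exhibit an exponentially large family of near-universal codewords built around a long alternating block $0101\cdots$ of length roughly $\tfrac{3}{2}(1-p)n$ (you append the arbitrary bits as a prefix, the paper as a suffix), lower-bound the probability that the random code contains one by $2^{-(\frac32(1-p)+o(1))n}$, and conclude by checking this exceeds $2^{-h(p)n}$. Your direct treatment of all $p\in[0.4,1)$, including the verification that $h(p)>\tfrac32(1-p)$ throughout, is in fact slightly more careful than the paper's reduction to the single case $p=0.4$ (which is delicate for $p>0.6$, where $2^{-h(p)n}$ grows), and you also handle cleanly the requirement that the universal target lie in $C\setminus E(m)$.
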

In plain English, this lemma says that for a fixed message and deletion pattern, the probability that message decodes incorrectly too many times ($>\eps$ fraction of the time) is too large ($>2^{-h(p)n}$). In short, the reason the lemma is true is because the probability the entire code $C$ contains a ``really bad'' word (e.g. a word with at least $0.91n$ runs of 1s) is too large.
The following remark shows us, using Lemma~\ref{lem:app-2-2}, why we cannot follow the same randomized approach as Theorem~\ref{thm:app-2-1}.

\begin{remark}
  Following the approach of Theorem~\ref{thm:app-2-1}, we would like to conclude, using union bound,
  \begin{align}
    \Pr_{C}&\ba{\forall m\forall \tau\,\Pr_{c\in E(m)}\ba{\exists c'\in C\setminus E(m):\tau(c)\sqsubseteq c'} > \eps }  \nonumber\\
    &\le\sum_m\sum_\tau\Pr_{C}\ba{\Pr_{c\in E(m)}\ba{\exists c'\in C\setminus E(m):\tau(c)\sqsubseteq c'} > \eps } \nonumber\\ 
    &\le 2^{\calR n}\binom{n}{pn}\Pr_{C}\ba{\Pr_{c\in E(m)}\ba{\exists c'\in C\setminus E(m):\tau(c)\sqsubseteq c'} > \eps } \nonumber\\ 
    &\red{<2^{\calR n}2^{h(p)n}2^{-h(p)n-\Omega(n)}} = 2^{-\Omega(n)}
  \end{align}
  Unfortunately the last inequality (indicated in red) is false by Lemma \ref{lem:app-2-2}.
\end{remark}
\begin{proof}[Proof of Lemma~\ref{lem:app-2-2}]
  It suffices to prove for $p=0.4$. Indeed, as $p$ increases, 
  \begin{equation}
    \Pr_{c\in E(m)}\ba{\tau(c)\in \bigcup_{c'\in C\setminus E(m)}D_{pn}(c')}
  \end{equation}
  increases as $\tau(c)$ contains fewer symbols.

  If $p=0.4$, then $\tau(c)$ has $0.6n$ characters. Note that if $c$ is uniform on $\{0,1\}^n$, then $\tau(c)$ is uniform on $\{0,1\}^{0.6n}$.
  It is easy to check that for a uniformly chosen $0.6n$ string, the probability it is a subsequence of the length $0.91n$ string $a_{0.91n} = 0101\dots01$ is $1-2^{-\Omega(n)}$: the position in the longest string increases by 1.5 each time, so in expectation the string spans $0.9n$ characters of $a_{0.91n}$, so it is not a subsequence with exponentially small probability by Chernoff bounds.

  There are $2^{0.09n}$ length n strings that start with $a_{0.91n}$. Call this set of strings $A$.
  Thus, the probability that $C$ contains an element of $A$ is at least $2^{-0.91n}$ since that is the probability the first element is an element of $A$.
  Conditioned on $C$ containing an element of $A$, the probability that, for some $c\in E(m)$, $\tau(c)$  is a subsequence of some element of $A$ is at least the probability that $c$ is a subsequence of $a_{0.91n}$, which is $1-2^{-\Omega(n)}$. 
  Thus, conditioned on $C$ containing an element of $A$, we have
  \begin{equation}
    \E\ba{\#\{c:\exists c'\in C\setminus E(m)\suchthat \tau(c)\le c'\} } \ = \ t(1-2^{-\Omega(n)})
  \end{equation}
  Conditioned on $C$ being fixed, consider the indicator random variables $X_i$ for whether each $c_i\in E(m)$ satisfies $\tau(c_i)$ is a subsequence of some $c'\in C\setminus E(m)$. The $X_1,\dots,X_t$ are i.i.d, so the probability $\sum {X_i} / t >\eps$ is approximately 1, (for sure, it is $1-2^{-\Omega(n)}$).

  Thus we conclude
  \begin{equation}
    \Pr_C[
      \#\{c:\exists c'\in C\setminus E(m)\suchthat \tau(c)\le c'\} > \eps t
    ] \ge (1-2^{-\Omega(n)})\Pr_C[C\cap A\nonempty] > 2^{(0.91-o(1))n} > 2^{-H(0.4)n}
  \end{equation}
  as $H(0.4)\approx 0.97$.
\end{proof}

Intuitively \ref{lem:app-2-2} should be true as deletion codes behave poorly when chosen completely randomly. In the adversarial deletion channel, random codes correct only $0.17n$ deletions \cite{KashMTU11}, while the best known constructions correct $0.41n$ deletions \cite{BukhGH17}.

\section{Proofs of Lemma~\ref{lem:geom-1} and Lemma~\ref{lem:geom-2}}
\label{app:3}
	\begin{lemma}
		\label{lem:geom-app}
		For any $j\in[1,K]$, if $Z$ is a random variable distributed as $\min(\Geometric(j/K),\sqrt{R}-1)$, then $\E[Z]>\frac{K}{2j}-1$.
	\end{lemma}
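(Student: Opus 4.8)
The plan is a short self-contained computation: evaluate $\E[Z]$ exactly as a finite geometric series and then compare it with $\frac{K}{2j}-1$. Write $p=j/K\in(0,1]$ and $m=\sqrt R-1$, and let $G\sim\Geometric(p)$ with the convention $\Pr[G=k]=(1-p)^{k-1}p$ for $k\ge 1$, so that $\Pr[G\ge k]=(1-p)^{k-1}$; this is the convention consistent with the use of $\min(\Geometric(\cdot),\cdot)$ in the body, where that quantity counts a number of $B$-moves and is hence at least $1$. Then $Z=\min(G,m)$ is a nonnegative integer-valued random variable, and since $\{Z\ge k\}=\{G\ge k\}$ for $1\le k\le m$ while $Z\le m<k$ for $k>m$, the tail-sum identity $\E[Z]=\sum_{k\ge 1}\Pr[Z\ge k]$ gives
\[
  \E[Z]\;=\;\sum_{k=1}^{m}(1-p)^{k-1}\;=\;\frac{1-(1-p)^{m}}{p}.
\]

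Next I would reduce the claimed bound to a single estimate on $(1-p)^m$. Since $\frac{K}{2j}-1=\frac{1}{2p}-1$, it suffices to show $\frac{1-(1-p)^m}{p}>\frac{1}{2p}-1$; multiplying through by $p>0$ this is equivalent to $(1-p)^m<\frac12+p$, and in particular it is enough to prove
\[
  (1-p)^{m}\;<\;\tfrac12 .
\]
(In fact this stronger inequality gives $\E[Z]>\frac{1}{2p}=\frac{K}{2j}$, so the $-1$ in the statement is slack; the same reduction runs verbatim under the alternative $\{0,1,2,\dots\}$-supported convention, where it lands on $(1-p)^{m+1}<\tfrac12$ instead.)

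Finally I would verify $(1-p)^m<\tfrac12$ from the fixed parameters. Since $j\in[1,K]$ we have $p\ge 1/K$, and since $R=4K^4$ we have $m=\sqrt R-1=2K^2-1$, so $pm\ge (2K^2-1)/K=2K-1/K\ge 1$ for $K\ge 1$, and therefore
\[
  (1-p)^{m}\;\le\;e^{-pm}\;\le\;e^{-1}\;<\;\tfrac12 .
\]
This completes the argument. I do not foresee a genuine obstacle: the only mildly delicate points are fixing the $\Geometric$ convention (and observing that the reduction is insensitive to it) and keeping the tail-sum indexing straight; the rest is an elementary geometric-series manipulation together with a one-line exponential estimate. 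This lemma then feeds into the proofs of Lemma~\ref{lem:geom-1} and Lemma~\ref{lem:geom-2} via $\min(\Geometric(j/K),\sqrt R)\ge Z$, by averaging the bound $\E[Z\mid J=j]>\frac{K}{2j}-1$ over the appropriate distribution of $j$ and recognizing the resulting sum as $\Theta(\log K)$.
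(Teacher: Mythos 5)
Your proof is correct, and it follows the same basic plan as the paper's (evaluate the expectation of the truncated geometric in closed form, then bound the error term using $R=4K^4$), but your decomposition via the tail-sum identity $\E[Z]=\sum_{k\ge1}\Pr[Z\ge k]$ is genuinely cleaner. The paper instead expands $\E[D]=\sum_{k=1}^{\sqrt R-1}k\cdot\frac jK(1-\frac jK)^{k-1}$ (which, strictly speaking, drops the extra probability mass that the truncation places at the value $\sqrt R-1$, so that displayed ``$=$'' is really a ``$\ge$''; harmless for a lower bound, but your tail-sum computation gets the exact value and avoids the issue). The closed form the paper obtains carries the correction term $-(\sqrt R-1)(1-\frac jK)^{\sqrt R-1}$, forcing it to verify $2K^2(\frac{K-1}{K})^{K^2}<1$ for $K>8$; your version has no such term, and the whole lemma reduces to $(1-p)^{\sqrt R-1}<\frac12$, which your one-line estimate $(1-p)^m\le e^{-pm}\le e^{-1}$ dispatches using only $pm\ge1$. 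So your route buys a shorter and slightly sharper argument (you even get $\E[Z]>\frac{K}{2j}$), at no cost in generality; your closing remarks about the Geometric convention and the downstream use in Lemmas~\ref{lem:geom-1} and~\ref{lem:geom-2} are also consistent with how the paper uses the bound.
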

  \begin{proof}
		We have
		\begin{align}
			\E[D]
				\ &= \ \left( \frac{j}{K} \right)\cdot 1 + \left( \frac{j}{K} \right)\left( 1-\frac{j}{K} \right)\cdot 2 + \cdots + \left( \frac{j}{K} \right)\left( 1-\frac{j}{K} \right)^{\sqrt{R}-2}\cdot(\sqrt{R}-1) \nonumber\\
				\ &= \ \frac{1-\left( 1-\frac{j}{K} \right)^{\sqrt{R}-1}}{j/K} - (\sqrt{R}-1)\left( 1-\frac{j}{K} \right)^{\sqrt R - 1} \nonumber \\ 
				\ &\ge \ \frac{1-\left( 1-\frac{1}{K} \right)^{\sqrt{R}-1}}{j/K} - (\sqrt{R}-1)\left( 1-\frac{1}{K} \right)^{\sqrt R - 1} 
				\ > \ \frac{K}{2j} - 1
		\end{align}
		The last inequality follows from recalling $R=4K^4$ and twice applying
		\begin{equation}
			\left( \sqrt R - 1 \right)\left( \frac{K-1}{K} \right)^{\sqrt R - 1}
			\ < \ 2K^2\cdot\left( \frac{K-1}{K} \right)^{K^2}
			\ < \ 2K^2\cdot\left( \half \right)^K < 1,
		\end{equation}
		which is true since $K>8$.
  \end{proof}

  \begin{lemma*}[Lemma~\ref{lem:geom-1}]
    Let $J$ be chosen uniformly from $[K]$.
    Let $D$ be a random variable that is 1 if $J\in[\lambda-1]$ and, conditioned on a fixed $J\ge \lambda$, is distributed as $\min(\Geometric(J/K),\sqrt R)$.
    Then $\E[D]\ge (\log K)/4$.
  \end{lemma*}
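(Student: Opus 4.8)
The plan is to condition on the value of $J$ and feed the conditional expectations into Lemma~\ref{lem:geom-app} term by term. Write $\E[D] = \frac1K\sum_{j=1}^{K}\E[D\mid J=j]$. For $j\le\lambda-1$ we have $\E[D\mid J=j]=1\ge 0$, so those terms may be dropped. For $j\ge\lambda$, the variable $D$ conditioned on $J=j$ is distributed as $\min(\Geometric(j/K),\sqrt R)$, which pointwise dominates $\min(\Geometric(j/K),\sqrt R-1)$; hence by Lemma~\ref{lem:geom-app}, $\E[D\mid J=j] > \frac{K}{2j}-1$.

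Summing these bounds gives $\E[D] > \frac1K\sum_{j=\lambda}^{K}\bigl(\frac{K}{2j}-1\bigr) = \frac12\sum_{j=\lambda}^{K}\frac1j - \frac{K-\lambda+1}{K} \ge \frac12\ln\frac{K}{\lambda} - 1$, where the last step uses the standard estimate $\sum_{j=\lambda}^{K}\frac1j \ge \int_{\lambda}^{K+1}\frac{dx}{x} = \ln\frac{K+1}{\lambda} > \ln\frac{K}{\lambda}$ together with $\frac{K-\lambda+1}{K}\le 1$.

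It then remains to verify $\frac12\ln\frac{K}{\lambda} - 1 \ge \frac14\log K$, where $\log$ is the base-$2$ logarithm. Rewriting the left side as $\frac{\ln 2}{2}\log K - \frac12\ln\lambda - 1$, this reduces to $\bigl(\frac{\ln 2}{2}-\frac14\bigr)\log K \ge \frac12\ln\lambda + 1$. Here I would invoke the parameter choices of \S\ref{sec:5-2}: since $0<\delta<1$, $\log K = \lceil 2^{\lambda+5}/\delta\rceil \ge 2^{\lambda+5}$, while $\frac12\ln\lambda + 1 \le \lambda+1 \le 2^{\lambda}$ for all $\lambda\ge 1$; using $\ln 2 > \tfrac23$ so that $\frac{\ln 2}{2}-\frac14 > \frac1{12}$, we get $\bigl(\frac{\ln 2}{2}-\frac14\bigr)\log K > \frac{2^{\lambda+5}}{12} > 2^\lambda \ge \frac12\ln\lambda+1$, and the claim follows.

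The argument is essentially routine; the only point requiring a little care is the last step, where one must remember that $\log$ is base-$2$ (so $\ln K = (\ln 2)\log K < \log K$), which is why the comparison is not free and the largeness of $K$ relative to $\lambda$ built into the construction ($K = 2^{\lceil 2^{\lambda+5}/\delta\rceil}$) is genuinely used. Everything else is a harmonic-sum estimate piped through Lemma~\ref{lem:geom-app}.
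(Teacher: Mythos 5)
Your proof is correct and follows essentially the same route as the paper's: condition on $J$, apply Lemma~\ref{lem:geom-app} to each $j\ge\lambda$, and estimate the resulting harmonic sum. The only difference is that you spell out the final numerical comparison $\frac12\ln\frac{K}{\lambda}-1\ge\frac14\log K$ via the parameter choice $K=2^{\lceil 2^{\lambda+5}/\delta\rceil}$ (and note the $\sqrt R$ versus $\sqrt R-1$ truncation via domination), both of which the paper leaves implicit.
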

  \begin{proof}
    Applying Lemma~\ref{lem:geom-app},
		\begin{align}
			\E[D]
      \ &= \ \frac{1}{K}\sum_{j=1}^{K} \E[D|J=j] 
			\ > \ \sum_{j=1}^{\lambda-1}\frac{1}{K}\cdot 1 + \sum_{j=\lambda}^{K}\frac{1}{K}\left( \frac{K}{2j} - 1 \right) \nonumber\\
			\ &\ge \ -1 + \sum_{j=\lambda}^{K}\left( \frac{1}{2j} \right) 
			\ > \ \half(\ln K - \ln\lambda - 1) - 1  
			\ > \ \fourth \log K.
		\end{align} 
  \end{proof}
  \begin{lemma*}[Lemma~\ref{lem:geom-2}]
    Let $\lambda'\in[\lambda,K]$ and let $J$ be chosen uniformly from $\{\lambda,\lambda+1,\dots,\lambda'\}.$
    Let $D$ be the random variable that, conditioned on a fixed $J$, is distributed as $\min(\Geometric(J/K),\sqrt R)$.
    Then $\E[D]\ge (\log K)/4$.
  \end{lemma*}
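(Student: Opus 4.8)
The plan is to imitate the proof of Lemma~\ref{lem:geom-1}, the key point being a reduction to the extreme case $\lambda'=K$. First I would condition on the value of $J$. For fixed $j\in\{\lambda,\dots,\lambda'\}$, since $\min(\Geometric(j/K),\sqrt R)\ge\min(\Geometric(j/K),\sqrt R-1)$ pointwise, Lemma~\ref{lem:geom-app} yields $\E[D\mid J=j] > \frac{K}{2j}-1$. Averaging over the uniform choice of $J$ on $\{\lambda,\dots,\lambda'\}$ then gives
\begin{equation}
  \E[D] \ = \ \frac{1}{\lambda'-\lambda+1}\sum_{j=\lambda}^{\lambda'}\E[D\mid J=j] \ > \ \frac{1}{\lambda'-\lambda+1}\sum_{j=\lambda}^{\lambda'}\pa{\frac{K}{2j}-1} \ = \ \frac{K}{2(\lambda'-\lambda+1)}\sum_{j=\lambda}^{\lambda'}\frac1j \ - \ 1.
\end{equation}

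Next I would argue that the quantity $\frac{1}{\lambda'-\lambda+1}\sum_{j=\lambda}^{\lambda'}\frac1j$ is non-increasing in $\lambda'$: it is the average of the numbers $\frac1\lambda\ge\frac1{\lambda+1}\ge\cdots\ge\frac1{\lambda'}$, and passing from $\lambda'$ to $\lambda'+1$ appends the extra term $\frac1{\lambda'+1}$, which is strictly below the current smallest term $\frac1{\lambda'}$ and hence strictly below the current average, so appending it can only decrease the average. Consequently, for every admissible $\lambda'\in[\lambda,K]$,
\begin{equation}
  \frac{1}{\lambda'-\lambda+1}\sum_{j=\lambda}^{\lambda'}\frac1j \ \ge \ \frac{1}{K-\lambda+1}\sum_{j=\lambda}^{K}\frac1j \ \ge \ \frac1K\sum_{j=\lambda}^{K}\frac1j ,
\end{equation}
and therefore $\E[D] > \half\sum_{j=\lambda}^{K}\frac1j - 1 = \sum_{j=\lambda}^{K}\frac{1}{2j} - 1$.

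From here the computation is exactly the one at the end of the proof of Lemma~\ref{lem:geom-1}: bounding the harmonic sum below by $\sum_{j=\lambda}^{K}\frac1j\ge\ln\frac{K+1}{\lambda}\ge\ln K-\ln\lambda$ gives $\E[D] > \half(\ln K-\ln\lambda-1)-1 > \fourth\log K$, where the last inequality uses that $K$ is enormous (recall $\log K\ge 2^{\lambda+5}$, so $\ln K$ dwarfs $\ln\lambda$). The only step requiring any thought is the reduction to $\lambda'=K$: a direct attack on the average $\frac{1}{\lambda'-\lambda+1}\sum\frac1j$ via crude estimates such as $\ln(1+y)\ge y/(1+y)$ loses too much when $\lambda'$ is close to $K$, whereas the ``appending below-mean terms lowers the mean'' observation cleanly pins down $\lambda'=K$ as the worst case, and after that we are literally back in the situation of Lemma~\ref{lem:geom-1}.
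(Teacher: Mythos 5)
Your proof is correct and follows essentially the same route as the paper's: condition on $J$, apply Lemma~\ref{lem:geom-app}, reduce to the worst case $\lambda'=K$ (which the paper asserts as the inequality $\frac{1}{\lambda'-\lambda+1}\sum_{j=\lambda}^{\lambda'}(\frac{K}{2j}-1) > \frac{1}{K-\lambda+1}\sum_{j=\lambda}^{K}(\frac{K}{2j}-1)$ without spelling out the averaging justification you give), and then finish with the harmonic-sum estimate from Lemma~\ref{lem:geom-1}. Your explicit handling of the $\sqrt R$ versus $\sqrt R - 1$ truncation via pointwise domination is a small point of extra care over the paper's write-up, but not a different argument.
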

  \begin{proof}
    Applying Lemma~\ref{lem:geom-app},
		\begin{align}
			\E[D]
      \ &= \ \frac{1}{\lambda'-\lambda+1}\sum_{j=\lambda}^{\lambda'} \E[D|J=j] 
			\ > \ \frac{1}{\lambda'-\lambda+1}\sum_{j=\lambda}^{\lambda'}\left( \frac{K}{2j}-1 \right) \nonumber\\
			\ &> \ \frac{1}{K-\lambda+1}\sum_{j=\lambda}^{K}\left( \frac{K}{2j}-1 \right) 
			\ > \ -1 + \sum_{j=\lambda}^{K}\left( \frac{1}{2j} \right) 
			\ > \ \fourth\log K. \qedhere
			\label{eq:prop-5}
		\end{align}
  \end{proof}

\end{document}